\newcommand{\Log}{\mathsf{L}}
\newcommand{\MPEM}{\mathsf{MPCEM}}
\newcommand{\N}{\mathsf{N}}
\newcommand{\D}{\mathsf{D}}
\newcommand{\R}{\mathsf{R}}
\newcommand{\Four}{\mathsf{4}}
\newcommand{\size}{\mathsf{size}}
\newcommand{\ID}{\mathsf{ID}}
\newcommand{\CEM}{\mathsf{CEM}}
\newcommand{\MP}{\mathsf{MP}}
\newcommand{\Thm}{\mathsf{T}}
\renewcommand{\Form}{\mathcal{F}}
\newcommand{\Land}{\bigwedge}
\newcommand{\Lor}{\bigvee}
\newcommand{\Rules}{\mathsf{R}}
\newcommand{\entails}{\vdash}
\newcommand{\hearts}{\heartsuit}
\newcommand{\Hilb}{\mathsf{H}}
\newcommand{\PL}{\mathsf{PL}}
\newcommand{\otto}{\leftrightarrow}
\newcommand{\Gen}{\mathsf{G}}
\newcommand{\Seq}{\mathsf{S}}
\newcommand{\Cut}{\mathsf{Cut}}
\newcommand{\Xtra}{\mathsf{X}}
\newcommand{\Struct}{\mathsf{A}}
\newcommand{\Pow}{\mathcal{P}}
\newcommand{\modimpl}{\rightarrow}
\newcommand{\lrule}[3]{(#1)\;\;\infrule{#2}{#3}}
\newcommand{\infrule}[2]{\frac{#1}{#2}}
\newcommand{\Krip}{\mathsf{K}}
\newcommand{\Kfour}{\mathsf{K4}}
\newcommand{\CK}{\mathsf{CK}}
\newcommand{\CKMP}{\mathsf{CKMP}}
\newcommand{\CKCEM}{\mathsf{CKCEM}}
\newcommand{\CKMPCEM}{\mathsf{CKMPCEM}}
\newcommand{\CKID}{\mathsf{CKID}}
\newcommand{\K}{\mathsf{K}}
\newcommand{\T}{\mathsf{T}}
\newcommand{\cto}{\Rightarrow}
\newenvironment{defn}{\begin{defi}}{\end{defi}}
\newenvironment{ex}{\begin{exa}}{\end{exa}}
\newenvironment{lemma}{\begin{lem}}{\end{lem}}
\newenvironment{propn}{\begin{prop}}{\end{prop}}
\theoremstyle{plain}\newtheorem{notn}[thm]{Notation}
\newenvironment{axarray}%
	       {\begin{array}{@{\hspace{3em}}p{3em}p{50em}}}{\end{array}}
\newcounter{blubber}
\newenvironment{sparenumerate}
{\begin{list}{$\bullet$}{
    \setlength{\leftmargin}{0pt}
    \setlength{\parsep}{0pt}
    \setlength{\itemindent}{4ex}
    \setlength{\itemsep}{0pt}
  }
}{\end{list}}
\def\doi{7 (1:4) 2011}
\begin{document}

\title[Generic Modal Cut Elimination]{Generic Modal Cut Elimination Applied to
Conditional Logics}

\author[D.~Pattinson]{Dirk Pattinson\rsuper a}	
\address{{\lsuper a}Department of Computing, Imperial College London}	
\thanks{{\lsuper a}Partially supported by EPSRC grant EP/F031173/1}	

\author[l.~Schr{\"o}der]{Lutz Schr{\"o}der\rsuper b}	
\address{{\lsuper b}DFKI Bremen and Department of Computer Science,  Universit\"at  Bremen}
\thanks{{\lsuper b}Work performed as part of the DFG project SCHR 1118/5-1}	

\keywords{Modal Logic, Proof Theory, Cut Elimination, Conditional
Logic}
\subjclass{F.4.1, I.2.3}

\begin{abstract}
We develop a general criterion for cut elimination in sequent
calculi for propositional modal logics, which rests on absorption of
cut, contraction, weakening and inversion by the purely
modal part of the rule system. Our criterion applies also
to a wide variety of logics outside the realm of normal modal
logic. We give extensive example instantiations of our framework to
various conditional logics. For these, we obtain fully internalised
calculi
which are substantially simpler than those known in the literature,
along with
leaner proofs of cut elimination and complexity.
In one case, conditional logic with modus ponens and conditional
excluded middle,
cut elimination and complexity were explicitly stated as open in the
literature.
\end{abstract}

\maketitle

\section{Introduction}

Cut elimination, originally conceived by Gentzen~\cite{Gentzen34}, is
one of the core concepts of proof theory and plays a major role in
particular for algorithmic aspects of logic, including 
the complexity of automated reasoning and, via
interpolation, modularity issues. The large number of logical calculi
that are currently in use, in particular in various areas of computer
science, motivates efforts to define families of sequent calculi that
cover a variety of logics and admit uniform proofs of cut elimination,
enabled by suitable sufficient conditions. Here, we present such a
method for modal sequent calculi that applies to possibly non-normal
normal modal logics, which appear, e.g.\ in concurrency and knowledge
representation. We use a separation of the modal calculi into a fixed
underlying propositional part and a modal part. The core of our
criterion, that we call the absorption of cut, stipulates that an
application of the cut rule to conclusions of modal rules can be
replaced by a single rule application.
This concept generalises the notion of resolution closed rule
set~\cite{PattinsonSchroder10,SchroderPattinson09},
dropping the assumption that the logic at hand is rank-1, i.e.\
axiomatised by formulas in which the nesting depth of modal operators
is uniformly equal to 1 (such as $K$).

Our method is reasonably simple and intuitive, and nevertheless
applies to a wide range of modal logics. While we use normal modal
logics such as $K$ and $T$ as running examples to illustrate our
concepts at the time of introduction, our main applications are
conditional logics, which have a binary modal operator $\cto$ read as
a non-monotonic implication (unlike default logics, conditional logics
allow nested non-monotonic implications). In particular, we prove
cut-elimination (hence, since the generic systems under consideration
are analytic, the subformula property) for the conditional logic $\CK$
and all its extensions by any of the axioms conditional modus ponens
$(\MP)$ $(A\cto B)\to A\to B$, where $\to$ denotes material
implication, conditional excluded middle $(\CEM)$ $(A\cto
B)\lor(A\cto\neg B)$, and conditional identity $(\ID)$ $A\cto A$ using
our generic procedure. An easy analysis of proof search in the arising
cut-free calculi moreover establishes that the satisfiability problem
of each of these logics is in PSPACE. This is a tight bound for logics
not containing $\CEM$, whereas the provability problem in $\CK\CEM$
and $\CK\ID\CEM$ can be solved in coNP, as we show by a slightly
adapted algorithmic treatment of our calculus using a dynamic
programming approach in the spirit of~\cite{Vardi89}. We point out
that while (different) cut-free labelled sequent calculi for $\CK$,
$\CKMP$, $\CKCEM$, and some further conditional logics, as well as the
ensuing upper complexity bounds, have previously been presented by
Olivetti et al., the corresponding issues for $\CKMPCEM$ have
explicitly been left as open problems~\cite{OlivettiEA07}; moreover,
our coNP upper bounds for $\CK\CEM$ and $\CK\ID\CEM$ improve previous
upper PSPACE bounds.

\paragraph{Related work} A set of sufficient conditions for a sequent
calculus to admit cut elimination and a subsequent analysis of the
complexity \emph{of cut elimination} (not of proof search) is
presented in~\cite{Rasga07}. The range of application of this method
is very wide and encompasses, e.g.\ first-order logic, the modal logic
$S4$, linear logic, and intuitionistic propositional logic. This
generality is reflected in the fact that the method as a whole is
substantially more involved than ours. A simpler method for a
different and comparatively restrictive class of calculi, so-called
canonical calculi, is considered in~\cite{AvronLev01}; this method
does not apply to typical modal systems, as it considers only
so-called \emph{canonical} rules, i.e., left and right introduction
rules for connectives which permit adding a common context
simultaneously in the premise and the conclusion. (In fact, it might
be regarded as the essence of modal logic that its rules fail to be
canonical, e.g.\ the necessitation rule $A/\Box A$ does not generalise
to $\Gamma,A/\Gamma,\Box A$ for a sequent $\Gamma$.) Moreover, the
format of the rules in \emph{op.cit.} does not allow for the
introduction of more than one occurrence of a logical connective,
which is necessary even for the most basic modal logics. The same
applies to \cite{CiabattoniTerui06}.  In \cite{CiabattoniEA08},
logical rules are treated on an individual basis, which precludes the
treatment of cuts between two rule conclusions.  Overall, our notion
of absorption is substantially more general when compared to similar
notions in the papers discussed above, which stipulate that cuts
between left and right rules for the same connective are absorbed by
structural rules.  In our own earlier
work~\cite{PattinsonSchroder10}, we have
considered a special case of the method presented here in the
restricted context of \emph{rank-1} logics; in particular, these
results did not cover logics such as $K4$, $\CKMP$, or $\CKMPCEM$.

This work is an extended and revised version
of~\cite{PattinsonSchroder09}.

\section{Preliminaries and Notation}

A \emph{modal similarity type} (or modal signature) is a set $\Lambda$
of modal operators with associated arities that we keep fixed
throughout the paper. Given a set $V$ of propositional variables, the
set $\Form(\Lambda)$ of $\Lambda$-formulas is given by the grammar
\[ \Form(\Lambda) \ni A, B ::= \bot \mid p \mid \neg A \mid  A \land
B \mid \hearts(A_1, \dots, A_n) \]
where $p \in V$ and $\hearts \in \Lambda$ is $n$-ary. We use
standard abbreviations of the other propositional connectives
$\top$, $\lor$ and $\to$.
%
%
A \emph{$\Lambda$-sequent} is a finite multiset of
$\Lambda$-formulas, and the set of $\Lambda$-sequents is denoted by
$\Seq(\Lambda)$. We write the multiset union of $\Gamma$ and
$\Delta$ as $\Gamma, \Delta$ and  identify a formula $A \in
\Form(\Lambda)$ with the singleton sequent containing only $A$.
If $S \subseteq \Form(\Lambda)$ is a set of formulas, then an
\emph{$S$-substitution} is a  mapping $\sigma: V \to S$. We denote
the result of uniformly substituting $\sigma(p)$ for $p$ in a
formula $A$ by $A \sigma$. This extends pointwise to $\Lambda$-sequents
so that $\Gamma \sigma = A_1 \sigma, \dots, A_n \sigma$ if
$\Gamma = A_1, \dots, A_n$.
If $S \subseteq \Form(\Lambda)$ is a set of $\Lambda$-formulas
and $A \in \Form(\Lambda)$, we say that $A$ is a \emph{propositional
consequence} of $S$ if there exist $A_1, \dots, A_n \in S$
such that $A_1 \land \dots \land A_n \to A$ is a substitution
instance of a propositional tautology.
We write 
$S \entails_\PL A$ if $A$ is a propositional consequence of
$S$ and $A \entails_\PL B$ for $\lbrace A \rbrace \entails_\PL
B$ for the case of single formulas.


%
%
\section{Modal Deduction Systems}
To facilitate the task of comparing the notion of provability in
both Hilbert and Gentzen type proof systems, we introduce the
following notion of a proof rule that can be used, without any
modifications, in both systems. 
\begin{defn}
A
\emph{$\Lambda$-rule} is of the form $\frac{\Gamma_1 \dots
\Gamma_n}{\Gamma_0}$ where $n \geq 0$ and $\Gamma_0, \dots,
\Gamma_n$ are $\Lambda$-sequents. The sequents $\Gamma_1, \dots,
\Gamma_n$ are the
\emph{premises} of the rule and $\Gamma_0$ its \emph{conclusion}. A rule
$\frac{}{\Gamma_0}$
without premises is called a \emph{$\Lambda$-axiom}, which we denote
by just its conclusion, $\Gamma_0$. A \emph{rule set} is just a set
of $\Lambda$-rules, and we say that a rule set $\Rules$ is \emph{substitution
closed}, if $\Gamma_1 \sigma \dots \Gamma_n \sigma / \Gamma_0 \sigma
\in \Rules$ whenever $\Gamma_1 \dots \Gamma_n / \Gamma_0 \in
\Rules$ and $\sigma: V \to \Form(\Lambda)$ is a substitution.
\end{defn}
\noindent
In view of the sequent calculi that we introduce later, we read
sequents disjunctively. Consequently, a rule $\Gamma_1 \dots, \Gamma_n /
\Gamma_0$ can be used to prove the disjunction of $\Gamma_0$, provided
that $\Lor \Gamma_i$ is provable, for all $1 \leq i \leq n$. 
We emphasise that a rule is an expression of the object language,
i.e. it does not contain meta-linguistic variables. As such, it
represents a specific deduction step rather than a family of
possible deductions, which helps to economise on syntactic
categories.
In our examples, concrete rule sets are presented as instances of rule
schemas.

\begin{ex} \label{example:logics-rules} For the modal logics $K$, $K4$
  and $T$, we fix the modal signature $\Lambda = \lbrace \Box \rbrace$
  consisting of a single modal operator $\Box$ with arity one. The
  language of conditional logic is given by the similarity type
  $\Lambda = \lbrace \cto \rbrace$ where the conditional arrow $\cto$
  has arity $2$. We use infix notation and write $A \cto B$ instead of
  $\cto (A, B)$ for $A, B \in \Form(\Lambda)$. Formulas $A\cto B$ are
  interpreted as various forms of conditionals, e.g.\ default
  implication `if $A$ then normally $B$', relevant implication and
  others, depending on the choice of semantics and imposed logical
  principles. Deduction over modal and conditional logics are governed
  by the following rule sets:
\begin{enumerate}[(1)]
\item 
\begin{figure}
\fbox{\begin{minipage}{15cm}

\[ (\N) \frac{A}{\Box A} \qquad (\D){\Box(A \to B) \to (\Box A \to
\Box B)} \qquad (\Four) \Box A \to \Box\Box A \qquad (\R) \Box A \to
A \]

\end{minipage}}
\caption{Axioms and Rules of modal Hilbert Systems}
\label{fig:modal-axioms}
\end{figure}

The rule set $\K$ associated to the modal logic $K$ consists of all
instances of the necessitation rule $(\N)$ and the distribution
axiom $(\D)$ in Figure \ref{fig:modal-axioms}.
The rule sets that axiomatise the logics $T$ and $K4$ arise
by extending this set with
the reflexivity axiom $(\R)$ and the $(\Four)$-axiom, respectively.
We reserve the name $(\T)$ for the reflexivity rule in a cut-free
system.
\item 
\begin{figure}
\setlength{\fboxsep}{2ex}
\fbox{\begin{minipage}{15cm}
\[ (\mathsf{RCEA})\frac{A \otto A'}{(A \cto B) \otto (A' \cto B)} \qquad
  (\mathsf{RCK}) \frac{B_1 \land \dots \land B_n \to B}{(A \cto B_1) \land \dots
    \land (A \cto B_n) \to (A \cto B)}
\]\\[-1ex]
\[
   (\ID) A \cto A \qquad
   (\MP) (A \cto B) \to (A \to B) \qquad
   (\CEM) (A \cto B) \lor (A \cto \neg B)
\]
\end{minipage}}
\caption{Axioms and Rules of conditional Hilbert Systems}
\label{fig:conditional-axioms}
\end{figure}
The basic conditional logic is the system $\CK$ of
  \cite{Chellas80}, axiomatised by the rule set that consists of all
  instances of $(\mathsf{RCEA})$ and $(\mathsf{RCK})$ in Figure
\ref{fig:conditional-axioms}.
  The system $\CK$ constitutes a minimal set of properties to be
  reasonably expected of any conditional, however nonstandard:
  replacement of equivalents in the left hand argument, and
  compatibility with conjunction in the right-hand
  argument. Additional properties are typically imposed when more
  specific interpretations of $\cto$ are intended. E.g.\ the basic
  properties of $\cto$ viewed as a default implication are given by
  Burgess' System $\mathcal{S}$~\cite{Burgess81}, which is related to
  the well-known KLM postulates of default
  reasoning~\cite{KrausLehmannMagidor90}. A treatment of
  System~$\mathcal{S}$ using methods of the present work
  and~\cite{PattinsonSchroder10} is presented in
  \cite{SchroderEA10}. Here, we consider several other standard
  axioms, namely identity $(\ID)$, conditional modus ponens $(\MP)$
and conditional excluded middle $(\CEM)$, also given in Figure
\ref{fig:conditional-axioms}.
  We denote the corresponding extensions of $\CK$ by juxtaposition of
  the respective axioms, e.g. $\CK\MP\CEM$ contains the rules for
  $\CK$ and the axioms $(\MP)$ and $(\CEM)$. As indicated above,
  whether or not these axioms are accepted depends on the intended
  reading of the conditional. E.g., modus ponens is a reasonable
  principle for interpretations of the conditional as a relevant
  implication or as a counterfactual, but not for default implication;
  conditional excluded middle is a controversially discussed property
  of the subjunctive conditional \cite{Cross09}. The
  identity axiom, while accepted for many interpretations of the
  conditional including as default implication, is typically rejected
  for causal interpretations \cite{GiordanoSchwind04}.
\end{enumerate}
Rules with more than one premise arise through saturation of a given
rule set under cut that, e.g. leads to the rules $(\CK_g)$ and
$(\MP_g)$ presented in Section \ref{sec:CL}.
\end{ex}
\noindent
In order to make the mapping between Hilbert-style and Gentzen-style
systems easier, we take the derivability predicate of a
Hilbert-system to be induced by a set of $\Lambda$-rules and read
each sequent as the disjunction of its elements.
The notion of deduction in modal Hilbert systems then takes the
following form.
\begin{defn} \label{defn:hilb-deriv}
Suppose $\Rules$ is a set of rules. The set of $\Rules$-derivable
formulas in the Hilbert-system given by $\Rules$ is the least set of
formulas that
\begin{enumerate}[$\bullet$]
\item contains $A \sigma$ whenever $A$ is a propositional tautology
and $\sigma$ is a substitution
\item contains $B$ whenever it contains $A$ and $A \to B$
\item contains $\Lor \Gamma_0$
whenever it contains $\Lor \Gamma_1, \dots,
\Lor \Gamma_n$ and $\frac{\Gamma_1 \dots \Gamma_n}{\Gamma_0} \in \Rules$.
\end{enumerate}
 We write $\Hilb\Rules \entails A$ if $A$ is
$\Rules$-derivable.
\end{defn}
In other words, the set of derivable formulas is the least set that
contains propositional tautologies, is closed under uniform
substitution, modus ponens and application of rules.
We will later consider Hilbert systems that induce the
same provability predicate based on the following notion of
admissibility.
\begin{defn}\label{defn:admissible}
A rule set $\Rules'$ is \emph{admissible} in $\Hilb\Rules$ if
$\Hilb\Rules \entails A \iff \Hilb(\Rules \cup \Rules')
\entails A$
for all formulas $A \in \Form(\Lambda)$.
Two rule sets $\Rules, \Rules'$ are \emph{equivalent} if
$\Rules$ is admissible in $\Hilb\Rules'$ and $\Rules'$ is admissible
in $\Hilb\Rules$.
\end{defn}
\noindent
In words, $\Rules'$ is admissible in $\Hilb\Rules$ if adding the
rules $\Rules'$ to those of $\Rules$ leaves the set of provable
formulas unchanged.
We note the following trivial, but useful consequence of
admissibility.

\begin{lemma} Let $\Rules$ and $\Rules'$ be equivalent, and let $A \in
  \Form(\Lambda)$. Then $\Hilb\Rules \entails A$ iff $\Hilb\Rules'
  \entails A$.
\end{lemma}
\noindent
The next proposition establishes a rudimentary form of proof
normalisation in 
Hilbert systems and is the key for proving equivalence of
Hilbert and Gentzen-type systems. We show that every derivable
formula in a Hilbert-sytem is a propositional consequence of
conclusions of rules with provable premises, which stratifies proofs
into rule application and propositional reasoning and avoids modus
ponens.
\begin{propn}\label{propn:proof-structure}
Suppose that $S$ is the least set of formulas that is closed under
propositional consequences of rule conclusions,
that is, $S$ contains $A \in
\Form(\Lambda)$ whenever
  there are rules $\Theta_1 / \Gamma_1, \dots, \Theta_n /
  \Gamma_n \in \Rules$ and substitutions $\sigma_1, \dots, \sigma_n:
V
  \to \Form(\Lambda)$ such that $\Lor \Delta \sigma_i \in S$ for all
  $\Delta \in \Theta_i$ ($i = 1, \dots, n$), and
$\lbrace \Lor \Gamma_1 \sigma, \dots, \Lor \Gamma_n \sigma\rbrace
  \entails_\PL A$.
	
Then $S$ coincides with the set of derivable formulas in the
Hilbert-calculus induced by $\Rules$, that is $S = \lbrace A \in
\Form(\Lambda) \mid \Hilb\Rules \entails A
\rbrace$.
\end{propn}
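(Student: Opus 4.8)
The plan is to prove the two inclusions $S \subseteq H$ and $H \subseteq S$ separately, where I abbreviate $H = \lbrace A \in \Form(\Lambda) \mid \Hilb\Rules \entails A \rbrace$. Both $S$ and $H$ are defined as least sets closed under certain operations, so each inclusion amounts to checking that the larger set is already closed under the generating operations of the smaller one, whence the inclusion follows by minimality.

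For $H \subseteq S$, I would verify that $S$ satisfies the three closure conditions of Definition \ref{defn:hilb-deriv}. Substitution instances of tautologies lie in $S$ by taking $n = 0$ in the defining clause, since the side condition then reduces to $\emptyset \entails_\PL A$. Closure under rule application is immediate: given a single rule $\Gamma_1 \dots \Gamma_n / \Gamma_0 \in \Rules$ with $\Lor \Gamma_i \in S$, one instantiates the clause with that one rule, the identity substitution, and the trivial entailment $\lbrace \Lor \Gamma_0 \rbrace \entails_\PL \Lor \Gamma_0$. The only step with genuine content is closure under modus ponens: if $A, A \to B \in S$, I would merge the families of rules and substitutions witnessing membership of $A$ and of $A \to B$ into a single family. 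The premises remain in $S$, and since the conclusions of the first family propositionally entail $A$ while those of the second entail $A \to B$, their union propositionally entails $B$; hence $B \in S$.

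For $S \subseteq H$, I would dually show that $H$ is closed under the single generating operation defining $S$. Here I would first isolate the auxiliary fact that $H$ is closed under propositional consequence, i.e. $B_1, \dots, B_m \in H$ together with $\lbrace B_1, \dots, B_m \rbrace \entails_\PL A$ imply $A \in H$; this holds because the implication $B_1 \land \dots \land B_m \to A$ and the conjunction-introduction formulas are substitution instances of tautologies, so repeated modus ponens inside $H$ delivers $A$. Now take $A \in S$ witnessed by rules $\Theta_i / \Gamma_i \in \Rules$ and substitutions $\sigma_i$ with $\Lor \Delta \sigma_i \in H$ for every $\Delta \in \Theta_i$ and $\lbrace \Lor \Gamma_i \sigma_i \rbrace \entails_\PL A$. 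Using that $\Rules$ is substitution closed, each instance $\Theta_i \sigma_i / \Gamma_i \sigma_i$ is again in $\Rules$; its premises are exactly the $\Lor \Delta \sigma_i \in H$ (substitution commuting with $\Lor$), so the rule-application clause of the Hilbert calculus yields $\Lor \Gamma_i \sigma_i \in H$ for each $i$. Closure of $H$ under propositional consequence applied to these conclusions and the entailment $\lbrace \Lor \Gamma_i \sigma_i \rbrace \entails_\PL A$ then gives $A \in H$.

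I expect the main subtlety to be organisational rather than mathematical: keeping the per-rule substitutions $\sigma_i$ straight and observing that substitution commutes with forming disjunctions of sequents, so that the substituted premises are genuinely of the form $\Lor(\Delta \sigma_i)$ that the Hilbert rule-application clause consumes. The one ingredient I rely on beyond the definitions is that $\Rules$ is substitution closed, which is exactly what turns the schematic rule application implicit in the definition of $S$ into a concrete rule application admissible in $\Hilb\Rules$; without it the inclusion $S \subseteq H$ would break down.
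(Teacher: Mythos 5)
Your proof is correct and follows essentially the same route as the paper's: the paper likewise treats the inclusion $S \subseteq \{A \mid \Hilb\Rules \entails A\}$ as immediate from the closure properties of the Hilbert-derivable set, and concentrates on showing that $S$ is closed under modus ponens by merging the two witnessing families of rules and substitutions, exactly as you do. Your explicit remark that the direction $S \subseteq H$ needs $\Rules$ to be substitution closed (so that the substituted rule instances $\Theta_i\sigma_i/\Gamma_i\sigma_i$ are genuine members of $\Rules$) is a point the paper glosses over by simply asserting closure under uniform substitution, and is worth making.
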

\begin{proof} 
We write $\Hilb\Thm(\Rules) = \lbrace A \in \Form(\Lambda) \mid \Hilb\Rules 
\entails A \rbrace$ for the set of provable formulas in
$\Hilb\Rules$.
The inclusion $S \subseteq \Hilb\Thm(\Rules)$ is immediate as
$\Hilb\Thm(\Rules)$ contains
propositional tautologies, is closed under uniform substitution and
modus ponens. For the reverse inclusion we 
show that
$S$ is closed under $\Rules$-derivability as considered in
Definition \ref{defn:hilb-deriv}.

This is clear for all cases (propositional tautologies, uniform
substitutions, rule application) except possibly modus
ponens. So assume that $\Hilb\Rules \entails A \to B$ and
$\Hilb\Rules \entails A$. By induction hypothesis, there are 
\begin{enumerate}[$\bullet$]
\item Rules $\Theta_1 / \Gamma_1, \dots, \Theta_n / \Gamma_n$ and
substitutions $\sigma_1, \dots, \sigma_n$ such that 
$\lbrace \Lor \Gamma_1 \sigma_1, \dots,  \Lor \Gamma_n \sigma_n
\rbrace
\entails_\PL A \to B$
\item Rules $\Sigma_1 / \Delta_1, \dots, \Sigma_k / \Delta_k$ and
substitutions $\tau_1, \dots, \tau_k$ such that $\lbrace \Lor \Delta_1
\tau_1, \dots,  \Lor \Delta_k \tau_k \rbrace \entails_\PL A$
\end{enumerate}
and moreover $\Lor \Xi \sigma \in S$ whenever $\Xi \in \Theta_1,
\dots, \Theta_n, \Sigma_1, \dots, \Sigma_k$. The claim follows, as
$\lbrace \Gamma_1 \sigma_1,  \dots,  \Gamma_n \sigma_n,  \Delta_1
\tau_1, \dots, \Delta_k \tau_k \rbrace \entails_\PL B$.
\end{proof}

\noindent
In other words, in a modal Hilbert system, each provable formula is
a propositional consequence of rule conclusions with provable
premises. This result forms the basis of our comparison of Hilbert
and Gentzen systems. The point to note is that in a Hilbert system,
provable formulas are propositional consequences of \emph{zero or
more} rule conclusions with provable premises. The propositional reasoning that
is applied when showing that the set of conclusions implies a
formula generally uses the cut-rule. As a consequence, the need for
cut vanishes if there is no need to apply propositional reasoning to
combine conclusions. This is what our notion of cut-absorption
(introduced later in Definition \ref{defn:cut-absorption})
formalises: we show that cut elimination essentially
amounts to the fact that -- in the corresponding Hilbert system --
each valid formula is a consequence of a \emph{at most one} rule
conclusion with provable premise.


We now set the stage for sequent systems that we are going to
address in the remainder of the paper. As we are dealing with
extensions of classical propositional logic, it suffices to work with
a right-handed calculus. Our calculus is equipped with explicit
negation, and therefore precisely dual to modal tableau calculi
\cite{Gore:1999:TMM} that serve as the usual basis for syntactically
determining the complexity of the satisfiability problem.

The notion of
derivability in the sequent calculus associated with a rule set
$\Rules$ is formulated parametric in terms of a set $\Xtra$ of
additional rules that will later be instantiated with relativised
versions of cut, weakening, contraction and inversion.
\begin{defn}
Suppose $\Rules$ and $\Xtra$ are sets of $\Lambda$-rules.
The set of $\Gen\Rules + \Xtra$-derivable
sequents in the Gentzen-system given by $\Rules$ is the least set of
sequents that
\begin{enumerate}[$\bullet$]
\item contains $A, \neg A, \Gamma$ for all sequents $\Gamma \in
\Seq(\Lambda)$ and  formulas $A \in \Form(\Lambda)$
\item contains $\neg \bot, \Gamma$ for all $\Gamma \in \Seq(\Lambda)$
\item is closed under instances of the rule schemas
\[ (\neg\land)\frac{\Gamma, \neg A, \neg B}{\Gamma, \neg(A \land B)} \qquad
   (\land)\frac{\Gamma, A \quad \Gamma, B}{\Gamma, A \land B} \qquad
	 (\neg) \frac{\Gamma, A}{\Gamma, \neg \neg A} \qquad
\] 
where $A \in \Form(\Lambda)$ ranges over formulas and $\Gamma
\subseteq \Form(\Lambda)$ over multisets of formulas.  We call
the above rules the \emph{propositional rules} and the formula
occurring in the conclusion but not in $\Gamma$ \emph{principal}
in the respective rule.
\item is closed under the rules in $\Rules \cup \Xtra$, i.e. it
contains $\Gamma_0$ whenever it contains $\Gamma_1, \dots, \Gamma_n$
and $\frac{\Gamma_1
\dots \Gamma_n}{\Gamma_0} \in \Rules \cup \Xtra$.
\end{enumerate}
We write $\Gen\Rules  + \Xtra \entails \Gamma$ if $\Gamma$ can be derived
in this way and $\Gen\Rules \entails\Gamma$ if $\Xtra =
\emptyset$. As for Hilbert-style calculi (Definition
\ref{defn:admissible}), we call a rule set $\Rules'$
\emph{admissible} in $\Gen\Rules$ in case $\Gen\Rules \entails \Gamma
\iff \Gen(\Rules \cup \Rules') \entails \Gamma$ for all $\Gamma \in
\Seq(\Lambda)$.
\end{defn}
\noindent
The set $\Xtra$ of extra rules will later be instantiated with
a relativised version of the cut rule and additional axioms that
locally capture the effect of weakening, contraction and inversion, applied
to rule premises. This allows to formulate \emph{local} conditions
for the admissibility of cut that can be checked on a per-rule
basis.

Many other formulations of sequent systems only permit axioms of
the form $\Gamma, p, \neg p$ where $p \in V$ is a propositional
atom. The reason for being more liberal here is that this makes it
easier to prove admissibility of uniform substitution, at the
expense of losing depth-preserving admissibility of structural
rules. We come back to this matter in Remark \ref{rem:congruence}.
The following proposition is readily established by an induction on
the provability predicate $\Hilb\Rules \entails$.
\begin{propn} \label{propn:gen-hilb}
Suppose $\Gamma \in \Seq(\Lambda)$ is a sequent. Then $\Hilb\Rules \entails
\Lor \Gamma$ if $\Gen\Rules \entails \Gamma$.
\end{propn}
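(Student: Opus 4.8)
The plan is to argue by induction on the inductive definition of the Gentzen derivability predicate $\Gen\Rules \entails \Gamma$ (with $\Xtra = \emptyset$): concretely, I would show that the set $T = \lbrace \Gamma \in \Seq(\Lambda) \mid \Hilb\Rules \entails \Lor\Gamma \rbrace$ is closed under all the clauses generating $\Gen\Rules \entails$, so that $\lbrace \Gamma \mid \Gen\Rules \entails \Gamma \rbrace \subseteq T$, which is exactly the claim. Throughout I would exploit that $\Hilb\Thm(\Rules)$ contains all propositional tautologies and is closed under uniform substitution and modus ponens, hence is closed under propositional consequence; in particular the precise bracketing and ordering of the disjunction $\Lor\Gamma$ of a multiset $\Gamma$ is immaterial, since any two readings are propositionally equivalent and thus interderivable in $\Hilb\Rules$.

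First I would dispatch the two axiom schemas, which are the base cases. For $A, \neg A, \Gamma$ the disjunction $A \lor \neg A \lor \Lor\Gamma$ is a propositional tautology and hence lies in $\Hilb\Thm(\Rules)$; likewise $\neg\bot \lor \Lor\Gamma$ is a tautology. Next I would treat the three propositional rules, each by a single appeal to closure under propositional consequence inside the common context $\Lor\Gamma$. For $(\land)$ the induction hypothesis gives $\Hilb\Rules \entails \Lor\Gamma \lor A$ and $\Hilb\Rules \entails \Lor\Gamma \lor B$, and since $(\Lor\Gamma \lor A) \to ((\Lor\Gamma \lor B) \to (\Lor\Gamma \lor (A \land B)))$ is a tautology, two applications of modus ponens yield $\Hilb\Rules \entails \Lor(\Gamma, A \land B)$. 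The rules $(\neg\land)$ and $(\neg)$ are analogous, using the tautologies $(\neg A \lor \neg B) \to \neg(A \land B)$ and $A \to \neg\neg A$ respectively.

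The only genuinely modal case is closure under a rule $\frac{\Gamma_1 \dots \Gamma_n}{\Gamma_0} \in \Rules$ (recall $\Xtra = \emptyset$). Here the induction hypothesis supplies $\Hilb\Rules \entails \Lor\Gamma_i$ for every $i = 1, \dots, n$, and the third clause of Definition \ref{defn:hilb-deriv} then yields $\Hilb\Rules \entails \Lor\Gamma_0$ directly, with no propositional manipulation required. This immediacy is precisely why the induction runs smoothly, and I do not anticipate a substantial obstacle: the single point demanding slight care is the already-noted fact that disjunctions over multisets are well-defined only up to propositional equivalence, which is harmless because $\Hilb\Thm(\Rules)$ is closed under propositional consequence and so cannot distinguish propositionally equivalent sequent readings.
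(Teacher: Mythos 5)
Your proposal is correct and is essentially the proof the paper intends: the paper dispenses with Proposition \ref{propn:gen-hilb} as ``readily established'' by induction on the derivability predicate, and your induction on the generation of $\Gen\Rules \entails \Gamma$ --- tautology instances for the axioms, propositional consequence plus modus ponens for the propositional rules, and clause three of Definition \ref{defn:hilb-deriv} for the modal rules --- fills in exactly the expected details. The remark about $\Lor\Gamma$ being defined only up to propositional equivalence is a sensible precaution and does not change the argument.
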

\noindent
The remainder of the paper is concerned with the converse of the
above proposition, which relies on specific properties of the rule
set $\Rules$.
\section{Generic Modal Cut Elimination}
In order to establish the converse of Proposition \ref{propn:gen-hilb} we
need to establish that the cut rule is admissible in the Gentzen 
system $\Gen\Rules$ defined by the ruleset $\Rules$.  Clearly, we
cannot expect that cut elimination holds in general: it is well
known (and easy to check) that the sequent system arising from the
rule set consisting of all instances of $(\N)$ and $(\D)$, presented
in Example \ref{example:logics-rules} does \emph{not}
enjoy cut elimination. In other words, we have to look for
constructions that  allow us to transform a given rule set into one
for which cut elimination holds.
The main result of our analysis is that cut elimination holds
if the rule set under consideration satisfies four crucial
requirements that are local in the sense that they can be checked on
a per-rule basis without the need of carrying out a fully-fledged
cut-elimination proof: absorption of weakening, contraction,
inversion and cut.

The first three properties can be checked  for each rule
individually and amount to the admissibility of the respective
principle, and the last requirement amounts to the
possibility of eliminating cut between a pair of rule conclusions.
We emphasise that these properties can be checked locally for the
modal rules, and cut elimination will follow automatically.
It is not particularly  surprising that cut elimination holds under these
assumptions. However, isolating the four conditions above provides
us with means to convert a modal Hilbert system into an equivalent
cut-free sequent calculus.
\noindent
We now introduce relativised versions of the structural rules that
will be the main tool in the proof of cut elimination. This can be
seen as permutability of structural rules: every derivation of
$\Gamma$ 
from premises $\Gamma_1, \dots, \Gamma_n$ that ends in weakening,
inversion or contraction is applied can be replaced by a derivation
of $\Gamma$  where weakening, inversion and
contraction is only applied to the premises $\Gamma_1, \dots,
\Gamma_n$.
\begin{defn} \label{defn:structural}
Suppose $\Gamma$ is a $\Lambda$-sequent and let $\Struct(\Gamma)$ consist of
the axioms
\begin{enumerate}[$\bullet$]
\item $\Gamma, A$ for all $A \in \Form(\Lambda)$
\item $\Delta, A$ if $\Gamma = \Delta, A, A$ for some $\Delta \in
\Seq(\Lambda), A \in
\Form(\Lambda)$
\item $\Delta, A$ if $\Gamma = \Delta, \neg \neg A$ for some
$\Delta \in \Seq(\Lambda), A \in
\Form(\Lambda)$
\item $\Delta, \neg A_1, \neg A_2$ if $\Gamma = \Delta, \neg(A_1
\land A_2)$ for some $\Delta
\in \Seq(\Lambda)$, $A_1, A_2 \in \Form(\Lambda)$
\item $\Delta, A_i$ for $i = 1, 2$ if $\Gamma = \Delta, (A_1 \land
A_2)$ for some $\Delta
\in \Seq(\Lambda)$, $A_1, A_2 \in \Form(\Lambda)$
\end{enumerate}
We say that a rule set $\Rules$ \emph{absorbs the structural rules}
if
\[ \Gen\Rules + \Struct(\Gamma_1) \cup \dots \cup \Struct(\Gamma_n)
\entails \Gamma 
\]
for all $\frac{\Gamma_1 \dots \Gamma_n}{\Gamma_0} \in \Rules$ and
all $\Gamma \in
\Struct(\Gamma_0)$.
\end{defn}
\noindent
In other words, a deduction step that applies weakening, contraction
or inversion to a rule conclusion can be replaced by a (possibly
different)
rule where the corresponding structural rules are applied to the
premises. We discuss a number of standard examples before stating
that absorption of the structural rules implies their admissibility.

\begin{ex}  \label{example:absorbing-rule-sets}
The rule sets containing all instances of either of the following
rule schemas $(\K)$,  $(\T)$, and $(\Kfour)$, respectively,
\begin{equation*}
(\K)\;\frac{\neg A_1, \dots, \neg A_n, A_0}{\neg \Box A_1, \dots,
\neg \Box A_n, \Box A_0, \Gamma}\quad
(\T) \;
 \frac{\neg A, \neg \Box A, \Gamma}{\neg \Box A, \Gamma}\quad
(\Kfour)\;
\frac{\neg A_1, \neg \Box A_1, \dots, \neg A_n, \neg \Box A_n,
B}
  {\neg \Box A_1, \dots, \neg \Box A_n, \Box B, \Gamma}
\end{equation*}
absorb the structural rules. We note that $(\K)$ absorbs weakening
due to the presence of $\Gamma$ in the conclusion, and the
absorption of contraction in $(\T)$ and $(\Kfour)$ is a consequence of
the presence of the negated $\Box$-formulas in the premise.
The absorption of inversion in a consequence of the weakening
context $\Gamma$ in $(\K)$ and $(\Kfour)$ and implied by duplicating the
context $\Gamma$ in $(\T)$.
On the other hand, the rule sets defined by
\[ \frac{\neg A_1, \dots, \neg A_n, A_0}{\neg \Box A_1, \dots,
\neg \Box A_n, \Box A_0} \qquad
\qquad
\frac{\neg A, \Gamma}{\neg \Box A, \Gamma}
\] fail to absorb the structural rules: the rule on the left fails
to absorb weakening, whereas the right-hand rule does not absorb
contraction.
\end{ex}

\noindent
It should be intuitively clear that absorption of structural rules
implies their admissibility, which we establish next. 
\begin{propn} \label{propn:struct-admissible}
Suppose $\Rules$ absorbs the structural rules. Then all instances of
the rule schemas of weakening, contraction and inversion
\[ \frac{\Gamma}{\Gamma, A} \qquad \frac{\Gamma, A, A}{\Gamma, A}
   \qquad
	 \frac{\Gamma, \neg \neg A}{\Gamma, A}
	 \qquad
	 \frac{\Gamma, \neg(A_1 \land A_2)}{\Gamma, \neg A_1, \neg A_2}
	 \qquad
	 \frac{\Gamma, A_1 \land A_2}{\Gamma, A_i} (i = 1, 2)
\] 
where $\Gamma \in \Seq(\Lambda)$ and $A, A_1, A_2 \in \Form(\Lambda)$ are
admissible in $\Gen\Rules$.
\end{propn}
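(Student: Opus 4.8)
The plan is to reduce the whole statement to a single closure property and then prove that property by induction on derivations. Observe first that for each of the five structural schemas the conclusion lies in $\Struct(\text{premise})$: the conclusion $\Gamma, A$ of weakening is an axiom of the first kind in $\Struct(\Gamma)$, the conclusion $\Gamma, A$ of contraction is an axiom of the second kind in $\Struct(\Gamma, A, A)$ (take $\Delta = \Gamma$), and likewise for the three inversion schemas. Hence it suffices to prove the uniform claim: whenever $\Gen\Rules \entails \Gamma$ and $\Gamma' \in \Struct(\Gamma)$, then $\Gen\Rules \entails \Gamma'$. I would prove this by induction on the height of a $\Gen\Rules$-derivation of $\Gamma$, distinguishing the shape of its last inference: an axiom, a propositional rule $(\neg\land)$, $(\land)$ or $(\neg)$, or a modal rule from $\Rules$.

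The modal case is where absorption does the work and is the conceptually clean part. Suppose $\Gamma = \Gamma_0$ is the conclusion of $\Gamma_1 \cdots \Gamma_n / \Gamma_0 \in \Rules$, with the $\Gamma_i$ derived by strictly shorter derivations, and let $\Gamma' \in \Struct(\Gamma_0)$. By the assumed absorption of the structural rules, $\Gamma'$ is derivable in $\Gen\Rules$ once the sequents in $\Struct(\Gamma_1) \cup \dots \cup \Struct(\Gamma_n)$ are available as axioms. But every member of each $\Struct(\Gamma_i)$ is a single structural variant of the derivable premise $\Gamma_i$, so by the induction hypothesis applied to $\Gamma_i$ it is already $\Gen\Rules$-derivable. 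Splicing these derivations in for the corresponding axiom leaves of the absorption derivation turns it into a genuine $\Gen\Rules$-derivation of $\Gamma'$. This step consumes the induction hypothesis for all three structural principles at once, which is exactly why I would prove weakening, contraction and inversion by one simultaneous induction rather than separately.

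For the remaining cases the structural operation interacts with a propositional inference. If it affects only the context that the rule carries along, I would commute it below the rule: apply the induction hypothesis to the shorter premises to obtain their structural variants, and re-apply the propositional rule. If it is an inversion acting on the principal formula of the matching rule (inverting $A_1 \land A_2$ at a $(\land)$-step, or $\neg\neg A$ at a $(\neg)$-step), then the required sequent is literally one of the premises and there is nothing to do. The axiom cases are equally routine: weakening preserves the axiom shape, contraction always leaves an axiom pair (or $\neg\bot$) intact, and an inversion either preserves the axiom shape or re-derives the variant from a smaller axiom by a single propositional rule.

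The one genuinely delicate case, and the step I expect to be the main obstacle, is the contraction of a compound principal formula against a copy already present in the context, say contracting the two occurrences of $C = A_1 \land A_2$ in $\Delta, C, C$ when the last rule is a $(\land)$-step introducing one of them. Here commuting does not suffice, since re-applying $(\land)$ merely reproduces the duplicated sequent. The remedy is to use the simultaneously established admissibility of inversion to replace the spare copy of $C$ in each premise by its components, reducing the problem to contracting the strictly smaller formulas $A_1$ and $A_2$ and recombining by $(\land)$. Making this precise requires subordinating a secondary induction on the size of the contracted formula to the main induction, because the sequents on which the smaller contractions are performed have no controlled derivation height, so the outer height induction alone does not close the case. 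Getting this measure right, while respecting that absorption forces all three structural principles to be handled together, is the crux of the argument; the rest is bookkeeping.
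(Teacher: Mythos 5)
Your proposal takes the same route as the paper's proof, which consists of a single sentence: a simultaneous induction on $\Gen\Rules$-derivations of the uniform claim that $\Gen\Rules\entails\Gamma'$ whenever $\Gen\Rules\entails\Gamma$ and $\Gamma'\in\Struct(\Gamma)$, with the modal case discharged by absorption together with the induction hypothesis applied to the premises (which have strictly shorter derivations), and the propositional cases handled by commutation. Your reduction of the five schemas to closure under $\Struct(-)$ is exactly what the definition of absorption is designed for, and your modal case matches the paper's. Where you go beyond the paper is in isolating the principal-contraction case as the one point where ``standard'' needs justification here: since Remark~\ref{rem:congruence} explicitly notes that inversion is \emph{not} depth-preserving in this calculus (because axioms have the form $A,\neg A,\Gamma$ for compound $A$), the textbook argument --- invert the spare copy without increasing depth, then contract the components within the same height induction --- is unavailable, precisely as you observe. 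The paper's proof is silent on this.

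The one thing to repair is the induction measure you sketch for that case. You speak of ``subordinating a secondary induction on the size of the contracted formula to the main induction''; read as the lexicographic order (height, size) this does not decrease in the very situation you flag, since the sequent $\Delta,A_1,A_1$ obtained by non-depth-preserving inversion has uncontrolled height. The size of the active formula must be the \emph{dominant} component: an outer induction on size with a side induction on derivation height, which is exactly the shape of the double induction the paper itself uses for cut elimination in Theorem~\ref{thm:cut-elim}. With that precedence the principal-contraction case closes (size strictly decreases, height is irrelevant), and one must then re-examine the modal case, where the height decreases but the members of $\Struct(\Gamma_1),\dots,\Struct(\Gamma_n)$ demanded by absorption may a priori involve active formulas that are not smaller than the one being treated in the conclusion. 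For every rule set in the paper the absorption derivations only apply structural operations to the premises whose active formulas are subformulas of (hence no larger than) those in the conclusion, so the lexicographic descent goes through; in full generality this is an additional, easily overlooked proof obligation. Your outline is sound and more honest than the paper's, but the precedence of the two induction parameters needs to be reversed for the argument to terminate.
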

\begin{proof}
Standard induction on proofs in $\Gen\Rules$
where the case of propositional rules is standard and the inductive
case for modal rules immediately follows from absorption.
\end{proof}

\begin{rem} \label{rem:congruence}\hfill
\begin{enumerate}[(1)]
\item
The main purpose for introducing the notion of absorption of
structural rules (Definition \ref{defn:structural}) is to have a
handy criterion that guarantees admissibility of the structural
rules (Proposition \ref{propn:struct-admissible}). Our definition
offers a compromise between generality and simplicity. In essence,
a rule set absorbs structural rules if an application of weakening,
contraction or inversion can be pushed up one level of the proof
tree. A weaker version of Definition \ref{defn:structural} would
require that an application of weakening, contraction or inversion
to a rule conclusion can be replaced by a sequence of deduction
steps where the structural rule in question can not only be applied
to the premises of the rule, but also freely anywhere else, provided
that these additional applications are smaller in a well-founded
ordering.  However, we are presently not aware of any examples where
this extra generality would be necessary.
\item  In many sequent systems, the statement of Proposition
\ref{propn:struct-admissible} can be strengthened to say that
weakening, contraction and inversion are depth-preserving
admissible, i.e. does not increase the height of the proof tree.
This is in general false for the systems considered here as axioms
are of the form $A, \neg A, \Gamma$ for $A \in \Form(\Lambda)$ and,
for instance, $(A \land B), \neg (A \land B)$ is derivable with a
proof of height one (being an axiom), but, e.g. $A \land B, \neg A,
\neg B$ cannot be established by a proof of depth one (not being an
	axiom). It is easy to see that weakening, inversion and
	contraction
	are in fact depth-preserving admissible if only atomic axioms of the
	form $p, \neg p, \Gamma$ are allowed, for $p \in V$ a propositional
	variable. The more general form of axioms adopted in this paper
	allows us to simplify many constructions as we do not have to
	consider a congruence rule explicitly which would allow us to prove
	(rather than to assume as axioms) sequents of the form $\Box A, \neg \Box A,
	\Gamma$.
	\end{enumerate}
	\end{rem}
	\noindent
	Having dealt with the structural rules, we now address our main
	concern: the admissibility of the cut rule. In contrast to the
	absorption of structural rules, we need one additional degree of
	freedom in that we need to allow ourselves to apply cut to a
	structurally smaller formula.

	\begin{defn}\label{defn:cut-absorption}
	The \emph{size} of a formula $A \in \Form(\Lambda)$ is given
	inductively by $\size(p) = \size(\bot) = 1$, $\size(A \land B) =
	1 + \size(A) + \size(B)$, $\size(\neg A) = 1 + \size(A)$ and, for the modal
	case, $\size(\hearts(A_1,
	\dots, A_n)) = 1 + \size(A_1) + \dots + \size(A_n)$.

A ruleset $\Rules$ \emph{absorbs cut}, if for all rules
$(r_1)\frac{\Gamma_1 \dots \Gamma_n}{A, \Gamma_0}$, $(r_2) \frac{\Delta_1
\dots \Delta_k}{\neg A,
\Delta_0} \in \Rules$
\[ \Gen\Rules + \Cut(A, r_1, r_2) \entails
\Gamma_0, \Delta_0 \]
where $\Cut(A, r_1, r_2)$ consists of all instances of the rule schemas
\[ \frac{\Gamma, C \quad \Delta, \neg C}{\Gamma, \Delta} 
\quad
 \frac{\Gamma}{\Gamma, A} \qquad \frac{\Gamma, A, A}{\Gamma, A}
   \quad
	 \frac{\Gamma, \neg \neg A}{\Gamma, A}
	 \quad
	 \frac{\Gamma, \neg(A_1 \land A_2)}{\Gamma, \neg A_1, \neg A_2}
	 \quad
	 \frac{\Gamma, A_1 \land A_2}{\Gamma, A_i} 
\]
where $\size(C) < \size(A)$ in the leftmost rule and $i = 1, 2$ in
the rightmost schema, together with
the axioms $\Gamma_1, \dots, \Gamma_n, \Delta_1, \dots, \Delta_k$
and all sequents of the form $\Gamma, \Delta$ where $\Gamma, \Delta \in
\Seq(\Lambda)$ and,
for some $B \in \Form(\Lambda)$,
\begin{enumerate}[$\bullet$]
\item $\Gamma, B$ and $\Delta, \neg B
\in \lbrace \Gamma_1, \dots, \Gamma_n, \Delta_1, \dots, \Delta_k
\rbrace$, or
\item $\Gamma, B = \Gamma_0, A$ and $\Delta, \neg B \in \lbrace
\Delta_1, \dots, \Delta_k \rbrace$, or
\item $\Gamma, B = \Delta_0, \neg A$ and $\Delta, \neg B \in \lbrace
\Gamma_1, \dots, \Gamma_n \rbrace$.
\end{enumerate}
A rule set
that absorbs structural rules and the cut rule is called
\emph{absorbing}.
\end{defn}
\noindent
The intuition behind the above definition is similar to that of
absorption of structural rules,
but we have two additional degrees of freedom: we can not
only apply the cut rule to rule premises, but we can moreover freely
use both cut on structurally smaller formulas and the structural rules.
This allows us to use
the standard argument, a double induction on the structure of
the cut formula and the size of the proof tree, to
establish cut elimination. This is carried out in the proof of the
next theorem.

\begin{thm} \label{thm:cut-elim}
Suppose $\Rules$ is absorbing. Then the cut rule 
\[ \frac{\Gamma, A \qquad \Delta, \neg A}{\Gamma, \Delta} \]
is admissible in $\Gen\Rules$.
\end{thm}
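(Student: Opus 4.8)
The plan is to prove the stronger statement that a single topmost application of cut can be removed: whenever $D_1$ and $D_2$ are cut-free $\Gen\Rules$-derivations of $\Gamma, A$ and $\Delta, \neg A$ respectively, then $\Gen\Rules \entails \Gamma, \Delta$. Admissibility of cut then follows by eliminating cut instances one at a time, starting from an uppermost one, by induction on the number of cuts in a $\Gen\Rules + \Cut$-derivation. The core statement I would prove by the standard double induction, lexicographically on the pair consisting of the size $\size(A)$ of the cut formula (primary) and the sum of the heights of $D_1$ and $D_2$ (secondary), with a case distinction on the last rules applied in $D_1$ and $D_2$.

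The routine cases come first. If either derivation ends in an axiom, then $\Gamma, \Delta$ is itself an axiom or is obtained from the opposite premise by weakening, which is admissible by Proposition \ref{propn:struct-admissible}. If the last rule on one side is a propositional rule whose principal formula lies in the side context, so that the cut formula is \emph{not} principal there, I permute the cut above that rule and reapply it; this strictly decreases the combined height while leaving $A$ unchanged, so the secondary induction hypothesis applies. If $A$ and $\neg A$ are both introduced as principal formulas by propositional rules, so that $A$ is of the form $B \land C$ or $\neg\neg B$, the cut reduces to cuts on the immediate subformulas of $A$ together with an admissible contraction; these cuts are on strictly smaller formulas and are handled by the primary induction hypothesis.

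The crux is the case in which both $D_1$ and $D_2$ end in modal rules from $\Rules$, say $(r_1)\,\frac{\Gamma_1 \dots \Gamma_n}{A, \Gamma_0}$ and $(r_2)\,\frac{\Delta_1 \dots \Delta_k}{\neg A, \Delta_0}$ with $\Gamma = \Gamma_0$ and $\Delta = \Delta_0$. Here I invoke absorption of cut: by hypothesis $\Gen\Rules + \Cut(A, r_1, r_2) \entails \Gamma_0, \Delta_0$, and the task is to convert this derivation in the auxiliary calculus into a genuine $\Gen\Rules$-derivation by discharging each ingredient of $\Cut(A, r_1, r_2)$ under the induction hypotheses. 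The structural rules are admissible by Proposition \ref{propn:struct-admissible}; the instances of the cut schema, being restricted to formulas $C$ with $\size(C) < \size(A)$, are discharged by the primary induction hypothesis; and the premise-axioms $\Gamma_1, \dots, \Gamma_n, \Delta_1, \dots, \Delta_k$ are $\Gen\Rules$-derivable since they occur as immediate subderivations of $D_1, D_2$. It then remains to show that each resolvent axiom $\Gamma', \Delta'$ listed in Definition \ref{defn:cut-absorption} is $\Gen\Rules$-derivable: such a sequent arises by cutting two premises, or a premise against one of the original conclusions $\Gamma, A$ or $\Delta, \neg A$, on some formula $B$. In each case at least one of the two sides is a premise of $r_1$ or $r_2$, so the cut producing $\Gamma', \Delta'$ has strictly smaller combined height than $D_1$ and $D_2$ together, its cut formula being either a proper subformula of $A$ exposed in the premises or $A$ itself.

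The main obstacle is precisely this last step of the modal case: organising the nested induction so that every cut forced by $\Cut(A, r_1, r_2)$ falls strictly below $(\size(A), h_1 + h_2)$ in the lexicographic order. Concretely, one must check that no resolvent axiom requires a cut on a formula of size at least $\size(A)$ at undiminished combined height; the cuts that reintroduce $A$ must always pair an original conclusion with a strictly shorter premise subderivation of the opposite rule, so that the secondary hypothesis genuinely applies, while all remaining resolvents are cut on subformulas of $A$ and fall to the primary hypothesis. Verifying that the explicit size bound in the cut schema and the precise list of resolvent axioms in $\Cut(A, r_1, r_2)$ together guarantee this is the delicate bookkeeping on which the whole argument turns; once it is in place, the double induction closes and cut is admissible.
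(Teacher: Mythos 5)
Your proposal is correct and follows essentially the same route as the paper: a double induction, primary on the size of the cut formula and secondary on the combined height of the two derivations, with the modal--modal case discharged by absorption of cut (structural rules via Proposition~\ref{propn:struct-admissible}, the restricted cut schema via the primary hypothesis, and the resolvent axioms of $\Cut(A,r_1,r_2)$ via the secondary hypothesis at strictly smaller combined height). The ``delicate bookkeeping'' you flag at the end is exactly the point the paper's proof also rests on, and your accounting of which induction hypothesis covers which resolvent matches the paper's.
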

\begin{proof}
We use Gentzen's classical method and proceed by a double induction
on the size of the cut formula and the size of the proof tree.
That is, we prove the statement
\[ \forall A \in \Form(\Lambda) \forall n \in \omega (n = n_1 + n_2
\mbox{ \& } \entails_{n_1} \Gamma, A \mbox{ \& } \entails_{n_2} \Delta, \neg A \implies
 \entails \Gamma, \Delta) \]
by induction on $\size(A)$
where, in the inductive step, we use a side
induction on the size of proof trees, as indicated by the subscript
of the entailment sign. Formally, the
relation $\entails_n$ is defined inductively by $\entails_1 \Gamma, A, \neg A$
and 
\[ \frac{\entails_n A}{\entails_{n+1} \neg \neg A} 
   \qquad 
	 \frac{\entails_n \Gamma, \neg A, \neg B}{\entails_{n+1} \Gamma, \neg(A \land B)} 
   \qquad
	 \frac{\entails_n \Gamma, A \quad \entails_k \Gamma, B}{\entails_{n + k + 1} \Gamma, A \land B} 
	 \qquad 
	 \frac{\entails_{n_1} F_1 \dots \entails_{n_k} F_k}{\entails_{n_1 + \dots + n_k + 1} F_0}
\]
where, in the last rule, $\frac{F_1 \dots F_k}{F_0} \in \Rules$.  
We may inductively assume that the statement holds for all cut
formulas $C < A$ and to prove the statement for $A$ we have to
consider the following cases:
\begin{enumerate}[(1)]
\item \label{cut:rule-rule} 
cuts that arise between two rule conclusions
\item \label{cut:rule-prop} cuts that arise between a rule conclusion and the conclusion
of a propositional rule or axiom
\item \label{cut:prop-prop} cuts that arise between two propositional rules.
\end{enumerate}
We start with item (\ref{cut:rule-rule}), which follows directly
from the fact that $\Rules$ absorbs cut. In more detail, suppose
that $\frac{F_1 \dots F_k}{F_0, A}$ and $\frac{G_1 \dots G_k}{G_0,
\neg A}
\in \Rules$ and $\entails_{n_i} F_i$ ($i = 1, \dots, k$) and
$\entails_{m_j} G_j$ for $j = 1, \dots, l$.  As $\Rules$ absorbs
cut, we have that $F_0, G_0$ is derivable using cuts on formulas $<
A$ from the additional assumptions $\Gamma, \Delta$ provided that
for some $D \in \Form(\Lambda)$ we have that both $\Gamma, D$ and
$\Delta, \neg D$ are among the $F_1, \dots, F_k, G_1, \dots, G_l$.
In case $\Gamma, D = \Delta, \neg D$ we have that $\Gamma \subseteq
\Gamma, \Delta$ and $\entails \Gamma, \Delta$ as weakening is
admissible in $\Gen\Rules$. 
Assuming that $\entails_x \Gamma, D$ and $\entails_y \Delta, \neg D$
for $\Gamma, D \neq \Delta, \neg D$
we have that $x + y < 2 + \sum_i n_i + \sum_j m_j$ and hence
$\entails \Gamma, \Delta$ by (inner) induction hypothesis. The fact
that -- in the deduction of $F_0, G_0$ -- we may also have to use
cuts on formulas $< A$ is discharged by the outer induction
hypothesis and possible uses of weakening, contraction and inversion
are admissible by Proposition \ref{propn:struct-admissible}.

As regards item (\ref{cut:rule-prop}) we only discuss a subset of
the cases that showcase the need for contraction, weakening and
inversion to be admissible. For the whole discussion, suppose that
$\frac{F_1 \dots F_k}{F_0} \in \Rules$ and $\entails_{n_i} F_i$ for
$i = 1, \dots, k$.
\begin{enumerate}[$\bullet$]
\item Suppose that $F_0 = F_0', A$ and $G_0, \neg A$ is an axiom. In
case $A \in G_0$ we have that $F_0 = F_0', A \subseteq F_0', G_0$
and $\entails F_0', G_0$ follows from $\entails F_0', A$ as
$\Gen\Rules$ admits weakening. In case $\neg A \notin G_0$ we have
that $G_0$ is an axiom, and hence so is $G_0, F_0'$.
\item Suppose that $F_0 = F_0', A$ and $\neg A, G_0$ has been
derived using $(\neg\land)$. We have to discuss two cases, depending
on whether or not $\neg A$ is principal in the application of
$(\neg\land)$.

\noindent\emph{Case $A = A' \land B'$ and $\entails_{m} \neg
A', \neg B', G_0$ so that $\entails_{m+1} \neg A, G_0$.} As $\Rules$ absorbs structural rules, we have
that $\Gen\Rules \entails F_0', A$ and $\Gen\Rules \entails F_0',
B$. As cuts on $A'$ and $B'$ can be eliminated by induction
hypothesis, we have $\Gen\Rules \entails F_0', F_0', G_0$ and
therefore $\Gen\Rules \entails F_0', G_0$ as $\Gen\Rules$ admits
contraction.

\noindent\emph{Case $\entails_m \neg C, \neg D, \neg A, G_0$ so that
$\entails_{m+1} \neg(C \land D), \neg A, G_0$.} As $m + 1 + \sum_{i
= 1}^k n_i < m + 1 + 1 + \sum_{i=1}^k n_i$ we may apply the inner
induction hypothesis to conclude $\entails \neg C, \neg D, G_0,
F_0'$ and applying $(\neg\land)$ gives $\entails F_0', \neg(C \land
D), G_0$.
\end{enumerate}
\noindent
All the other cases follow exactly the same pattern. We now focus on
item (\ref{cut:prop-prop}), that is, we show how cuts between the
conclusions of propositional rules and axioms can be eliminated.
This is mostly standard and again we only discuss a subset of the
cases. Suppose that $\entails_n F_0, A$ and $\entails_m G_0, \neg
A$.
\begin{enumerate}[$\bullet$]
\item If both $F_0, A$ and $G_0, \neg A$ are axioms, then so is
$F_0,G_0$.
\item Suppose that $F_0, A$ has been derived using $(\land)$ and
$G_0, \neg A$ has been derived using $(\neg \land)$. We distinguish
four cases depending on whether or not $A$ is principal in the
application of $(\land)$ or $(\neg\land)$.

\noindent\emph{Case $A = A' \land B'$ and $\entails_{n_0} F_0, A$,
$\entails_{n_1} F_0, B'$ so that $n = n_0 + n_1 + 1$ and $\entails_n
A \land B, F_0$.} 
If $A$ is principal in the application of $(\neg\land)$, we have
that $\entails_{m-1} G_0, \neg A', \neg B'$. By (outer) induction
hypothesis, cuts on $A'$ and $B'$ can be eliminated so that we have
$\entails F_0, F_0, G_0$ and it follows from closure under
contraction that $\entails F_0, G_0$.

If $A$ is not principal in the application of $(\neg \land)$ we have
that $\entails_{m-1} \neg C, \neg D, \neg(A' \land B'), G_0'$ so
that $G_0 = \neg(C \land D), G_0'$ and $\entails_m \neg(A \land B),
G_0'$. As $\entails_n F_0, A$ and $\entails_{m-1} \neg C, \neg D,
\neg A, G_0'$ and $n + (m-1) < n + m$ we can apply the inner
induction hypothesis to eliminate the cut on $A$ so that 
$\entails F_0, \neg C, \neg D, G_0'$ and applying $(\neg\land)$
gives $\entails F_0, \neg(C \land D), G_0' = F_0, G_0$ as required.
The two cases where $A$ is not principal in the application of
$(\neg\land)$ follow exactly the same pattern.
\end{enumerate}
The remaining cases of cuts between propositional rules and axioms
are entirely analogous, and therefore omitted.
\end{proof}
\noindent
We illustrate the preceding theorem by using it to derive the
well-known fact that cut-elimination holds for the 
modal logics $\K, \Kfour$ and $\T$ and use it to derive
cut-elimination for various conditional logics in  Section
\ref{sec:CL}.
\begin{ex} \label{example:K}
The rule sets $\K, \Kfour$ and $\T$ are absorbing. We have already
seen that they absorb weakening, contraction and inversion in Example
\ref{example:absorbing-rule-sets} so everything that remains to be seen is
that they also absorb cut. For $(K)$, we need to apply cut to a
formula of smaller size. For the 
two instances
\[ 
(r_1) \frac{\neg A_1, \dots, \neg A_n, A_0}{\neg \Box A_1, \dots, \neg \Box A_n, \Box A_0, \Gamma} 
\qquad
(r_2)\frac{\neg B_1, \dots, \neg B_k, B_0}{\neg \Box B_1, \dots, \neg \Box B_k, \Box B_0, \Delta} 
\]
we need to consider, up to symmetry,  the cases $A_i = B_0$, $\Box A_i
\in \Delta$ and $\neg \Box A_0 \in \Delta$, for $i = 1, \dots, n$.
Here, we only treat the first case for $i = 1$ where we have to show that
$\neg \Box A_2, \dots, \neg
\Box A_n, \Box A_0,
\neg \Box B_1, \dots, \neg \Box B_k, \Gamma, \Delta$ is derivable
from 
$\Gen\Rules + \Cut(\Box A_1, r_1, r_2)$,
which follows as the latter
system allows us to apply cut on $A_1 = B_0$. The case $\Box A_i \in
\Delta$ and $\neg \Box A_0 \in \Delta$ are straight forward.

The argument to show that $(\K\Four)$ is absorbing is similar, and uses
an additional (admissible) instance of cut on a formula of smaller
size and contraction. For $(\T)$ we only consider instances of cut between two
conclusions  of
\[ (r_1)\frac{\neg A, \neg \Box A, \Gamma}{\neg \Box A, \Gamma} \qquad
  (r_2) \frac{\neg B, \neg \Box B, \Delta}{\neg \Box B, \Delta}
\]
of the $\T$-rule.  We only demonstrate the case 
$\Box A  \in \Delta$. In this case, $\Delta = \Delta',
\Box A$ and we have to show that $\neg \Box B, \Gamma, \Delta'$ can
be derived in $\Cut(\Box A, r_1, r_2)$.
The latter system allows us to cut $\neg \Box A$
between the conclusion of $(\T)$ on the left and the premise of the
right hand rule, i.e., we have that
$\Cut(\Box A, r_1, r_2)
\entails \neg B, \neg \Box B, \Gamma, \Delta')$ and an
application of $(\T)$ now gives derivability of $\neg \Box B, \Gamma,
\Delta'$.
\end{ex}

\section{Equivalence of Hilbert and Gentzen Systems}

We now investigate the relationship between provability in a
Hilbert-system and provability in the associated Gentzen system.
We note the following standard lemmas that we will use later on.
\begin{lemma} \label{lemma:taut-subst}
Suppose that $\Lambda$ is a modal similarity type and $\Rules$ is a
set of $\Lambda$-rules.
\begin{enumerate}[\em(1)]
\item Let $A \in \Form(\Lambda)$ be a propositional tautology. Then
$\Gen\Rules\entails A$.
\item Let $\Rules$ be closed under substitution and $\Gamma \in
\Seq(\Lambda)$. Then
$\Gen\Rules \entails \Gamma \sigma$ whenever
$\Gen\Rules \entails \Gamma$.
\end{enumerate}
\end{lemma}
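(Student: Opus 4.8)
The statement to prove is Lemma~\ref{lemma:taut-subst}, which has two parts. Let me sketch proofs for both.

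\medskip
\textbf{Part (1): Propositional tautologies are derivable in $\Gen\Rules$.}
The plan is to show that the right-handed Gentzen calculus is complete for classical propositional reasoning, restricted to the connectives $\bot$, $\neg$, $\land$ that generate $\Form(\Lambda)$. Concretely, I would argue that if $A$ is a propositional tautology, then reading $\Gen\Rules$ as a one-sided sequent calculus for classical logic (with modal subformulas treated as opaque atoms), the sequent $A$ is derivable using only the axioms $B,\neg B,\Gamma$ and $\neg\bot,\Gamma$ together with the propositional rules $(\neg\land)$, $(\land)$, $(\neg)$. The cleanest route is a standard invertibility/saturation argument: one decomposes $A$ from the top down using the propositional rules (each of which is invertible, and in the one-sided setting completely analyses a $\land$, $\neg\land$ or $\neg\neg$ formula), reducing the claim to the leaves. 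Since $A$ is a tautology, every branch of this decomposition must terminate in a sequent that either contains a complementary pair $B,\neg B$ or contains $\neg\bot$, i.e.\ is an axiom; otherwise one could read off a falsifying assignment from a non-axiomatic leaf, contradicting tautologyhood. This is routine and I would not grind through it.

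\medskip
\textbf{Part (2): Substitution closure on the Gentzen side.}
Here I would proceed by induction on the derivation witnessing $\Gen\Rules\entails\Gamma$, showing the stronger statement that every derivable sequent remains derivable under an arbitrary substitution $\sigma\colon V\to\Form(\Lambda)$. The base cases are the axioms: $A,\neg A,\Gamma$ maps under $\sigma$ to $A\sigma,\neg(A\sigma),\Gamma\sigma$, which is again an instance of the axiom schema, and $\neg\bot,\Gamma$ maps to $\neg\bot,\Gamma\sigma$ since $\bot$ has no variables, again an axiom. For the inductive step one checks each rule. The propositional rules $(\neg\land)$, $(\land)$, $(\neg)$ are manifestly stable under substitution, since $\sigma$ commutes with the propositional connectives and simply carries the principal formula and context $\Gamma$ to their $\sigma$-instances, so the substituted premises yield the substituted conclusion by the same rule. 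For a step using a rule $\Gamma_1\dots\Gamma_n/\Gamma_0\in\Rules$, the key point is that $\Rules$ is assumed substitution closed, so $\Gamma_1\sigma\dots\Gamma_n\sigma/\Gamma_0\sigma\in\Rules$ as well; applying the induction hypothesis to the premises gives $\Gen\Rules\entails\Gamma_i\sigma$ for each $i$, and the substituted rule then yields $\Gen\Rules\entails\Gamma_0\sigma$.

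\medskip
\textbf{Main obstacle.}
Neither part is genuinely hard, as befits a ``standard lemma.'' The only subtlety worth flagging is in part~(2): for the modal rules the argument hinges \emph{entirely} on the substitution-closure hypothesis on $\Rules$, and it is precisely for this that the paper adopts the liberal axiom schema $A,\neg A,\Gamma$ (for arbitrary $A$, not just atomic $p$) — an atomic axiom $p,\neg p,\Gamma$ would not be closed under substitution, since $p\sigma$ need not be atomic. This is exactly the trade-off the authors announce in the remark following the definition of $\Gen\Rules+\Xtra$ (and revisit in Remark~\ref{rem:congruence}), so the liberal axioms make part~(2) fall out immediately rather than requiring an auxiliary derivation of $A\sigma,\neg(A\sigma),\Gamma$. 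I would make this dependence explicit but otherwise treat both parts as straightforward inductions.
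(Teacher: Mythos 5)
Your proof is correct and is exactly the standard argument the paper has in mind: the paper states Lemma~\ref{lemma:taut-subst} without proof, and its subsequent remark confirms that the liberal axiom schema $A,\neg A,\Gamma$ is precisely what makes both inductions go through. The only point I would add is that this liberality is needed for part~(1) as well as part~(2), since the saturation in part~(1) terminates in leaves that may contain complementary pairs of \emph{modal} literals $\hearts(\vec A),\neg\hearts(\vec A)$, which are axioms only under the general schema.
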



%

\begin{rem}
Being able to prove the previous lemma is the main reason for
formulating axioms as $A, \neg A, \Gamma$ where $A \in
\Form(\Lambda)$ rather than $p, \neg p, \Gamma$. Both formulations
are equivalent if the modal congruence rule 
\[ \frac{A_1 \otto A_1' \qquad \dots \qquad A_n \otto A_n'}{\hearts(A_1, \dots,
A_n)
\to \hearts(A_1', \dots, A_n')} \]
is admissible. However, Lemma \ref{lemma:taut-subst} can be proved
without the assumption that congruence is admissible using axioms of
the form $A, \neg A, \Gamma$.
\end{rem}

\begin{thm} \label{thm:gen-hilb-equiv}
Suppose $\Rules$ is absorbing and substitution closed. Then 
$\Gen\Rules \entails \Gamma \iff \Hilb\Rules \entails \Lor
\Gamma$
for all $\Gamma \in \Seq(\Lambda)$.
\end{thm}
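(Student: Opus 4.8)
The left-to-right implication is exactly Proposition~\ref{propn:gen-hilb}, so the plan is to concentrate on the converse, $\Hilb\Rules \entails \Lor\Gamma \implies \Gen\Rules \entails \Gamma$. The natural handle is the proof-normalisation result of Proposition~\ref{propn:proof-structure}: writing $S$ for the least set of formulas closed under propositional consequences of rule conclusions, we have $\Hilb\Rules \entails \Lor\Gamma$ iff $\Lor\Gamma \in S$. I would therefore prove, by induction on the well-founded generation of $S$, the statement that $A \in S$ implies $\Gen\Rules \entails A$, reading $A$ as a singleton sequent. Once this is available, given $\Hilb\Rules \entails \Lor\Gamma$ we obtain $\Gen\Rules \entails \Lor\Gamma$, and a finite sequence of inversion steps --- admissible by Proposition~\ref{propn:struct-admissible} --- peels the top-level disjunction apart to yield $\Gen\Rules \entails \Gamma$, as desired.

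For the inductive step, suppose $A \in S$ is witnessed by rules $\Theta_1/\Gamma_1, \dots, \Theta_n/\Gamma_n \in \Rules$ and substitutions $\sigma_1, \dots, \sigma_n$ with $\Lor\Delta\sigma_i \in S$ for every premise $\Delta \in \Theta_i$, and with $\lbrace \Lor\Gamma_1\sigma_1, \dots, \Lor\Gamma_n\sigma_n \rbrace \entails_\PL A$. The induction hypothesis gives $\Gen\Rules \entails \Lor\Delta\sigma_i$ for each premise, and inversion turns each of these into $\Gen\Rules \entails \Delta\sigma_i$. Since $\Rules$ is substitution closed, the instance $\Theta_i\sigma_i/\Gamma_i\sigma_i$ again lies in $\Rules$; as $\Gen\Rules$ is closed under its rules and all premises $\Delta\sigma_i$ are derivable, applying the rule yields $\Gen\Rules \entails \Gamma_i\sigma_i$, and the propositional rules let me reassemble this into the single formula $\Gen\Rules \entails \Lor\Gamma_i\sigma_i$. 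It remains to combine these $n$ derived conclusions into a derivation of $A$.

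This last combination is the crux, and the place where absorption is actually used. The propositional entailment $\lbrace \Lor\Gamma_1\sigma_1, \dots, \Lor\Gamma_n\sigma_n \rbrace \entails_\PL A$ says that $\Lor\Gamma_1\sigma_1 \land \dots \land \Lor\Gamma_n\sigma_n \to A$ is a substitution instance of a propositional tautology, so by Lemma~\ref{lemma:taut-subst}(1) this formula is derivable in $\Gen\Rules$; inverting its implication and the resulting negated conjunction gives $\Gen\Rules \entails \neg\Lor\Gamma_1\sigma_1, \dots, \neg\Lor\Gamma_n\sigma_n, A$. Now I would cut this sequent successively against the $n$ derivations $\Gen\Rules \entails \Lor\Gamma_i\sigma_i$, each cut removing one literal $\neg\Lor\Gamma_i\sigma_i$, until only $A$ remains. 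The crucial point is that these cuts stay inside $\Gen\Rules$: since $\Rules$ is absorbing, cut is admissible by Theorem~\ref{thm:cut-elim}. The base case $n = 0$ is immediate, with $A$ a tautology handled directly by Lemma~\ref{lemma:taut-subst}(1). I expect the main obstacle to be precisely the faithful translation of the Hilbert system's implicit propositional (cut-using) reasoning into the Gentzen calculus; isolating it as repeated admissible cuts against tautology-derived sequents is what makes cut admissibility the load-bearing ingredient, while the routine bookkeeping lies in the inversion steps that mediate between disjunctive formulas and sequents.
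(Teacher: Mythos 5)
Your proposal is correct and follows essentially the same route as the paper's (sketched) proof: decompose the Hilbert derivation via Proposition~\ref{propn:proof-structure}, apply the induction hypothesis and the substituted rule instances (using substitution closure), derive the tautology instance via Lemma~\ref{lemma:taut-subst}, and finish with admissible cut and inversion from Theorem~\ref{thm:cut-elim} and Proposition~\ref{propn:struct-admissible}. Your write-up merely spells out the details the paper compresses into ``the claim follows by applying cut, contraction and inversion.''
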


\begin{proof}[Sketch]
We only need to show the direction from right to left. Inductively
assume that $\Hilb\Rules \entails \Lor \Gamma$ for $\Gamma \in
\Seq(\Lambda)$. By Proposition
\ref{propn:gen-hilb} we have that there are rules
$\Theta_i/\Gamma_i$ and substitutions $\sigma_i$, $i = 1, \dots, n$ such
that 
\begin{enumerate}[$\bullet$]
\item $\Hilb\Rules \entails \Delta \sigma_i$ whenever $\Delta \in
\Theta_i$ ($i = 1, \dots, n$)
\item $\lbrace \Lor \Gamma_1 \sigma_1,  \dots, \Lor \Gamma_n
\sigma_n \rbrace
\entails_\PL \Lor \Gamma$. 
\end{enumerate}
By induction hypothesis, $\Gen\Rules \entails \Delta \sigma_i$ for
all $i = 1, \dots, n$ and $\Delta \in \Theta_i$. By Lemma
\ref{lemma:taut-subst} we have
\[ \Gen\Rules \entails \Lor \Gamma_1 \sigma_1 \land \dots \land \Lor
\Gamma_n \sigma_n \to \Lor \Gamma. \]
The claim follows by applying cut, contraction and inversion.
\end{proof}
%
%
\noindent
The construction of an absorbing rule set from a given set of axioms
and rules essentially boils down to adding the missing instances of
cut, weakening, contraction and inversion to a given rule set.  The
soundness of this process is witnessed by the following two trivial
lemmas (both of which rest on the fact that $\Hilb\Rules$ incorporates
full propositional reasoning). We use these to derive an absorbing
rule set for $K$ in the present section, and to establish
cut-elimination for a large range of conditional logics in the next
section.
\begin{lemma} \label{lemma:adm-cut}
{\sloppy
Suppose $\Gamma_1, \dots, \Gamma_n / \neg A, \Gamma_0$ and
$\Delta_1, \dots, \Delta_k / A, \Delta_0 \in \Rules$. Then the rule \break
$\Gamma_1, \dots, \Gamma_n, \Delta_1, \dots, \Delta_k / \Gamma_0,
\Delta_0$ is admissible in $\Hilb\Rules$.}
\end{lemma}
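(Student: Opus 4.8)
The plan is to read off admissibility directly from the definitions, using nothing beyond the fact that $\Hilb\Rules$ is closed under full propositional reasoning (tautologies and modus ponens). By Definition \ref{defn:admissible}, treating the proposed rule $\rho = \Gamma_1, \dots, \Gamma_n, \Delta_1, \dots, \Delta_k / \Gamma_0, \Delta_0$ as a singleton rule set, admissibility in $\Hilb\Rules$ means $\Hilb\Rules \entails B \iff \Hilb(\Rules \cup \lbrace\rho\rbrace) \entails B$ for all $B$. The left-to-right direction is immediate since adding a rule can only enlarge the set of theorems. For the converse it suffices, since the three base clauses of Definition \ref{defn:hilb-deriv} are unaffected by $\rho$, to verify that $\Hilb\Thm(\Rules)$ is already closed under $\rho$ itself.

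Concretely, I would assume $\Hilb\Rules \entails \Lor \Gamma_i$ for all $i = 1, \dots, n$ and $\Hilb\Rules \entails \Lor \Delta_j$ for all $j = 1, \dots, k$, and aim to derive $\Hilb\Rules \entails \Lor(\Gamma_0, \Delta_0)$. Since $\Gamma_1, \dots, \Gamma_n / \neg A, \Gamma_0 \in \Rules$, the rule-application clause of Definition \ref{defn:hilb-deriv} gives $\Hilb\Rules \entails \Lor(\neg A, \Gamma_0)$, that is, $\Hilb\Rules \entails \neg A \lor \Lor \Gamma_0$; symmetrically, from $\Delta_1, \dots, \Delta_k / A, \Delta_0 \in \Rules$ we obtain $\Hilb\Rules \entails A \lor \Lor \Delta_0$.

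It then remains to combine these two disjunctions propositionally. The formula
\[ (\neg A \lor \Lor \Gamma_0) \to (A \lor \Lor \Delta_0) \to (\Lor \Gamma_0 \lor \Lor \Delta_0) \]
is a substitution instance of a propositional tautology — it is precisely the resolution (cut) principle on the atom $A$ — and hence lies in $\Hilb\Thm(\Rules)$. Two applications of modus ponens then yield $\Hilb\Rules \entails \Lor \Gamma_0 \lor \Lor \Delta_0 = \Lor(\Gamma_0, \Delta_0)$, establishing closure of $\Hilb\Thm(\Rules)$ under $\rho$ and thus the lemma. There is no real obstacle here: the only point requiring a little care is the disjunctive reading of sequents, so that $\Lor(\neg A, \Gamma_0)$ is identified with $\neg A \lor \Lor \Gamma_0$ and $\Lor(\Gamma_0, \Delta_0)$ with $\Lor \Gamma_0 \lor \Lor \Delta_0$; once this bookkeeping is fixed the argument is a single use of the cut tautology, which is exactly why the lemma is labelled trivial.
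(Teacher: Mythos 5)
Your proof is correct and matches the paper's intent exactly: the paper leaves this lemma unproved as ``trivial'', noting only that it rests on $\Hilb\Rules$ incorporating full propositional reasoning, and your argument — apply both rules to get $\neg A \lor \Lor\Gamma_0$ and $A \lor \Lor\Delta_0$, then discharge $A$ via the resolution tautology and modus ponens, thereby showing $\Hilb\Thm(\Rules)$ is closed under the new rule — is precisely the elaboration of that remark.
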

%
%
\noindent
The same applies to instances of the structural rules of weakening,
contraction and inversion. As we wish to extend the rule set while
leaving the provability predicate in the Hilbert calculus unchanged,
the following formulation is handy for our purposes -- in particular
it implies that we can freely use structural rules both in the premise
and conclusion.
\begin{lemma} \label{lemma:adm-prop}
Let $\Gamma_1, \dots, \Gamma_n / \Gamma_0 \in \Rules$. If
$\Delta_0, \dots, \Delta_k \in \Seq(\Lambda)$ and both
\[ \lbrace \Lor \Delta_1, \dots, \Lor \Delta_k \rbrace \entails_\PL
\Lor \Gamma_i (1 \leq i \leq n) \mbox{ and } \Lor \Gamma_0 \entails_{\PL}
\Lor \Delta_0 \]
then the rule $\Delta_1, \dots, \Delta_k / \Delta_0$ is admissible
in $\Hilb\Rules$. 
\end{lemma}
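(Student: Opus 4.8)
The plan is to unfold Definition~\ref{defn:admissible} for the single derived rule $\Delta_1, \dots, \Delta_k / \Delta_0$. It suffices to show the preservation property: whenever $\Hilb\Rules \entails \Lor \Delta_i$ for every $i = 1, \dots, k$, then $\Hilb\Rules \entails \Lor \Delta_0$. Admissibility in the sense of Definition~\ref{defn:admissible} then follows by the standard induction on derivations in $\Hilb(\Rules \cup \lbrace \Delta_1, \dots, \Delta_k / \Delta_0 \rbrace)$, in which every application of the new rule is replayed inside $\Hilb\Rules$ using exactly this property (its premises are provable in $\Hilb\Rules$ by the inductive hypothesis, hence so is its conclusion). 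So I assume $\Hilb\Rules \entails \Lor \Delta_j$ for all $j = 1, \dots, k$ and aim at $\Hilb\Rules \entails \Lor \Delta_0$.

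First I would propagate provability through the left propositional hypothesis. Fix $i \in \lbrace 1, \dots, n\rbrace$. By definition of propositional consequence, $\lbrace \Lor \Delta_1, \dots, \Lor \Delta_k \rbrace \entails_\PL \Lor \Gamma_i$ means that $\Lor \Delta_{j_1} \land \dots \land \Lor \Delta_{j_m} \to \Lor \Gamma_i$ is a substitution instance of a propositional tautology for some subset of the $\Delta_j$; hence it is $\Rules$-derivable by the first clause of Definition~\ref{defn:hilb-deriv}. Since each $\Lor \Delta_j$ is provable and $\Hilb\Rules$ contains all tautology instances and is closed under modus ponens, conjunction introduction gives $\Hilb\Rules \entails \Lor \Delta_{j_1} \land \dots \land \Lor \Delta_{j_m}$, and one further application of modus ponens yields $\Hilb\Rules \entails \Lor \Gamma_i$. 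As $i$ was arbitrary, $\Hilb\Rules \entails \Lor \Gamma_i$ for all $i = 1, \dots, n$.

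Next, since $\Gamma_1, \dots, \Gamma_n / \Gamma_0 \in \Rules$ and $\Hilb\Rules$ is closed under rule application (third clause of Definition~\ref{defn:hilb-deriv}), the provability of all the $\Lor \Gamma_i$ delivers $\Hilb\Rules \entails \Lor \Gamma_0$. Finally, the right hypothesis $\Lor \Gamma_0 \entails_\PL \Lor \Delta_0$ makes $\Lor \Gamma_0 \to \Lor \Delta_0$ a substitution instance of a tautology, hence provable, and modus ponens with $\Lor \Gamma_0$ produces $\Hilb\Rules \entails \Lor \Delta_0$, as required.

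I expect no serious obstacle: the statement is an exercise in chaining the three closure conditions of the Hilbert calculus (tautologies and substitution, modus ponens, rule application) with the relation $\entails_\PL$, which is precisely why the lemma is labelled trivial. The only point demanding a little care is the passage from the preservation property to genuine admissibility in the sense of Definition~\ref{defn:admissible}; that is the routine induction on $\Hilb(\Rules \cup \lbrace r \rbrace)$-derivations sketched in the first paragraph, and it relies essentially on the fact that $\Hilb\Rules$ already incorporates full propositional reasoning.
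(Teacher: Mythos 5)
Your proof is correct and follows exactly the route the paper intends: the paper states this lemma without proof, remarking only that it is trivial because $\Hilb\Rules$ incorporates full propositional reasoning, and your argument (chaining tautology instances, modus ponens, and rule application to get the preservation property, then the routine induction to convert preservation into admissibility) is precisely the fleshed-out version of that remark. No issues.
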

This gives us a recipe for constructing  rule sets that absorb
contraction and cut: simply add more rules according to the 
lemmas above. This will not change the notion of provability in the
Hilbert system, but when this process terminates, the ensuing rule
set will be absorbing and  gives rise to a cut free sequent calculus.
\begin{ex}[Modal Logic $K$]
In a Hilbert-style calculus, the axiomatisation of $K$ is usually
described in terms of the distribution axiom (which we view as a
rule with empty premise) and the necessitation rule:
\[ (\D) \quad \Box (A \to B) \to \Box A \to \Box B  \qquad (\N)
\frac{A}{\Box A} \]
We first apply Lemma \ref{lemma:adm-cut} to break the propositional
connectives in the distribution axiom. We have that the axiom
$\neg \Box(A \to B), \neg \Box A, \Box B$ is admissible by Lemma
\ref{lemma:adm-prop}, and applying Lemma \ref{lemma:adm-cut} to this
axiom and the instance $A \to B / \Box(A \to B)$ of the
necessitation rule gives admissibility of
the all instances of
\[ \frac{\neg A, B}{\neg \Box A, \Box B} \]
with the help of (admissible) propositional reasoning in the
premise. The same procedure, applied to the instances
\[ \frac{\neg A, B \to C}{\neg \Box A, \Box(B \to C)} \qquad
\neg \Box(B \to C), \neg \Box B, \Box C \]
gives admissibility of the left hand rule below,
\[\frac{\neg A, \neg B, C}{\neg \Box A, \neg \Box B, \Box C} \qquad
   \frac{\neg A_1, \dots, \neg A_n, A_0}{\neg \Box A_1, \dots, \neg
\Box A_n, \Box A_0, \Gamma} \]
and continuing in this way and absorbing weakening, we obtain admissibility of
the rule on the right,
where $\Gamma \in \Seq(\Lambda)$ is an arbitrary context. We have shown
previously that this rule set is absorbing, and it is easy to see
that it is equivalent to the rule set consisting of all instances of
$(\N)$ and $(\D)$.
\end{ex}

\section{Applications: Sequent Calculi for Conditional Logics}
\label{sec:CL}

After having seen how the construction of absorbing rule sets gives
rise to cut-elimination for a number of
well-studied normal modal logics,  in this section we construct a
cut-free sequent calculus for a number of conditional logics. 

Conditional logics \cite{Chellas80} are extensions of propositional
logic by a non-monotonic conditional $A \cto B$, read as ``$B$ holds
under the condition that $A$''. Formulas of the form $A\cto B$ or
$\neg(A\cto B)$ are called \emph{conditional literals}, and in such a
conditional literal, we refer to $A$ as the \emph{(conditional)
  antecedent} and to $B$ as the \emph{(conditional) consequent}. The
conditional implication is non-monotonic in general, i.e.\ the
validity of $A \cto B$ does \emph{not} imply that $(A \land C) \cto B$
is also a valid statement.

Axiomatically, the first argument $A$ of the conditional operation $A
\cto B$ behaves like the $\Box$ in neighbourhood frames and only
supports replacement of equivalents, whereas the second argument $B$
obeys the rules of $K$.  We recall from Example
\ref{example:logics-rules} (see also Figure
\ref{fig:conditional-axioms}) that  
$\CK$ is axiomatised by the rules $(\mathsf{RCEA})$ and
$(\mathsf{RCK})$ that we augment with a subset of $(\ID)$, $(\MP)$
and $(\CEM)$. For each system, we apply Lemma \ref{lemma:adm-prop} and Lemma
\ref{lemma:adm-cut} to the given rule sets repeatedly to generate
new rules that are automatically sound over the original Hilbert
system. This procedure leads to the rules summarised in Figure
\ref{fig:conditional-rules} where we have used the following
notational shorthand to express the equivalences in the premise of
$\CK$:
\begin{notn} \label{notn:eq}
If $A_0, \dots, A_n \in \Form(\Lambda)$ are conditional formulas, we
write $A_0 = \dots = A_n$ for the sequence of sequents consisting of
$\neg A_0, A_i$ and $\neg A_i, A_0$ for all $1 \leq i \leq n$.
\end{notn}
\noindent
We now discuss the arising system in detail, and start with those
not containing $(\CEM)$
and then proceed to add $(\CEM)$ as an additional principle. 
\noindent
\begin{figure}
\setlength{\fboxsep}{2ex}
\setlength{\parskip}{\medskipamount}
\fbox{\begin{minipage}{15cm}
\[ (\CK_g) \frac{A_0 = \dots = A_n \quad \neg B_1, \dots, \neg B_n,
B_0}{\neg (A_1 \cto B_1), \dots, \neg (A_n \cto B_n), (A_0 \cto
B_0), \Gamma}
\] 

\vspace*{1ex}
\[ (\CKID_g) \frac{A_0 = \dots = A_n \quad \neg A_0, \neg B_1,
\dots, \neg B_n, B_0}
   {\neg (A_1 \cto B_1), \dots, \neg(A_n \cto B_n), (A_0 \cto B_0),
	 \Gamma}
\] 

\vspace*{1ex}
\[
  (\MP_g) 
	\frac{A, \neg(A \cto B), \Gamma \qquad \neg B, \neg (A \cto B),
	\Gamma}{\neg (A \cto B), \Gamma}
\] 

\vspace*{1ex}
\[ (\CK\CEM_g) \frac{A_0 = \dots = A_n \quad B_0, \dots, B_j, \neg B_{j+1}, \neg
B_n}{(A_0 \cto B_0), \dots, (A_j \cto B_j), \neg (A_{j+1} \cto
B_{j+1}), \dots, \neg (A_n \cto B_n), \Gamma} 
\quad(0 \leq j \leq n) \]

\vspace*{1ex}
\[ (\CK\CEM\ID_g) 
\frac{A_0 = \dots = A_n \quad \neg A_0, B_0, \dots, B_j, \neg B_{j+1}, \neg
B_n}{(A_0 \cto B_0), \dots, (A_j \cto B_j), \neg (A_{j+1} \cto
B_{j+1}), \dots, \neg (A_n \cto B_n), \Gamma} 
\]

\vspace*{1ex}
\[ (\MPEM_g) \frac{A, (A \cto B), \Gamma \qquad B, (A \cto B),
\Gamma}{(A \cto B), \Gamma} \] 
\end{minipage}}
\caption{Cut-Free Conditional Sequent Rules}
\label{fig:conditional-rules}
\end{figure}
For each system, we show cut-free completeness
and develop the format of the respective rules as we go along. In
summary, we obtain the following cut-free sequent calculi for
extensions of $(\CK)$ summarised in Figure
\ref{fig:conditional-calculi}.
\begin{figure}
%
\begin{tabular}{|l|l|} \hline
Hilbert System & Sequent System  \\ \hline \hline 
$\CK$  & $\CK_g$ \\ \hline
$\CKID$ &  $\CKID_g$ \\ \hline
$\CKMP$ & $\CK_g$ + $\MP_g$ \\ \hline
$\CKMP\ID$ & $\CKID_g$ + $\MP_g$ \\ \hline
\end{tabular}
\begin{tabular}{|l|l|} \hline
Hilbert System & Sequent System  \\ \hline \hline 
$\CKCEM$ & $\CKCEM_g$ \\ \hline
$\CKCEM\ID$ & $\CKCEM\ID_g$ \\ \hline
$\CKCEM\MP$ & $\CKCEM_g$ + $\MP_g$ + $\MPEM_g$ \\ \hline
$\CK\CEM\MP\ID$ & $\CKCEM\ID_g$ + $\MP_g$ + $\MPEM_g$  \\ \hline
\end{tabular}
\caption{Summary of Cut-Free Sequent Systems}
\label{fig:conditional-calculi}
\end{figure}

%
\subsection{Cut Elimination for Extensions of $\CK$ without $\CEM$}
We first treat extensions of the basic conditional logic $\CK$ with
axioms $\ID$ and $\MP$, but not including $\CEM$ and discuss $\CEM$
later, as the effect of adding $\CEM$ leads to a more general form
of the $\CK$ rule.

\noindent
If we absorb cuts using Lemmas \ref{lemma:adm-cut} and
\ref{lemma:adm-prop} we see that all
instances of 
\[ (\CK_g) \frac{A_0 = \dots = A_n \quad \neg B_1, \dots, \neg B_n,
B_0}{\neg (A_1 \cto B_1), \dots, \neg (A_n \cto B_n), (A_0 \cto
B_0), \Gamma}
\]
are admissible in $\Hilb\CK$. It is easy to see that the rule set
$\CK_g$ is actually absorbing:
\begin{thm} The rule set $\CK_g$ is absorbing and equivalent to
$\CK$. 
As a consequence, $\Gen\CK_g$ has cut-elimination and
$\Gen\CK_g \entails A$ iff $\Hilb\CK \entails A$ whenever $A \in
\Form(\Lambda)$.
\end{thm}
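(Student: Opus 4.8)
The plan is to establish the two substantive assertions — that $\CK_g$ is equivalent to $\CK$ and that $\CK_g$ is absorbing — and then read off the two displayed consequences from the general machinery. Once $\CK_g$ is shown absorbing, Theorem \ref{thm:cut-elim} immediately yields admissibility of cut in $\Gen\CK_g$, i.e.\ cut-elimination; and since $\CK_g$ is moreover substitution closed (being the set of all instances of a schema), Theorem \ref{thm:gen-hilb-equiv} gives $\Gen\CK_g \entails \Gamma \iff \Hilb\CK_g \entails \Lor\Gamma$, which together with the equivalence of $\CK_g$ and $\CK$ (and the lemma following Definition \ref{defn:admissible}) specialises to $\Gen\CK_g \entails A \iff \Hilb\CK \entails A$ for a singleton sequent $A$. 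So all the real work is in the first two assertions.

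For the equivalence I would argue admissibility in both directions. That every instance of $\CK_g$ is admissible in $\Hilb\CK$ is exactly what the repeated application of Lemmas \ref{lemma:adm-cut} and \ref{lemma:adm-prop} used to generate the rule delivers, so here I would only record that derivation. Conversely, to see that $\CK$ is admissible in $\Hilb\CK_g$ I would recover the two generating rules of $\CK$ as special cases of $\CK_g$: taking all antecedents equal, $A_0 = \dots = A_n = A$, makes the equivalence premises $A_0 = \dots = A_n$ propositional tautologies and turns $\CK_g$ into $(\mathsf{RCK})$; and the instance with $n = 1$ and $B_0 = B_1 = B$ yields one direction of $(\mathsf{RCEA})$ from the premise $A \otto A'$, the other direction following symmetrically.

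Absorption of the structural rules (Definition \ref{defn:structural}) I expect to be routine, following the pattern of Example \ref{example:absorbing-rule-sets}. Weakening is absorbed because of the free context $\Gamma$ in the conclusion. Inversion can only act on a formula of shape $\neg\neg C$, $\neg(C\land D)$ or $(C\land D)$, and since the principal formulas of $\CK_g$ are conditional literals, any such formula must lie in $\Gamma$; the inverted sequent is then obtained by reapplying $\CK_g$ with the correspondingly modified context. Contraction of a duplicated conditional literal $\neg(A_i\cto B_i)=\neg(A_j\cto B_j)$ is absorbed by reapplying $\CK_g$ with that index dropped, the contracted consequent premise and the redundant equivalence premises being supplied by the relevant $\Struct(\Gamma_\ell)$; duplicates within $\Gamma$ are again handled by the free context.

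The heart of the proof, and the step I expect to be the main obstacle, is absorption of cut (Definition \ref{defn:cut-absorption}). The only cut formulas relevant between two conclusions of $\CK_g$ are conditional literals; when the cut formula meets the context of the other rule the cut is discharged by weakening, so the essential case is a principal–principal cut, where the positive conditional $(A_0\cto B_0)$ of $r_1$ is cut against a negated principal conditional of $r_2$, forcing the two antecedents and consequents to coincide literally. To rebuild the target sequent $\Gamma_0,\Delta_0$ by a single application of $\CK_g$ producing the surviving positive conditional $(A'_0\cto B'_0)$ of $r_2$, I must merge the two bundles of negated antecedents into one bundle all provably equivalent to the single surviving antecedent $A'_0$, and merge the two consequent premises. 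The consequent merge is a single cut on $B_0$; the antecedent merge amounts to chaining the equivalence premises of the two instances through $A_0$, producing the required sequents $\neg A_i, A'_0$ and $\neg A'_0, A_i$. The crucial point — and the reason the construction stays within $\Cut(A,r_1,r_2)$ — is that all these auxiliary cuts are on the strictly smaller formulas $A_0$ and $B_0$, subformulas of the cut formula $A_0\cto B_0$ and hence permitted, while the premises themselves are available as axioms; the final $\CK_g$ step then carries the combined context $\Gamma,\Gamma'$. The incidental uses of weakening and contraction needed to line up the multisets are covered by Proposition \ref{propn:struct-admissible}.
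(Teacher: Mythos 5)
Your proposal is correct and follows essentially the same route as the paper, which disposes of admissibility via Lemmas \ref{lemma:adm-cut} and \ref{lemma:adm-prop} and declares absorption ``analogous to the modal logic $K$'' (Example \ref{example:K}) before invoking Theorem \ref{thm:gen-hilb-equiv}. You merely spell out what the paper leaves to analogy --- the converse admissibility of $\CK$ in $\Hilb\CK_g$ via the instances recovering $(\mathsf{RCEA})$ and $(\mathsf{RCK})$, and the principal--principal cut case, where your observation that the auxiliary cuts on the antecedents $A_i$ and the consequent $B_0$ are on strictly smaller formulas is exactly the point that makes the construction stay within $\Cut(A,r_1,r_2)$.
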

\begin{proof}
Using Lemmas \ref{lemma:adm-cut} and Lemma \ref{lemma:adm-prop} it is
immediate that the rule set $\CK_g$ is admissible in $\Hilb\CK$. The
argument that shows that $\CK_g$ is absorbing is analogous to that for
the modal logic $K$ (Example \ref{example:K}), and the result follows from Theorem
\ref{thm:gen-hilb-equiv}.
\end{proof}
%
%
%
\noindent
The logic $\CKID$ arises form $\CK$ by adding the identity axiom $A
\cto A$ to the rule set $\CK_H$ that axiomatises standard
conditional logic. Applying Lemma \ref{lemma:adm-cut} to the two rule
instances 
\[ \frac{A = A \quad \neg A, B}{\neg(A \cto A), (A \cto B)}
   \qquad
	 \frac{}{A \cto A}
\]
yields the (admissible) rule
\[ \frac{\neg A, B}{A \cto B}. \]
Again applying the same lemma, this time to a general instance of
$(\CK)$ and the rule that we just derived, that is,
\[ \frac{A_0 = C = A_1\dots = A_n \quad \neg D, \neg B_1, \dots,
\neg B_n, B_0}
   {\neg (C \to D), \neg(A_1 \cto B_1), \dots, \neg(A_n \cto B_n)}
	 \qquad
	 \frac{\neg C, D}
	 {C \cto D}
\]
now gives the rule
\[ (\CKID_g) \frac{A_0 = \dots = A_n \quad \neg A_0, \neg B_1,
\dots, \neg B_n, B_0}
   {\neg (A_1 \cto B_1), \dots, \neg(A_n \cto B_n), (A_0 \cto B_0),
	 \Gamma}
\] that can be again seen to be admissible by Lemma
\ref{lemma:adm-prop}.
It is easy to see that both $(\CK)$ and $(\ID)$ are derivable under
$(\CKID)$, and we note that $(\CKID_g)$ is admissible by construction.
If we denote the rule set
consisting of all instances of $\CKID$  by $\CK\ID_g$, we
obtain:
\begin{propn} \label{propn:ckid}
The rule set $\CKID_g$ is absorbing and equivalent to $\CK\ID$.
\end{propn}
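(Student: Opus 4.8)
The plan is to treat the two claims---equivalence with $\CK\ID$ and absorption---separately, since the stated cut elimination and the Hilbert/Gentzen correspondence then follow from Theorems~\ref{thm:cut-elim} and~\ref{thm:gen-hilb-equiv}.

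For equivalence I would use Definition~\ref{defn:admissible} and check admissibility in both directions. That $\CKID_g$ is admissible in $\Hilb(\CK\ID)$ is exactly the derivation given just before the proposition, obtained by applying Lemmas~\ref{lemma:adm-cut} and~\ref{lemma:adm-prop} to $(\CK)$, $(\ID)$ and the intermediate rule $\neg A, B/A\cto B$. For the converse, I would recover the generators $(\mathsf{RCEA})$, $(\mathsf{RCK})$ and $(\ID)$ of $\CK\ID$ as instances of $\CKID_g$: choosing $n=0$ and $B_0=A_0$ makes the right premise $\neg A_0,A_0$ a tautology and yields $A_0\cto A_0$, i.e.\ $(\ID)$; choosing all antecedents equal to a fixed $A$ makes the equivalence premises tautologous and reproduces $(\mathsf{RCK})$, the spurious extra disjunct $\neg A$ in the premise being discharged by weakening; and choosing $n=1$, $B_0=B_1=B$ turns the equivalence premises into precisely $A\otto A'$ and the right premise $\neg A',\neg B,B$ into a tautology, reproducing $(\mathsf{RCEA})$.

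Absorption of the structural rules I would argue exactly as for $\CK_g$ and $K$ (Examples~\ref{example:absorbing-rule-sets} and~\ref{example:K}): weakening is absorbed by the context $\Gamma$ in the conclusion; inversion only ever acts inside $\Gamma$, because the principal formulas are conditional literals $\neg(A_i\cto B_i)$ and $(A_0\cto B_0)$ and so are never of the form $\neg\neg C$, $\neg(C\land D)$ or $C\land D$; and contraction is nontrivial only when two negative principals coincide, $(A_i\cto B_i)=(A_j\cto B_j)$, in which case I pass to the instance on the index set with $j$ removed, whose right premise arises from $\neg A_0,\neg B_1,\dots,\neg B_n,B_0$ by contracting the duplicate $\neg B_j=\neg B_i$, as allowed by Definition~\ref{defn:structural}.

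The real work is absorption of cut, which I expect to be the main obstacle. For two instances $(r_1)$, $(r_2)$ of $\CKID_g$ with conclusions $A,\Gamma_0$ and $\neg A,\Delta_0$, the cut formula $A$ is a conditional literal; when it lies in a context, $\Gamma_0,\Delta_0$ follows by applying the relevant rule with a shrunk context and weakening, so the essential case is $A=(A_0\cto B_0)$ the positive principal of $r_1$ and $\neg A=\neg(C_j\cto D_j)$ (hence $C_j=A_0$, $D_j=B_0$) a negative principal of $r_2$. I would derive $\Gamma_0,\Delta_0$ by a single application of $\CKID_g$ with positive principal $(C_0\cto D_0)$ and the merged list of negative principals, building its premises from the premises of $r_1$ and $r_2$---which are available as axioms in $\Cut(A,r_1,r_2)$---using cuts only on $A_0$ and $B_0$ together with contraction. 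Concretely, the equivalence premises of the merged rule, which must make every surviving antecedent equivalent to $C_0$, are obtained by composing the two equivalence chains across the shared antecedent $A_0=C_j$ by cuts on $A_0$; and the new right premise is obtained by cutting the two right premises on $B_0=D_j$ and then replacing the surviving $\neg A_0$ by $\neg C_0$ via a cut on $A_0$ against the premise $\neg C_0,A_0$, followed by a contraction. Since $\size(A_0),\size(B_0)<\size(A_0\cto B_0)=\size(A)$, all these cuts are licensed by $\Cut(A,r_1,r_2)$. The one genuinely new point over the $\CK_g$ argument is the bookkeeping for the extra antecedent literal $\neg A_0$ contributed by $(\ID)$, together with checking that it can always be converted into $\neg C_0$ by a single smaller cut; I expect this, rather than anything structural, to be the crux.
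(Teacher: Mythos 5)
Your proposal is correct and follows essentially the same route as the paper's proof: the crux is the cut between the positive principal $(A_0\cto B_0)$ of one instance and a negative principal of the other, resolved by splicing the two equivalence chains with cuts on the antecedents, cutting the right premises on $B_0$, converting the surviving $\neg A_0$ into $\neg C_0$ by one more small cut plus contraction, and closing with a single application of $(\CKID_g)$. You supply more detail than the paper on the equivalence and structural-absorption parts (which it dismisses as easy), but the substance is identical.
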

\begin{proof}
It is easy to see that $\CKID_g$ absorbs the structural rules, and
that $\CK\ID$ is equivalent to $\CKID_g$. 

To see that $\CKID_g$ absorbs cut, we consider two instances of
$(\CKID_g)$, say
\[
  (r_1) \frac{A_0 = \dots = A_n \quad \neg B_1, \dots, \neg B_n,
	B_0}
	{\neg (A_1 \cto B_1), \dots, \neg(A_n \cto B_n), (A_0 \cto B_0),
	\Gamma}
\]
and
\[
  (r_2) \frac{C_0 = \dots = C_k \quad \neg D_1, \dots, \neg D_k,
	D_0}
	{\neg (C_1 \cto D_1), \dots, \neg(C_k \cto D_k), (C_0 \cto D_0),
	\Delta}
\]
and assume that the cut happens on $F \in \Form(\Lambda)$. The case
where $F \in \Gamma, \Delta$ is straightforward, so assume without
loss of generality that $F = (A_0 \cto B_0) = (C_1  \cto D_1)$. By
converting the equalities in the premise, and repeatedly applying
cut on $A_0 \equiv C_1$ we obtain
\[ \frac{C_0, \neg C_i \qquad A_0, \neg A_i}{C_0, \neg A_i}
\Cut(A_0 \equiv C_1)
\qquad
\frac{\neg C_0, C_i \qquad \neg A_0, A_i}{\neg C_0, A_i} \Cut(A_0
\equiv C_i)
\]
so that we obtain the derivability of
\[ \Sigma_1 = C_0 = A_1 = \dots = A_n = C_2 = \dots = C_k \]
in $\Gen\CK\ID_g + \Cut(F, r_1, r_2)$ (recall Notation \ref{notn:eq}). The derivation
\begin{prooftree}
\AxiomC{$\neg A_0, \neg B_1, \dots, \neg B_n, B_0$}
\AxiomC{$\neg C_0,  \neg D_1, \dots, \neg D_k, D_0$}
\RightLabel{$\Cut(B_0 \equiv D_1)$}
\BinaryInfC{$\neg A_0, \neg B_1, \dots, \neg B_n, \neg C_0, \neg
    D_2, \dots, \neg D_k, D_0$} 
\AxiomC{\!\!\!\!\!\!\!$\neg C_0, C_1$}
\RightLabel{$\Cut(C_1 \equiv A_0)$}
\BinaryInfC{$\neg C_0, \neg C_0, \neg B_1, \dots, \neg B_n, \neg
   D_2, \dots, \neg D_k, D_0$}
\UnaryInfC{$\neg C_0, \neg B_1, \dots, \neg B_n, \neg
   D_2, \dots, \neg D_k, D_0$}
\end{prooftree}
where contraction on $(C_0)$ was applied in the last step, shows
that 
\[ \Sigma_2 = \neg C_0, \neg B_1, \dots, \neg B_n, \neg
D_2, \dots, \neg D_k, D_0 \]
is derivable in $\Gen\CKID_g+ \Cut(F, r_1, r_2)$, and applying
$\CKID_g$ to $\Sigma_1$ and $\Sigma_2$ gives (cut-free) derivability of 
the desired sequent $\neg(A_1 \cto B_1), \dots, \neg(A_n \cto B_n),
\neg (C_2 \cto D_2), \dots, \neg (C_k \cto D_k), (C_0 \cto D_0),
\Gamma, \Delta$.
This completes the case distinction on $F$ and hence the proof of
the proposition.
\end{proof}
\noindent
Before we move to the next system, we briefly demonstrate the
derivation of the identity axiom in $\CKID_g$.
\begin{ex}
It is easy to say that $\Gen\CKID_g \entails A \cto A$ for all $A
\in \Form(\Lambda)$: we pick $n = 0$ to obtain the following
instance of $\CKID_g$
\[ \frac{\neg A, A}{A \cto A} \]
and note that the premise is in fact an axiom.
\end{ex}
%
%
The logic $\CKMP$ arises by augmenting the logic $\CK$ with the
additional axiom $(A \cto B) \to (A \to B)$. We briefly sketch the
construction of the additional axiom that gives rise to the rule
$(\MP_g)$ that we will use to establish cut-free completeness.

We consider a cut between an instance of $(\CK_g)$ and $(\MP)$, that
is, we have the derivation
\begin{prooftree}
\AxiomC{$A_0 = A_1 \qquad \neg B_1, B_0$}
\UnaryInfC{$\neg(A_1 \cto B_1), (A_0 \cto B_0$)}
\AxiomC{$\neg(A_0 \cto B_0), A_0 \to B_0$}
  \RightLabel{$\Cut(A_0 \cto B_0)$}
\BinaryInfC{$\neg (A_1 \cto  B_1), A_0 \to B_0$}
\end{prooftree}
that leads to the rule
\[ \frac{A_0 = A_1 \qquad \neg B_1, B_0 \qquad C = (A_0 \to B_0)}
   {(A_1 \cto B_1), C}
\]
by putting $C = A_0 \to B_0$. By Lemma \ref{lemma:adm-cut}, this
rule is admissible, and by Lemma \ref{lemma:adm-prop} so is the rule
\[
  \frac{A_1, C \qquad \neg B_1, C}{\neg(A_1 \cto B_1), C}
\]
and absorbing the structural rules (in particular contraction on $A
\cto B$ and inversion) leads to the general form
\[
  (\MP_g) 
	\frac{A, \neg(A \cto B), \Gamma \qquad \neg B, \neg (A \cto B),
	\Gamma}{\neg (A \cto B), \Gamma}
\] where we have elided the subscripts.
The effect of adding $(\MP)$ is similar to that of enriching the
modal logic $\K$ with the $(\T)$-axiom. 
We denote the rule set consisting of all instances of $\CK_g$
and $\MP_g$ by $\CKMP_g$. Our cut elimination theorem then takes the
following form:
\begin{propn} \label{thm:ckmp}
The rule set $\CKMP_g$ is absorbing and equivalent to $\CK\MP$.
\end{propn}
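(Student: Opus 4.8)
The plan is to reduce everything to Theorem~\ref{thm:gen-hilb-equiv}: since $\CKMP_g$ is substitution closed by its schematic presentation, it suffices to show that it is absorbing and equivalent to $\CK\MP$, whence cut elimination and $\Gen\CKMP_g\entails A\iff\Hilb(\CK\MP)\entails A$ follow at once. Equivalence I would establish by admissibility in both directions, exactly along the lines of Proposition~\ref{propn:ckid}. That $\CK_g$ and $\MP_g$ are admissible in $\Hilb(\CK\MP)$ is built into their construction via Lemmas~\ref{lemma:adm-cut} and~\ref{lemma:adm-prop} (for $\CK_g$ this is the theorem on $\CK$, for $\MP_g$ it is the derivation displayed just before the proposition). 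Conversely, the $\CK$-rules are derivable from $\CK_g$, and the axiom $(\MP)$ --- read as the sequent $\neg(A\cto B),\neg A,B$ --- is obtained by a single application of $\MP_g$ with context $\Gamma=\neg A,B$, both of whose premises $A,\neg(A\cto B),\neg A,B$ and $\neg B,\neg(A\cto B),\neg A,B$ are axioms; Proposition~\ref{propn:gen-hilb} then lifts this to $\Hilb\CKMP_g\entails(\MP)$.

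Absorption of the structural rules is the routine half. For the $\CK_g$-rules it proceeds as for $K$ in Example~\ref{example:K}; for $\MP_g$ the weakening context $\Gamma$ absorbs weakening and inversion, while contraction is absorbed because the principal literal $\neg(A\cto B)$ already occurs in both premises --- the same phenomenon as for $(\T)$ and $(\Kfour)$ in Example~\ref{example:absorbing-rule-sets}. Admissibility of weakening, contraction and inversion in $\Gen\CKMP_g$ is then immediate from Proposition~\ref{propn:struct-admissible}.

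The substance is absorption of cut, which I would organise by the types of the two rules whose conclusions are cut. Cuts between two $\CK_g$-conclusions are absorbed as for $\CK$, and cuts between two $\MP_g$-conclusions follow the $(\T)$-versus-$(\T)$ pattern, the principal literal being negative (so that the cut either lies in the weakening contexts or pits $\neg(A\cto B)$ against a context occurrence of $(A\cto B)$). The decisive case, mirroring the $\K$-versus-$(\T)$ cut on $\Box A_0$ in Example~\ref{example:K}, is a cut on a conditional literal $F=(A_0\cto B_0)$ between the positive conclusion of an instance $r_1$ of $\CK_g$ and the negative principal literal of an instance $r_2$ of $\MP_g$. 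Writing $\Sigma=\neg(A_1\cto B_1),\dots,\neg(A_n\cto B_n),\Gamma$ for the conclusion of $r_1$ with $F$ removed, the clause of Definition~\ref{defn:cut-absorption} that pairs the conclusion of $r_1$ with a premise of $r_2$ makes $\Sigma,A_0,\Gamma'$ and $\Sigma,\neg B_0,\Gamma'$ available as axioms of $\Cut(F,r_1,r_2)$. I would then cut the structurally smaller $B_0$ against the $\CK_g$-premise $\neg B_1,\dots,\neg B_n,B_0$ to obtain $\Sigma,\neg B_1,\dots,\neg B_n,\Gamma'$, and cut the smaller formulas $A_0$ against the equivalences $A_0=A_i$ (Notation~\ref{notn:eq}) to obtain $\Sigma,A_i,\Gamma'$ for each $i$. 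Finally I would discharge the literals $\neg B_i$ one at a time by reapplying $\MP_g$ on $(A_i\cto B_i)$, using $\Sigma,A_i,\Gamma'$ (suitably weakened) as left premise and the current sequent as right premise, until only the goal $\Sigma,\Gamma'=\Gamma_0,\Delta_0$ survives.

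The step I expect to be the main obstacle is precisely this final elimination of the $\neg B_i$: I must check that the contexts of the successive $\MP_g$-applications match, that every auxiliary cut is on a formula of size strictly below $\size(A_0\cto B_0)$ so that the outer induction of Theorem~\ref{thm:cut-elim} legitimises it, and that the freely used weakening and contraction are covered by Proposition~\ref{propn:struct-admissible}. This is where the two-premise rule $\MP_g$ genuinely differs from the single-premise $(\T)$, and where the bookkeeping, though conceptually undemanding, needs care.
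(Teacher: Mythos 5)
Your proposal is correct and follows essentially the same route as the paper: equivalence via the admissibility lemmas, structural absorption via the context $\Gamma$ and the repeated principal literal, and for the decisive cut $F=(A_0\cto B_0)$ between a $\CK_g$-conclusion and the principal literal of $\MP_g$, exactly the paper's mechanism of taking $\Sigma,A_0$ and $\Sigma,\neg B_0$ as $\Cut(F,r_1,r_2)$-axioms, cutting the smaller $B_0$ and $A_0$ against the $\CK_g$-premises, and then discharging the residual $\neg B_i$ one at a time by re-applying $\MP_g$ with contraction and weakening. The bookkeeping you flag as the main obstacle is indeed where the paper's own write-up expends its effort (iterating through $\Sigma_3,\Sigma_4,\dots$), but no new idea is needed there.
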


\begin{proof}
It is clear that both $(\CK_g)$ and $(\MP_g)$ absorb the structural
rules. For cut, we first 
consider cuts between two instances of $(\MP_g)$, say
\[
  (r_1) \frac{A, \neg(A \cto B), \Gamma \quad \neg B, \neg (A \cto
	B), \Gamma}{\neg (A \cto B), \Gamma}
	\qquad
  (r_2) \frac{C, \neg (C \cto D), \Delta \quad \neg D, \neg(C \cto
	D), \Delta}{\neg (C \cto D), \Delta}
\]
where the cut happens on $F \in \Form(\Lambda)$. We distinguish
several cases:

\emph{Case $F = (A \cto B)$ and $F \in \Delta$.} Then $\Delta = (A
\cto B), \Delta'$ for some $\Delta' \in \Seq(\Lambda)$. To eliminate the cut
on $C$, we note that the following two derivations 
\begin{prooftree}
\AxiomC{$A, \neg (A \cto B), \Gamma \quad \neg B, \neg (A \cto B), \Gamma$}
\RightLabel{$(\MP)$}
\UnaryInfC{$\neg (A \cto B), \Gamma$} \AxiomC{$C, \neg (C \cto D), (A \cto B), \Delta'$}
\RightLabel{(cut ($F$))}
\BinaryInfC{$C, \neg (C \cto D), \Gamma, \Delta'$}
\end{prooftree}
and
\begin{prooftree}
\AxiomC{$A, \neg (A \cto B), \Gamma \quad \neg B, \neg (A \cto B), \Gamma$}
\RightLabel{$(\MP)$)}
\UnaryInfC{$\neg (A \cto B), \Gamma$} \AxiomC{$\neg D, \neg (C \cto D), (A \cto B), \Delta'$}
\RightLabel{(cut ($F$))}
\BinaryInfC{$\neg D, \neg (C \cto D), \Gamma, \Delta'$}
\end{prooftree}
witness that we can use both $C, \neg (C \cto D), \Gamma, \Delta'$
and $\neg D, \neg (C \cto D), \Gamma, \Delta'$ as axioms in $\CKMP_g
+ \Cut(F, r_1, r_2)$ as the cuts occur between the premises of
$(r_2)$ and conclusions of $(r_1)$. Applying $(\MP_g)$ to these
axioms, we obtain that $\CKMP_g + \Cut (F, r_1, r_2) \entails \neg
(C \cto D), \Gamma, \Delta'$.

\emph{Case $F = (C \cto D)$ and $F \in \Gamma$.} This is symmetric to
the case above.

\emph{Case $F \in \Gamma$ and $\neg F \in \Delta$.} Then $\Gamma =
\Gamma', F$ and $\Delta = \Delta', \neg F$. We have to show that
\[ \neg (A \cto B), \neg (C \cto D), \Gamma', \Delta' \]
is derivable in $\Cut(F, r_1, r_2)$. We note that the deduction
\begin{prooftree}
\AxiomC{$A, \neg (A \cto B), F, \Gamma'$} 
\AxiomC{$C, \neg(C \cto D), \neg F, \Delta'$} 
\RightLabel{$(\Cut(F))$}
\BinaryInfC{$A, \neg (A \cto B), C, \neg (C \cto D), \Gamma', \Delta'$}
\end{prooftree}
witnesses that we may use $A, \neg (A \cto B), C, \neg (C \cto D),
\Gamma', \Delta'$ as an axiom in the system $\CKMP_g + \Cut(F, r_1,
r_2)$ as the cut on $F$ has occurred between premises of $r_1$ and
$r_2$. The same deduction, with $C$ replaced by $\neg D$ throughout,
witnesses that this is also the case for 
$A, \neg (A \cto B), \neg D, \neg (C \cto D), \Gamma', \Delta'$.
An application of $(\MP_g)$ now yields $\CKMP_g + \Cut(F, r_1, r_2)
\entails  \neg (C \cto D), A, \neg (A \cto B), \Gamma', \Delta'$.

By the symmetric argument (just replace $A$ by $\neg B$) we obtain that
also  $\CKMP_g + \Cut(F, r_1, r_2)
\entails  \neg (C \cto D), \neg B, \neg (A \cto B), \Gamma',
\Delta'$ and  an application of $(\MP_g)$ now yields
 $\CKMP_g + \Cut(F, r_1, r_2)
 \entails  \neg (C \cto D), \neg (A \cto B), \Gamma', \Delta'$
 as required.
%
%

What is left is to consider cuts, say on $F \in \Form(\Lambda)$,  between the conclusions of the
rules
\begin{align*}  (r_1) & \frac{A_0 = \dots = A_n \quad \neg B_1, \dots, \neg B_n,
B_0}{\neg (A_1 \cto B_1), \dots, \neg (A_n \cto B_n), (A_0 \cto
B_0), \Gamma} \\[2ex]
	(r_2) & \frac{C, \neg (C \cto D), \Delta \quad \neg D, \neg (C \cto
	D), \Delta}{\neg (C \cto D), \Delta}
\end{align*}
As before, we need to discuss several cases.

\emph{Case $F \in \Gamma$ or $\neg F \in \Gamma$.} Trivial, as the
conclusion of the cut can be derived using a different weakening
context $\Gamma$.

\emph{Case $F = (A_i \cto B_i)$ for some $1 \leq i \leq n$.} We assume without
loss of generality that $i = 1$ and have that $F  = (A_1 \cto B_1)
\in
\Delta$ so that $\Delta  = \Delta', F$. To replace the cut on $F$,
we consider the deduction 
{\small
\begin{prooftree}
\AxiomC{$A_0 = \dots = A_n \quad \neg B_1, \dots, \neg B_n, B_0$}
\RightLabel{(CK)}
\UnaryInfC{$\neg (A_1 \cto B_1), \dots, \neg(A_n \cto B_n), (A_0
\cto B_0), \Gamma$}
\AxiomC{$C, \neg (C \cto D), (A_1 \cto B_1), \Delta'$}
\RightLabel{(cut ($F$))}
\BinaryInfC{$\neg (A_2 \cto B_2), \dots, \neg (A_n \cto B_n), (A_0
\cto B_0), C, \neg (C \cto D), \Gamma, \Delta'$}
\end{prooftree}}
\noindent
which witnesses that we may use
\[
  \Sigma_1 = \neg (A_2 \cto B_2), \dots, \neg (A_n \cto B_n), (A_0
\cto B_0), C, \neg (C \cto D), \Gamma, \Delta' \]
as an axiom in
$\Gen\CKMP_g + \Cut(F, r_1, r_2)$. The above deduction, with $C$
replaced by $\neg D$ throughout, witnesses that the same is true for
\[ \Sigma_2 =  \neg (A_2 \cto B_2), \dots, \neg (A_n \cto B_n), (A_0
\cto B_0), \neg D, \neg (C \cto D), \Gamma, \Delta' \]
and applying $(\MP_g)$ with premises $\Sigma_1$ and $\Sigma_2$
yields $\Gen\CKMP_g + \Cut(F, r_1, r_2) \entails
\neg (A_2 \cto B_2), \dots, \neg (A_n \cto B_n), (A_0 \cto B_0),
\neg (C \cto D), \Gamma, \Delta'$ as required.

\emph{Case $F = (A_0 \cto B_0) = (C \cto D)$.} We have to give a
derivation of $\neg (A_1 \cto B_1), \dots, \neg (A_n \cto B_n),
\Gamma, \Delta$ in $\Cut(F, r_1, r_2)$. The deduction
\begin{prooftree}
\AxiomC{$A_0 = \dots = A_n$} \AxiomC{$\neg B_1, \dots, \neg B_n, B_0$}
\RightLabel{(CK)}
\BinaryInfC{$\neg (A_1 \cto B_1), \dots, \neg (A_n \cto B_n), (A_0 \cto B_0), \Gamma$}
\AxiomC{$\neg B_0, \neg (A_0 \cto B_0), \Delta$}
\RightLabel{(Cut ($F$))}
\BinaryInfC{$\neg (A_1 \cto B_1), \dots, \neg (A_n \cto B_n), \neg B_0, \Gamma, \Delta$}
\end{prooftree}
witnesses that we may use 
\[ \Sigma_1 = \neg (A_1 \cto B_1), \dots, \neg (A_n \cto B_n), \neg
B_0, \Gamma, \Delta \]
as an axiom in $\Gen\CKMP_g + \Cut(F, r_1, r_2)$ as the cut on
$F$ occurs between a conclusion of $(r_1)$ and a premise of $(r_2)$.
The same derivation, with $\neg B_0$ replaced by $A_0$ shows that
the same is true for
\[ \Sigma_2 = \neg (A_1 \cto B_1), \dots, \neg (A_n \cto B_n), A_0,
\Gamma, \Delta. \]
We therefore have the two derivations
\begin{prooftree}
\AxiomC{$\Sigma_1$}
\AxiomC{$\neg B_1, \dots, \neg B_n, B_0$}
\RightLabel{(Cut ($B_0$))}
\BinaryInfC{$\neg(A_1 \cto B_1), \dots, \neg (A_n \cto B_n), \neg
B_1, \dots, \neg B_n, \Gamma, \Delta$}
\end{prooftree}
and 
\begin{prooftree}
\AxiomC{$\Sigma_2$}
\AxiomC{$\neg A_0, A_1$}
\RightLabel{(w)}
\UnaryInfC{$\neg A_0, A_1, B_2, \dots, B_n$}
\RightLabel{(Cut ($A_0$))}
\BinaryInfC{$\neg(A_1 \cto B_1), \dots, \neg (A_n \cto B_n), A_1,
B_2, \dots, B_n, \Gamma, \Delta$}
\end{prooftree}
in $\Gen\CKMP_g + \Cut(F, r_1, r_2)$. Applying $(\MP_g)$ to the
conclusions of both yields that
$\Gen\CKMP_g + \Cut(F, r_1, r_2) \entails \Sigma_3$ where

\[
  \Sigma_3 = \neg (A_1, \cto B_1), \dots, \neg (A_n \cto B_n), \neg B_2,
\dots, \neg B_n, \Gamma, \Delta \]
as $\Gen\CKMP_g + \Cut(F, r_1, r_2)$ contains the contraction rule.
We now iterate the same scheme, where we use weakening on a
successively smaller subset of $B_2, \dots, B_n$. First, we note
that 
\begin{prooftree}
\AxiomC{$\Sigma_2$}
\AxiomC{$\neg A_0, A_2$}
\RightLabel{(w)}
\UnaryInfC{$\neg A_0, A_2, \neg B_3, \dots, \neg B_n$}
\RightLabel{(Cut ($A_0$))}
\BinaryInfC{$\neg(A_1 \cto B_1), \dots, \neg (A_n \cto B_n), A_2,
\neg B_3, \dots, \neg B_n, \Gamma, \Delta$}
\end{prooftree}
is a derivation in $\Gen\CKMP_g + \Cut(F, r_1, r_2)$ and applying
$(\MP_g)$ to $\Sigma_3$ and its conclusion yields $\Gen\CK\MP_g +
\Cut(F, r_1, r_2) \entails \Sigma_4$ where
\[
  \Sigma_4 = \neg(A_1 \cto B_1), \dots, \neg(A_n \cto B_n), \neg
	B_3, \dots, \neg B_n
\]
Iterating this scheme, we finally obtain 
\[\Gen\CKMP_g + \Cut(F, r_1, r_2) \entails
\neg (A_1 \cto B_1), \dots, \neg (A_n \cto B_n),
\Gamma, \Delta.
\]
Note that weakening and cuts on formulas of size $< \size(F)$ is
admissible in $\Gen\CKMP_g + \Cut(F, r_1, r_2)$.
%
%
%
%

\emph{Case $F = (A_0 \cto B_0)$ and $\neg F  \in \Delta$.} We have that
$\Delta = \neg (A_0 \cto B_0) , \Delta'$ and  the deduction
{\small
\begin{prooftree}
\AxiomC{$A_0 = \dots = A_n$} \AxiomC{$\neg B_1, \dots, \neg B_n, B_0$}
\RightLabel{(CK)}
\BinaryInfC{$\neg (A_1 \cto B_1), \dots, \neg (A_n \cto B_n), (A_0
\cto B_0), \Gamma$}
\AxiomC{$C, \neg (C \cto D), \neg (A_0 \cto B_0), \Delta'$}
\RightLabel{$(\Cut (F))$}
\BinaryInfC{$\neg (A_1 \cto B_1), \dots, \neg (A_n \cto B_n), C, \neg (C \cto D), \Gamma, \Delta'$}
\end{prooftree}}
\noindent witnesses that we may use
\[ \Sigma_2 = \neg (A_1 \cto B_1), \dots, \neg (A_n \cto B_n), C,
\neg (C \cto D), \Gamma, \Delta' \]
as an axiom in $\Gen\CKMP_g + \Cut(F, r_1, r_2)$. The same
derivation, with $C$ replaced by $\neg D$, shows that the same is
true for
\[ \Sigma_2 = \neg (A_1 \cto B_1), \dots, \neg (A_n \cto B_n), \neg
D, \neg (C \cto D), \Gamma, \Delta' \]
and applying $(\MP_g)$ with premises $\Sigma_1$ and $\Sigma_2$
yields the claim $\Gen\CKMP_g + \Cut(F, r_1, r_2) \entails \neg (A_1
\cto B_1), \dots, \neg (A_n \cto B_n), \Gamma, \Delta'$.
%
%
This finishes our analysis of cuts that may arise between
conclusions of the $(\CK_g)$ and the $(\MP_g)$-rule, and hence the proof.
\end{proof}
\noindent
As an example, we give a derivation of $(\MP)$ in the system
$\CKMP_g$ (in fact, a single application of $(\MP_g)$ suffices).
\begin{ex}
If expressed solely in terms of $\land$ and $\neg$, conditional
modus ponens takes the form $\neg ( (A \cto B) \land A \land \neg
B)$. The following derivation establishes that $(\MP)$ is derivable
in the above form
\begin{prooftree}
\AxiomC{$A, \neg(A \cto B), \neg A, B$}
\AxiomC{$\neg B, \neg(A \cto B), \neg A, B$}
\RightLabel{$(\MP_g)$}
\BinaryInfC{$\neg (A \cto B), \neg A, B$}
\RightLabel{$(\neg\neg)$}
\UnaryInfC{$\neg (A \cto B), \neg A, \neg \neg B$}
\RightLabel{$(\neg\land)$}
\UnaryInfC{$\neg(A \cto B), \neg(A \land \neg B)$}
\RightLabel{$(\neg\land)$}
\UnaryInfC{$\neg( (A \cto B) \land A \land \neg B)$}
\end{prooftree}
so that $(\MP)$ is derivable in $\CKMP_g$.
\end{ex}
\noindent
We now consider the logic that arises by adding both
conditional modus ponens $(A \cto B) \to (A \to B)$ and the identity
axiom $A \cto A$ to the logic $\CK$. In line with our naming
conventions, this logic is called $\CK\MP\ID$. To obtain a cut-free
axiomatisation of this logic, we consider the rule set $\CKMP\ID_g$
containing all instances of $\CKID_g$ and $\MP_g$. A close
inspection of the proof of Proposition \ref{thm:ckmp} gives that
$\CKMP\ID_g$ is absorbing, and therefore cut-free complete.

\begin{propn} \label{propn:ckmpid}
The rule set $\CK\MP\ID_g$ is absorbing and equivalent to
$\CK\MP\ID$.
\end{propn}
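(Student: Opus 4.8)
The plan is to establish both assertions---that $\CK\MP\ID_g$ is absorbing and that it is equivalent to $\CK\MP\ID$---by recombining the case analyses already carried out for its two constituents, since by definition $\CK\MP\ID_g$ consists of all instances of $\CKID_g$ together with all instances of $\MP_g$. Equivalence is the routine half. Each instance of $\CKID_g$ and of $\MP_g$ arises from the axioms and rules of $\CK\MP\ID$ through the sound constructions of Lemma \ref{lemma:adm-cut} and Lemma \ref{lemma:adm-prop}, exactly as in the derivations preceding Proposition \ref{propn:ckid} and Proposition \ref{thm:ckmp}, so both rules are admissible in $\Hilb{\CK\MP\ID}$; conversely $(\mathsf{RCEA})$, $(\mathsf{RCK})$, $(\ID)$ and $(\MP)$ are all derivable in $\Gen{\CK\MP\ID_g}$ just as in the accompanying examples, whence they are admissible in $\Hilb{\CK\MP\ID_g}$. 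This gives equivalence, and by Theorem \ref{thm:gen-hilb-equiv} it then suffices to prove that $\CK\MP\ID_g$ is absorbing.

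Absorption of the structural rules is immediate: it is a per-rule condition (Definition \ref{defn:structural}) and was already verified for $\CKID_g$ in Proposition \ref{propn:ckid} and for $\MP_g$ in the proof of Proposition \ref{thm:ckmp}, so it holds for the union. For absorption of cut I would organise the argument around the three kinds of cut between rule conclusions. A cut between two instances of $\CKID_g$ is discharged verbatim by the analysis in Proposition \ref{propn:ckid}, and a cut between two instances of $\MP_g$ is discharged verbatim by the opening part of the proof of Proposition \ref{thm:ckmp}. The only genuinely new situation is a cut between a conclusion of $\CKID_g$ and a conclusion of $\MP_g$, which I would treat by adapting the $\CK_g$/$\MP_g$ cross-analysis of Proposition \ref{thm:ckmp}. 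In the easy subcases---where the cut formula $F$ lies in a weakening context, equals some $(A_i \cto B_i)$ with $i \geq 1$, or equals $(A_0 \cto B_0)$ with $\neg F \in \Delta$---the auxiliary sequents are formed precisely as before and fed to $\MP_g$, the extra $\neg A_0$ in the $\CKID_g$ premise playing no role (these subcases cut on the \emph{conclusion} literal and never touch that premise).

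I expect the main obstacle to be the critical subcase $F = (A_0 \cto B_0) = (C \cto D)$. Here the auxiliary sequents $\Sigma_1 = \neg(A_1 \cto B_1), \dots, \neg(A_n \cto B_n), \neg B_0, \Gamma, \Delta$ and $\Sigma_2 = \neg(A_1 \cto B_1), \dots, \neg(A_n \cto B_n), A_0, \Gamma, \Delta$ are obtained exactly as in Proposition \ref{thm:ckmp} by cutting the $\CKID_g$-conclusion against the two $\MP_g$-premises, so that no $\neg A_0$ has yet appeared. The divergence occurs one step later: combining $\Sigma_1$ with the modal premise of the $\CKID_g$-instance, which is now $\neg A_0, \neg B_1, \dots, \neg B_n, B_0$, by a cut on $B_0$ introduces a stray $\neg A_0$ that is absent in the $\CK_g$ case. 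The key observation is that it dissolves cheaply: since $\Sigma_2$ carries $A_0$, a single further cut on $A_0$ removes it and reproduces precisely the sequent reached in the $\CK_g$ analysis; this cut is legitimate because $\size(A_0) < \size(A_0 \cto B_0) = \size(F)$ and cuts on formulas smaller than $F$ are available in $\Cut(F, r_1, r_2)$. From there the iterated elimination of the $\neg B_i$ by repeated applications of $\MP_g$, drawing the literals $A_i$ from $\Sigma_2$ through the equalities $\neg A_0, A_i$, proceeds unchanged, all uses of weakening, contraction and cuts on formulas of size $< \size(F)$ being admissible in $\Gen{\CK\MP\ID_g} + \Cut(F, r_1, r_2)$. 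With the three cut cases settled, $\CK\MP\ID_g$ is absorbing, which together with the equivalence established above completes the proof.
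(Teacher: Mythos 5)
Your proposal is correct and follows essentially the same route as the paper, which likewise disposes of this proposition by rerunning the case analysis of Proposition \ref{thm:ckmp} with every instance of $\CK_g$ replaced by the corresponding instance of $\CKID_g$, the only difference being the extra literal $\neg A_0$ in the rightmost premise. You are in fact somewhat more explicit than the paper about the one point where that literal actually surfaces (the subcase $F = (A_0\cto B_0) = (C\cto D)$, after the cut on $B_0$) and about why the additional cut on $A_0$ that removes it is licensed by $\size(A_0)<\size(F)$.
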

\begin{proof}
We follow the same strategy (and consider the same cases) as in the
proof of Proposition \ref{thm:ckmp} where we note that the
conclusions of $\CK_g$ and $\CKID_g$ are identical, the only
difference being that in
premise displayed on the far right in $\CK_g$ and $\CKID_g$ there is
one additional (negative) literal in
where $\CKID_g$. The proof of
Proposition \ref{thm:ckmp} can now be repeated literally by
adding this extra literal to all instances of $\CK_g$, thus turning
every instance of $\CK_g$ in the proof of Proposition \ref{thm:ckmp}
into an instance of $\CKID_g$.
\end{proof}

\subsection{Cut Elimination for Extensions of $\CKCEM$}

To construct an absorbing rule set for conditional logic plus the
axiom \[ (\CEM) (A \cto B) \lor (A \cto \neg B) \]
we start from the admissible rule set for $\CK$ and close under cuts
that arise with $(\CEM)$. Repeated applications of Lemma
\ref{lemma:adm-cut} and Lemma \ref{lemma:adm-prop} lead to the rule
set 
\[ (\CK\CEM_g) \frac{A_0 = \dots = A_n \quad B_0, \dots, B_j, \neg B_{j+1}, \neg
B_n}{(A_0 \cto B_0), \dots, (A_j \cto B_j), \neg (A_{j+1} \cto
B_{j+1}), \dots, \neg (A_n \cto B_n), \Gamma} \]
for $0 \leq j \leq n$. 
\begin{propn} \label{thm:ckcem}
The rule set $\CKCEM_g$ is absorbing and equivalent to $\CK\CEM$.
\end{propn}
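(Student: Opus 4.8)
The plan is to follow the template of Propositions \ref{propn:ckid} and \ref{thm:ckmp}. Equivalence of $\CKCEM_g$ and $\CK\CEM$ is routine: admissibility of every instance of $(\CK\CEM_g)$ in $\Hilb\CK\CEM$ is obtained exactly as indicated above, by repeated application of Lemmas \ref{lemma:adm-cut} and \ref{lemma:adm-prop} starting from the admissible rule set for $\CK$ and closing under cuts with $(\CEM)$; conversely, the defining principles of $\CK\CEM$ are derivable in $\Gen\CKCEM_g$, e.g.\ $(\CEM)$ arises from the instance of $(\CK\CEM_g)$ with $n=j=1$ and $A_0=A_1=A$, $B_0=B$, $B_1=\neg B$, whose premises $A=A$ and $B,\neg B$ consist of axioms. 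Absorption of the structural rules is equally direct: weakening is absorbed through the context $\Gamma$, and contraction and inversion follow the same pattern as for $\CK_g$. So the real content, as before, is absorption of cut, whence Theorem \ref{thm:cut-elim} applies.

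For cut absorption I would take two instances $(r_1)$ and $(r_2)$ of $(\CK\CEM_g)$, with antecedents $A_0=\dots=A_n$, consequents $B_0,\dots,B_j,\neg B_{j+1},\dots,\neg B_n$ and context $\Gamma$ for $(r_1)$, and correspondingly $C_0=\dots=C_m$, consequents $D_0,\dots,D_l,\neg D_{l+1},\dots,\neg D_m$ and context $\Delta$ for $(r_2)$, and analyse a cut on $F\in\Form(\Lambda)$. If $F\in\Gamma$ or $\neg F\in\Delta$ the cut conclusion is a weakening of one of the rule conclusions and is obtained by a single rule application with an enlarged context. The only substantial case, up to symmetry, is when $F$ is a principal conditional literal occurring positively in $(r_1)$ and negatively in $(r_2)$, say $F=(A_i\cto B_i)$ with $0\le i\le j$ and $\neg F=\neg(C_p\cto D_p)$ with $l+1\le p\le m$. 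Since these are the same literal, $A_i=C_p$ and $B_i=D_p$ hold syntactically.

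The construction then proceeds in three steps, each using only cuts on formulas of size $<\size(F)$ together with the structural rules, all available in $\Gen\CKCEM_g+\Cut(F,r_1,r_2)$. First I would merge the antecedent equivalences: from $A_0=\dots=A_n$ and $C_0=\dots=C_m$, using the identity $A_i=C_p$ and cuts on the antecedent $A_i$ (which has smaller size than $F$), I derive a combined chain $\Sigma_1$ expressing $A_0=\dots=A_n=C_0=\dots=C_m$ (with the repeated entry dropped), exactly as in the proof of Proposition \ref{propn:ckid}. Second, I would cut the two consequent premises on $B_i=D_p$ --- positive in the first, negative in the second, and again of size $<\size(F)$ --- to obtain the combined consequent premise $\Sigma_2$, namely the first premise with $B_i$ deleted together with the second premise with $\neg D_p$ deleted. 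Third, I apply $(\CK\CEM_g)$ to $\Sigma_1$ and $\Sigma_2$; the positive and negative consequents of $\Sigma_2$ match precisely the positive and negative conditional literals required in the cut conclusion, so this single rule application yields the target sequent $\neg(A_{j+1}\cto B_{j+1}),\dots,(A_0\cto B_0),\dots,(C_0\cto D_0),\dots,\Gamma,\Delta$ (with $F$ and $\neg F$ removed).

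The hard part will be the index bookkeeping in the third step: verifying that after deleting $B_i$ and $\neg D_p$ the surviving positive consequents $B_0,\dots,\widehat{B_i},\dots,B_j,D_0,\dots,D_l$ and negative consequents $\neg B_{j+1},\dots,\neg B_n,\neg D_{l+1},\dots,\widehat{\neg D_p},\dots,\neg D_m$ align exactly with the positive and negative conditional literals left in the cut conclusion, so that $\Sigma_2$ is a genuine instance of the consequent premise of $(\CK\CEM_g)$ for some admissible split $0\le j'\le n'$. This is where the extra generality of $(\CK\CEM_g)$ over $(\CK_g)$ --- the presence of several positive conditional consequents permitted by $\CEM$ --- is essential, and is the only place the argument differs materially from the $\CK$ and $\CKMP$ cases.
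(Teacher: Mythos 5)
Your proposal is correct and follows essentially the same route as the paper's proof: the trivial context cases, then the single substantial case of a cut on a conditional literal principal (with opposite polarity) in both conclusions, handled by merging the antecedent-equivalence premises via cuts on the (smaller) antecedents, cutting the two consequent premises on the (smaller) consequent, and closing with one application of $(\CK\CEM_g)$. The paper presents the mirror-image polarity case and appeals to symmetry for yours, but the argument is the same.
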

\begin{proof}
Again, it suffices to check that the rule set $\CKCEM_g$ is absorbing,
where the absorption of structural rules is clear. It therefore
suffices to treat instances of cuts between conclusions of rules of
$\CKCEM$. Owing to the form of the $\CKCEM_g$-rule, our argument is
very similar to that used for $\CK_g$. 
We consider the following two instances 
\begin{align*} (r_1) & \frac{A_0 = \dots = A_n \quad B_0, \dots, B_i, \neg B_{i+1},
\dots, \neg B_n}{(A_0 \cto B_0), \dots, (A_i \cto B_i), \neg
(A_{i+1} \cto B_{i+1}), \dots, \neg (A_n \cto B_n), \Gamma} \\[2ex]
(r_2) & \frac{C_0 = \dots = C_m \quad D_0, \dots, D_j, \neg D_{j+1},
\dots, \neg D_m}{(C_0 \cto D_0), \dots, (C_j \cto D_j), \neg
(C_{j+1} \cto D_{j+1}), \dots, \neg (C_m \cto D_m), \Delta}
\end{align*}
and assume that the conclusions permit an instance of cut on $F \in
\Form(\Lambda)$. As usual, we distinguish several cases, where the
cases $F \in \Gamma, F \in \Delta, \neg F \in \Gamma$ and $\neg F
\in \Delta$ are trivial.

\emph{Case $F = (A_k \cto B_k) = (C_l \cto D_l)$ for $k > i$ and $0
\leq l \leq j$.} 
Without loss of generality we assume that $k = n$ and $l =
0$ and get $A_n = C_0$ and $B_n = D_0$. Denote the sequent that
arises from applying cut on $F$ to the conclusions of $r_1$ and
$r_2$ by $\Sigma_0$ and notice that, using cuts on
$A_n \equiv C_0$, we have that
\[ \Sigma = A_0 = A_1 = \dots = A_{n-1} = C_1 = \dots = C_m \]
is derivable in $\Gen\CKCEM_g + \Cut(F, r_1, r_2)$.  
This feeds into the derivation
{\small\begin{prooftree}
\AxiomC{$B_0, \dots, B_i, \neg B_{i+1}, \dots, \neg B_n$}
\AxiomC{$D_0, \dots, D_j, \neg D_{j+1}, \dots, \neg D_m$}
\RightLabel{(Cut ($B_n$))}
\BinaryInfC{$B_0, \dots, B_i, D_1, \dots, D_j, \neg  B_{i+1}, \dots,
\neg B_{n-1}, \neg D_{j+1}, \dots, \neg D_m$}
\AxiomC{$\Sigma$}
\RightLabel{$(\CKCEM)$}
\BinaryInfC{$\Sigma_0$}
\end{prooftree}}
which establishes that
$\Cut(F, r_1, r_2) \entails \Sigma_0$ as desired. 

The case $F = (A_k \cto B_k) \equiv (B_l \cto D_l)$ for $k > i$
and $1 \leq l \leq j$ is symmetric, which finishes the proof.
\end{proof}
\noindent
As a consequence, cut elimination holds in $\CKCEM_g$.  We show of
the $(\CEM)$ can be derived before moving on to the next calculus.
\begin{ex}
If we spell out the abbreviatios of $\lor$ in terms of $\neg$ and
$\land$, the axiom of conditional excluded middle takes the form
$\neg ( \neg(A \cto B) \land \neg (A \cto \neg B))$. The following
derivation in $\CKCEM_g$ shows that this form of the axiom is
derivable.
\begin{prooftree}
\AxiomC{$A = A$}
\AxiomC{$B, \neg B$}
\RightLabel{$(\CKCEM_g)$}
\BinaryInfC{$(A \cto B), (A \cto \neg B)$}
\RightLabel{$(\neg\neg)$}
\UnaryInfC{$\neg\neg(A \cto B), (A \cto \neg B)$}
\RightLabel{$(\neg\neg)$}
\UnaryInfC{$\neg\neg(A \cto B), \neg\neg(A \cto \neg B)$}
\RightLabel{$(\neg\land)$}
\UnaryInfC{$\neg(\neg (A \cto B) \land \neg (C \cto \neg B)$}
\end{prooftree}
In this derivation, we have chosen $j = n = 1$ and when applying
$(\CKCEM_g)$ and  choosing $j = 0$ yields an instance of $(\CK_g)$.
\end{ex}
\noindent
We now
consider the extension of $\CK$ with both conditional excluded
middle $(A \cto B) \lor (A \cto \neg B)$ and the identity axiom $(A
\cto A)$ and denote the ensuing logic by $\CK\CEM\ID$. As in the
construction of the rule set $\CK\ID_g$, we construct a rule set by
applying Lemma \ref{lemma:adm-cut} and \ref{lemma:adm-prop} by
considering cuts between an instance of $(\CKCEM_g)$ (left) and a
rule arising from a cut between $(\CK_g)$ and the identity axiom
(right)
\[ (\CK\CEM_g) 
\frac{A_0 = \dots = A_n \quad B_0, \dots, B_j, \neg B_{j+1}, \neg
B_n}{(A_0 \cto B_0), \dots, (A_j \cto B_j), \neg (A_{j+1} \cto
B_{j+1}), \dots, \neg (A_n \cto B_n), \Gamma} \qquad
\frac{\neg C, D}{C \cto D} \]
(where $1 \leq j \leq n$) leads to the (admissible) rule schema
\[ (\CK\CEM\ID_g) 
\frac{A_0 = \dots = A_n \quad \neg A_0, B_0, \dots, B_j, \neg B_{j+1}, \neg
B_n}{(A_0 \cto B_0), \dots, (A_j \cto B_j), \neg (A_{j+1} \cto
B_{j+1}), \dots, \neg (A_n \cto B_n), \Gamma} 
\]
that provides a cut-free axiomatisation of $\CK\CEM\ID$, as we now
show.

\begin{propn}
The rule set $\CK\CEM\ID_g$ is equivalent to $\CK\CEM\ID$ and
absorbing.
\end{propn}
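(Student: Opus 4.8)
The plan is to mirror, essentially line for line, the proof of Proposition~\ref{thm:ckcem} (absorption for $\CKCEM_g$), and to graft onto it the technique from the proof of Proposition~\ref{propn:ckid} for coping with the one feature that distinguishes $(\CK\CEM\ID_g)$ from $(\CKCEM_g)$, namely the extra negative antecedent literal $\neg A_0$ in the right-hand premise. Equivalence is inherited from the construction: $(\CK\CEM\ID_g)$ was obtained above by repeated applications of Lemmas~\ref{lemma:adm-cut} and~\ref{lemma:adm-prop} to $(\CKCEM_g)$ and the derived identity rule $\neg C, D / C \cto B$, so it is admissible in $\Hilb(\CK\CEM\ID)$; conversely $(\ID)$ (take $n=0$), $(\CEM)$ and the rules of $\CK$ are all derivable in $\Gen(\CK\CEM\ID_g)$, exactly as in the examples for $\CKID_g$ and $\CKCEM_g$, so the two rule sets are equivalent. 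Absorption of weakening, contraction and inversion is immediate from the weakening context $\Gamma$ and the form of the rule, just as for $\CKCEM_g$. By Theorem~\ref{thm:gen-hilb-equiv} and Theorem~\ref{thm:cut-elim} it therefore suffices to verify absorption of cut.

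For cut absorption I would take two instances $(r_1),(r_2)$ of $(\CK\CEM\ID_g)$ with antecedents $A_0,\dots,A_n$, $C_0,\dots,C_m$, consequents $B_0,\dots,B_n$, $D_0,\dots,D_m$ and splits at $i$ and $j$, and a cut on $F\in\Form(\Lambda)$ between their conclusions. The cases where $F$ or $\neg F$ lies in $\Gamma$ or $\Delta$ are trivial, so the substantive case is $F=(A_k\cto B_k)=(C_l\cto D_l)$ occurring negatively in the conclusion of $(r_1)$ and positively in that of $(r_2)$, i.e.\ $k>i$ and $l\le j$; up to symmetry and relabelling take $k=n$, $l=0$, so that $A_n=C_0$ and $B_n=D_0$ syntactically. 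Exactly as in Proposition~\ref{thm:ckcem}, splicing the two antecedent chains at the identified point $A_n=C_0$ by cuts on the antecedent formula $A_n$ (whose size is $<\size(F)$) yields derivability of the merged equality premise $\Sigma=(A_0=\dots=A_{n-1}=C_1=\dots=C_m)$ in $\Gen(\CK\CEM\ID_g)+\Cut(F,r_1,r_2)$, and a single cut on the shared consequent $B_n=D_0$ between the two consequent premises produces the positive/negative list of the surviving $B$'s and $D$'s.

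The one genuinely new point, which I expect to be the main obstacle, is bookkeeping the extra antecedent literal: because each consequent premise now carries its own $\neg A_0$ respectively $\neg C_0$, the sequent obtained after the consequent cut contains \emph{both} $\neg A_0$ and $\neg C_0$, whereas an instance of $(\CK\CEM\ID_g)$ permits only a single such literal. This is resolved precisely by the move displayed in the proof of Proposition~\ref{propn:ckid}: since $A_n=C_0$ and all antecedents within each rule are mutually equivalent, the sequent $\neg C_0, A_0$ is derivable from the antecedent equivalences of $(r_1)$ and $F$ by a cut on $A_n$, and cutting it against $\neg A_0$ on the formula $A_0$ (again a cut on a formula of size $<\size(F)$) rewrites $\neg A_0$ as a second copy of $\neg C_0$, which absorbed contraction then collapses into one. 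The resulting premises are now exactly of the shape required to apply $(\CK\CEM\ID_g)$ to $\Sigma$ and the merged consequent sequent, delivering the cut conclusion $\Sigma_0$ using only cuts on smaller formulas together with the admissible structural rules. All remaining cases coincide with those already treated for $\CKCEM_g$ once the extra literal is carried along passively, so $(\CK\CEM\ID_g)$ absorbs cut and the proposition follows.
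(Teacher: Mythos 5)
Your argument is essentially the paper's own: it likewise reduces to the case analysis of the $\CKCEM_g$ proof and disposes of the duplicated antecedent literal by one additional cut on the antecedent formula (using the premise $\neg A_n, A_0$, available since $A_n=C_0$) followed by contraction, immediately before applying $(\CK\CEM\ID_g)$ to the merged premises. No substantive difference.
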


\begin{proof}
Just as in the proof of Proposition \ref{thm:ckcem} we consider cuts
between two instances of $(\CK\CEM\ID_g)$ and note that the premises
of $(\CK\CEM_g)$ and $(\CK\CEM\ID_g)$ only differ by a negative
literal that is added to the premise on the far right in
$(\CK\CEM\ID_g)$. We consider precisely the same cases as in the
proof of Proposition \ref{thm:ckcem}. Using the same notation, we
note that the equality $A_0 = \dots = A_n$ in particular gives $\neg
A_n, A_0$ as a premise, and we use an additional cut on $A_0$,
followed by an instance of contraction, immediately prior to the
application of $(\CK\CEM)$ (that we replace by an instance of
$(\CK\CEM\ID_g)$) to show absorption of cut.
\end{proof}

We now consider extending
$\CK$ with both
conditional modus ponens and conditional excluded middle, but have
to take care of the cuts arising between $\MP_g$ and $\CK\CEM_g$, which
leads to the new rule
\[ (\MPEM_g) \frac{A, (A \cto B), \Gamma \qquad B, (A \cto B),
\Gamma}{(A \cto B), \Gamma} \] 
that was obtained in the same way as $(\MP_g)$. If we
denote the extension of $\CKCEM_g$ with $\MP_g$ and $\MPEM_g$ 
by $\CK\CEM\MP_g$, we
obtain:
\begin{propn} \label{thm:ckcemmp}
$\CK\CEM\MP_g$ is absorbing and equivalent to
$\CK\CEM\MP$. 
\end{propn}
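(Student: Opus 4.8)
The plan is to establish the two clauses hidden in ``absorbing'' and then read off the equivalence from Theorem~\ref{thm:gen-hilb-equiv}. Equivalence with $\CK\CEM\MP$ is the routine half: each of $\CKCEM_g$, $\MP_g$ and $\MPEM_g$ was produced from the Hilbert axioms by Lemmas~\ref{lemma:adm-cut} and~\ref{lemma:adm-prop} and is hence admissible in $\Hilb\CK\CEM\MP$, while conversely $(\CK)$, $(\CEM)$ and $(\MP)$ are all derivable in the sequent system (exactly as in the example derivations for the subsystems), so the two rule sets are equivalent. Absorption of weakening, contraction and inversion is checked rule by rule as for the components, since every rule carries an arbitrary context $\Gamma$ and $\MP_g$, $\MPEM_g$ additionally duplicate their principal conditional literal. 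Thus only absorption of cut requires genuine work.

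By Definition~\ref{defn:cut-absorption} cut must be absorbed for every pair of rules with complementary principal conditionals. Cuts internal to $\CKCEM_g$ are settled exactly as in Proposition~\ref{thm:ckcem}, cuts internal to $\MP_g$ as in Proposition~\ref{thm:ckmp}, and cuts internal to $\MPEM_g$ by the argument dual to the latter, reversing the polarity of the conditional throughout; so it remains to treat the cross cuts between $\MP_g$, $\MPEM_g$ and $\CKCEM_g$. Throughout I will use the fact, licensed by the form of $\Cut(F,r_1,r_2)$ together with the inner (proof-size) induction of Theorem~\ref{thm:cut-elim}, that the cut conditional $F=(A_k\cto B_k)$ may be cut between the conclusion of one rule and a premise of the other, and that cuts on its proper subformulas $A_k$ and $B_k$ are available.

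The cut between $\MP_g$ and $\MPEM_g$ on $F=(A\cto B)$ is the clean case and explains why $\MPEM_g$ is precisely the rule needed. Cutting the conclusion $(A\cto B),\Delta$ of $\MPEM_g$ against the premise $\neg B,\neg(A\cto B),\Gamma$ of $\MP_g$ yields $\neg B,\Gamma,\Delta$, and cutting the conclusion $\neg(A\cto B),\Gamma$ of $\MP_g$ against the premise $B,(A\cto B),\Delta$ of $\MPEM_g$ yields $B,\Gamma,\Delta$. A single cut on the consequent $B$, whose size lies strictly below that of $F$, followed by contraction produces $\Gamma,\Delta$ as required.

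The main obstacle is the cut of $\CKCEM_g$ against $\MPEM_g$ on a conditional $F=(A_k\cto B_k)$ occurring as $\neg(A_k\cto B_k)$ in the $\CKCEM_g$ conclusion (so $k>j$) and as the principal literal of $\MPEM_g$: no occurrence of $F$ survives the cut, so, unlike the $\CK_g$/$\MP_g$ situation of Proposition~\ref{thm:ckmp}, one cannot reinstate the conditional by reapplying a single rule, and since the consequents are not governed by the antecedent equality $A_0=\dots=A_n$ the literal $B_k$ cannot be rewritten. The plan is first to cut the $\CKCEM_g$ conclusion against the premises $A_k,(A_k\cto B_k),\Delta$ and $B_k,(A_k\cto B_k),\Delta$ of $\MPEM_g$, giving $\Theta,A_k,\Delta$ and $\Theta,B_k,\Delta$, where $\Theta$ is the $\CKCEM_g$ conclusion with $\neg(A_k\cto B_k)$ deleted; then to cut $\Theta,B_k,\Delta$ against the consequent premise $B_0,\dots,B_j,\neg B_{j+1},\dots,\neg B_n$ of $\CKCEM_g$ on $B_k$ and to remove the resulting spurious consequent literals one at a time, absorbing each positive $B_l$ ($l\le j$) by an application of $\MPEM_g$ to the literal $(A_l\cto B_l)$ of $\Theta$ and each negative $\neg B_i$ ($i>j,\,i\ne k$) by an application of $\MP_g$ to $\neg(A_i\cto B_i)$, obtaining the required antecedent premise from $\Theta,A_k,\Delta$ by a subformula cut against the equality $\neg A_k,A_l$. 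Iterating leaves exactly $\Theta,\Delta$; the cut of $\CKCEM_g$ against $\MP_g$ on a positively occurring conditional ($k\le j$) is handled by the same iteration, and the cases with $F$ in a context are immediate. It is precisely at this step that the simultaneous presence of $\MP_g$ and $\MPEM_g$ is essential, and the delicate point is to verify that the elimination uses only subformula cuts, weakening and contraction and terminates with the intended end-sequent.
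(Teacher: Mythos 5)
Your proposal is correct and follows essentially the same route as the paper: structural absorption and equivalence are dispatched as routine, internal cuts are referred back to Propositions~\ref{thm:ckcem} and~\ref{thm:ckmp}, the principal $\MP_g$/$\MPEM_g$ cut is resolved by the two conclusion-against-premise cuts followed by a subformula cut on $B$, and the principal $\CKCEM_g$-versus-$\MP_g$/$\MPEM_g$ cuts are resolved by cutting against the consequent premise and then iteratively eliminating the spurious literals $B_l$ and $\neg B_i$ with $\MPEM_g$ and $\MP_g$ respectively, using subformula cuts on the antecedent equalities to supply the missing $A_l$-premises. If anything, your bookkeeping of which of $\MP_g$/$\MPEM_g$ removes which polarity of consequent literal is stated more cleanly than in the paper's own sketch.
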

\begin{proof}
It is clear that the rule set $\CK\CEM\MP_g$ absorbs the structural
rules and it is easy to see that it is equivalent to $\CK\CEM\MP$. We
have to show that it absorbs cut.

Cuts between the conclusions of two instances of $\MP_g$ have
already been treated in the proof of Theorem \ref{thm:ckmp}, and the
proof translates verbatim to cuts between instances of $\MPEM_g$.
We consider cuts between two instances
\[ (r_1) \frac{A, (A \cto B), \Gamma \qquad B, (A \cto B), \Gamma}
  {(A \cto B), \Gamma} \qquad
 (r_2) \frac{C, \neg (C \cto D), \Delta \qquad \neg D, \neg(C \cto
D), \Delta}{\neg (C \cto D), \Delta}
\]
where the cut is performed on $F \in \Form(\Lambda)$, say. The cases
where either
$F \in \Gamma$ and $\neg F$ in $\Delta$  or $F \in \Delta$ and $\neg
F \in \Gamma$ are straightforward.

\emph{Case $F = (A \cto B) = (C \cto D)$.}
The derivation
\begin{prooftree}
\AxiomC{$A, (A \cto B), \Gamma$}
\AxiomC{$B, (A \cto B), \Gamma$}
\RightLabel{($\MPEM$)}
\BinaryInfC{$(A \cto B), \Gamma$}
\AxiomC{$\neg B, \neg (A \cto B), \Delta$}
\RightLabel{(Cut $(A \cto B)$)}
\BinaryInfC{$\neg B, \Gamma, \Delta$}
\end{prooftree}
witnesses that we may use $\Sigma_1 = \neg B, \Gamma, \Delta$ as an
axiom in $\Gen\CK\CEM\MP + \Cut(F, r_1, r_2)$. Similarly, the derivation
\begin{prooftree}
\AxiomC{$A, \neg (A \cto B), \Delta$}
\AxiomC{$\neg B, \neg (A \cto B), \Delta$}
\RightLabel{($\MP$)}
\BinaryInfC{$\neg (A \cto B), \Delta$}
\AxiomC{$B, (A \cto B), \Gamma$}
\RightLabel{(Cut $(A \cto B)$)}
\BinaryInfC{$B, \Gamma, \Delta$}
\end{prooftree}
shows that the same is true for $\Sigma_2 = B, \Gamma, \Delta$: note
that in both cases, the cut was performed between an axiom and a
conclusion of both rules. As $\size(B) < \size(A \cto B)$ we may now
use cut on $B$ to establish that $\Gen\CK\CEM\MP_g + \Cut(F, r_1,
r_2) \entails \Gamma, \Delta$.

\emph{Case $F = (A \cto B)$ and $\neg F \in \Delta$.} We have that
$\Delta = \neg (A \cto B), \Delta'$. The derivation
\begin{prooftree}
\AxiomC{$A, (A \cto B), \Gamma$}
\AxiomC{$B, (A \cto B), \Gamma$}
\RightLabel{$(\MPEM)$}
\BinaryInfC{$(A \cto B), \Gamma$}
\AxiomC{$C, \neg (C \cto D), \neg (A \cto B), \Delta'$}
\RightLabel{(Cut ($F$))}
\BinaryInfC{$C, \neg (C \cto D), \Gamma, \Delta'$}
\end{prooftree}
witnesses that we may use 
\[ \Sigma_1 = C, \neg (C \cto D), \Gamma, \Delta' \]
as an axiom in $\Gen\CK\CEM\MP + \Cut(F, r_1, r_2)$. The same
derivation, with $C$ replaced by $\neg D$ shows that the same is
true for 
\[ \Sigma_2 = \neg D, \neg (C \cto D), \Gamma, \Delta' \]
and an application of $\MP_g$ yields derivability of
$\neg (C \cto D), \Gamma, \Delta'$.

\emph{Case $F = (C \cto D)$ and $\neg F \in \Gamma$.} Analogous by
interchaning the role of $\MP_g$ and $\MPEM_g$.

This leaves to consider cuts between two instances of $\CKCEM_g$ and
$\MP_g$ and between $\CKCEM_g$ and $\MPEM_g$.  We first consider
the rules
\begin{align*}
(r_1) &  \frac{A, (A \cto B), \Gamma \qquad B, (A \cto B), \Gamma}
{(A \cto B), \Gamma} \\[2ex]
(r_2) & \frac{A_0 = \dots = A_n \quad B_0, \dots, B_j, \neg B_{j+1},
\dots, \neg
B_n}{A_0 \cto B_0, \dots, (A_j \cto B_j), \neg (A_{j+1} \cto
B_{j+1}), \dots, \neg (A_n \cto B_n), \Delta} \end{align*}

\medskip\noindent In this setting, all cases except the case $F = (A
\cto B) = (A_i \cto B_i)$ with $i > j$ are entirely analogous to those
considered in the proof of Theorem \ref{thm:ckmp} where applications
of $\CK_g$ need to be replaced by applications of $\CKCEM_g$. In
case $F = (A \cto B) = (A_i \cto B_i)$ with $i > j$ we assume without
loss of generality that $i = n$ and argue, as in the proof of
Theorem \ref{thm:ckmp}, that 
\[ \Sigma_1  = (A_0 \cto B_0), \dots, (A_j \cto B_j), \neg (A_{j+1}
\cto B_{j+1}), \dots, \neg (A_{n-1} \cto B_{n-1}), B_n, \Gamma, \Delta
\]
and 
\[ \Sigma_2  = (A_0 \cto B_0), \dots, (A_j \cto B_j), \neg (A_{j+1}
\cto B_{j+1}), \dots, \neg (A_{n-1} \cto B_{n-1}), A_n, \Gamma, \Delta
\]
{\sloppy
both are axioms of $\Gen\CK\CEM\MP_g + \Cut(F, r_1, r_2)$, leading to
deductions ending in, repectively,  $\Sigma_1, B_0, \dots, B_j, \neg B_{j+1},
\dots, \neg B_{n-1}$ and $\Sigma_2, \neg A_0, B_1, \dots, B_j,
\neg B_{j+1}, \dots \neg B_{n-1}$. An application of
$\MP_g$ now yields derivability of 
$(A_0 \cto B_0), \dots, (A_j \cto B_j), \neg (A_{j+1} \cto B_{j+1}),
\dots, \neg (A_{n-1} \cto B_{n-1}), B_1, \dots, B_j, \neg B_{j+1},
\dots, \neg B_{n-1}$. Iterating the same schema, where $\MPEM_g$
is used instead of $\MP_g$ to eliminate occurrences of $\neg B_i$
for $i > j$ finally yields that $(A_0 \cto B_0), \dots, (A_j \cto
B_j), \neg (A_{j+1} \cto B_{j+1}), \dots, \neg (A_{n-1} \cto
B_{n-1}), \Gamma, \Delta$ is derivable in $\Gen\CK\CEM\MP_g +
\Cut(F, r_1, r_2)$.
}

To see that cuts between conclusions of $\CKCEM_g$ and $\MPEM_g$ can
be eliminated, one uses the same reasoning as above, with $\MPEM_g$
and $\MP_g$ interchanged.
\end{proof}

We note that the latter theorem was left as an open problem for the
sequent system presented in 
\cite{OlivettiEA07}. To complete the treatment of conditional
logics, we now turn to the system $\CK\CEM\MP\ID$
that arises by extending $\CK$ with the axioms correspoinding to
conditional excluded middle, conditional modus ponens and identity.
It follows by construction that the rule set
$\CK\CEM\MP\ID_g$, that we take as containing all instances of 
$\CK\CEM\ID_g$, $(\MP_g)$ and $(\MPEM_g)$ induces a calculus that
is equivalent to $\CK\CEM\MP\ID$ and we just have to establish
absorption.

\begin{propn}
The rule set $\CK\CEM\MP\ID_g$ is equivalent to $\CK\CEM\MP\ID_g$
and absorbing.
\end{propn}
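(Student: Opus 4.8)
The equivalence of $\CK\CEM\MP\ID_g$ with the Hilbert system $\CK\CEM\MP\ID$ has, as the text preceding the statement already notes, been secured by construction via Lemmas \ref{lemma:adm-cut} and \ref{lemma:adm-prop}, so the plan is to establish absorption, and in fact only absorption of cut: absorption of the structural rules is immediate, exactly as in the preceding propositions, from the presence of the weakening context $\Gamma$ in every $\CK\CEM\ID_g$-conclusion and of the principal conditional literals in the premises of $\MP_g$ and $\MPEM_g$. The guiding principle is that the argument should be a \emph{merge} of the proof of Proposition \ref{thm:ckcemmp} (which absorbs cut for $\CEM$ together with $\MP_g$ and $\MPEM_g$) with the identity-handling device used in Proposition \ref{propn:ckmpid} and in the proof for $\CK\CEM\ID$ --- indeed, this proposition stands to Proposition \ref{thm:ckcemmp} exactly as Proposition \ref{propn:ckmpid} stands to Proposition \ref{thm:ckmp}.

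The key structural observation I would invoke is that the conclusions of $\CKCEM_g$ and of $\CK\CEM\ID_g$ are \emph{identical}; the two rules differ only in that the rightmost premise of $\CK\CEM\ID_g$ carries one additional negative literal $\neg A_0$. Consequently the entire case analysis of Proposition \ref{thm:ckcemmp} transfers: I would run through exactly the same cases --- cuts between two instances of $\MP_g$ (and, identically, of $\MPEM_g$), cuts between $\MP_g$ or $\MPEM_g$ and $\CK\CEM\ID_g$, and cuts between two instances of $\CK\CEM\ID_g$ --- and at each point where that proof applies $(\CKCEM_g)$ I would instead apply $(\CK\CEM\ID_g)$. The cuts purely among $\MP_g$ and $\MPEM_g$ instances involve no conditional-antecedent literal and so remain literally unchanged from Proposition \ref{thm:ckcemmp}.

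The one new ingredient, to be inserted wherever an instance of $\CK\CEM\ID_g$ is finally applied to close a case, is the mechanism already used for $\CK\CEM\ID$: the premise bundle $A_0 = \dots = A_n$ supplies in particular the sequent $\neg A_n, A_0$, so once the antecedent-equality cuts have merged the two equality chains, I can perform one further cut on $A_0$ followed by a contraction to reproduce the extra $\neg A_0$ literal demanded by the rightmost premise of $\CK\CEM\ID_g$. Since $\size(A_0)$ is strictly smaller than the size of the conditional cut formula $F$, this auxiliary cut is licensed by the inner induction and is available in $\Gen\CK\CEM\MP\ID_g + \Cut(F, r_1, r_2)$.

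I expect the main obstacle to be the bookkeeping in the case $F = (A \cto B) = (A_i \cto B_i)$ with $i > j$ of a cut between $\MP_g$ (or $\MPEM_g$) and $\CK\CEM\ID_g$, which in Proposition \ref{thm:ckcemmp} required an \emph{iteration} that successively eliminates the consequent literals $B_i$ by alternating applications of $\MP_g$ and $\MPEM_g$. In the present setting each stage of that iteration must additionally carry the extra $\neg A_0$ literal through, so I would check that threading the cut-on-$A_0$-plus-contraction step into every iteration step still yields, at the end, precisely the sequent $(A_0 \cto B_0), \dots, (A_j \cto B_j), \neg(A_{j+1} \cto B_{j+1}), \dots, \neg(A_{n-1} \cto B_{n-1}), \Gamma, \Delta$, now obtained via $\CK\CEM\ID_g$ rather than $\CKCEM_g$. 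This is routine once the merge is set up, and the proof then concludes, as for the earlier systems, by invoking Theorem \ref{thm:cut-elim} to obtain cut elimination.
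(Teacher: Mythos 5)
Your proposal is correct and follows essentially the same route as the paper: absorption of structural rules is inherited from the component rule schemas, and cut absorption is obtained by replaying the case analysis for $\CK\CEM\MP_g$ with every instance of $(\CKCEM_g)$ replaced by $(\CK\CEM\ID_g)$, handling the extra $\neg A_0$ in the rightmost premise via a cut on the antecedent (available from the equality premises) followed by contraction. Your write-up is in fact more detailed than the paper's two-sentence sketch, and the additional check that the $\neg A_0$ literal threads through the iteration in the $F=(A\cto B)=(A_i\cto B_i)$, $i>j$ case is a useful explicit verification of what the paper leaves implicit.
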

\begin{proof}
It is clear that $\CK\CEM\MP\ID_g$ absorbs the structural rules
(this was established before for each rule schema). To see that cut
is absorbed, we proceed as in the proof of Proposition
\ref{thm:ckcemmp} where we replace every occurrence of $(\CKCEM_g)$
by the corresponding instance of $(\CK\CEM\ID_g)$. The additional
literal in the rightmost premise of $(\CK\CEM\ID_g)$ is treated in
the same way as in the proof of Proposition \ref{propn:ckid}.
\end{proof}

In summary, we obtain the following results about extensions of the
conditional logic $\CK$.
\begin{thm} Suppose that $\Log$ is a combination of $\ID$, $\MP$,
$\CEM$. Then $\Gen\Log_g \entails A$ whenever
$\Hilb\Log \entails A$ for all $A \in  \Form(\Lambda)$. Moreover, cut
elimination holds in $\Gen\Log$.
\end{thm}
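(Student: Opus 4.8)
The plan is to obtain this summary statement as a direct corollary of the per-system results already established, by a case distinction over the eight possible combinations $\Log$ of the axioms $\ID$, $\MP$ and $\CEM$. For each such $\Log$ the associated sequent rule set $\Log_g$ is the one listed in the corresponding row of Figure \ref{fig:conditional-calculi}, and the relevant preceding result---the (unnumbered) theorem for $\CK$, Proposition \ref{propn:ckid} for $\CK\ID$, Proposition \ref{thm:ckmp} for $\CK\MP$, Proposition \ref{propn:ckmpid} for $\CK\MP\ID$, Proposition \ref{thm:ckcem} for $\CK\CEM$, the unnumbered proposition for $\CK\CEM\ID$, Proposition \ref{thm:ckcemmp} for $\CK\CEM\MP$, and the unnumbered proposition for $\CK\CEM\MP\ID$---asserts in each case exactly the two facts I need: that $\Log_g$ is \emph{absorbing} and that $\Log_g$ is \emph{equivalent} to the Hilbert rule set $\Log$.

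Fixing one such $\Log$, I would derive the first claim in three steps. First, each $\Log_g$ is a union of the instance sets of finitely many rule schemas and is therefore substitution closed, so Theorem \ref{thm:gen-hilb-equiv} applies and gives $\Gen\Log_g \entails \Gamma \iff \Hilb\Log_g \entails \Lor\Gamma$ for every $\Gamma \in \Seq(\Lambda)$. Second, since $\Log_g$ and $\Log$ are equivalent, the equivalence lemma stated just after Definition \ref{defn:admissible} yields $\Hilb\Log_g \entails A \iff \Hilb\Log \entails A$ for every formula $A$. Third, specialising the first equivalence to the singleton sequent $\Gamma = A$, where $\Lor\Gamma = A$, and composing it with the second gives $\Hilb\Log \entails A \implies \Gen\Log_g \entails A$, which is the desired direction. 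The second claim, admissibility of the cut rule in $\Gen\Log_g$, is then immediate from Theorem \ref{thm:cut-elim}, for which absorption of $\Log_g$ is the only hypothesis.

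I do not expect any genuine obstacle at this level, since the substantive content---absorption of cut, verified rule-by-rule---has already been discharged in the individual propositions. The only points requiring care in the write-up are to confirm that the enumeration is exhaustive (the eight rows of Figure \ref{fig:conditional-calculi} range over all subsets of $\{\ID,\MP,\CEM\}$) and to record the mild observation that each $\Log_g$ is substitution closed, this being the one hypothesis of Theorem \ref{thm:gen-hilb-equiv} that is not already stated verbatim in the component results. The proof should therefore amount to a short paragraph of bookkeeping rather than a new argument.
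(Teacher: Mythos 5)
Your proposal is correct and follows essentially the same route as the paper, which likewise derives the theorem in each case from Theorem~\ref{thm:cut-elim} and Theorem~\ref{thm:gen-hilb-equiv} together with the absorption and equivalence facts established in the individual propositions. Your explicit note that each $\Log_g$ is substitution closed (being defined as the set of all instances of finitely many schemas) fills in the one hypothesis the paper leaves implicit, but this is bookkeeping rather than a departure in method.
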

\noindent
The theorem follows, in each of the cases, from Theorem
\ref{thm:cut-elim} and Theorem \ref{thm:gen-hilb-equiv} together
with the fact that the
rule set $\Log$ and $\Log_g$ are equivalent and the latter is absorbing.

\section{Complexity of Proof Search}

It is comparatively straightforward to extract complexity bounds for
provability of the logics considered above by analysing the complexity
of proof search under suitable strategies in the cut-free sequent
systems obtained. Clearly, in those cases where all modal rules peel
off exactly one layer of modal operators, the depth of proofs is
polynomial in the nesting depth of modal operators in the target
formula, and therefore, proof search is in PSPACE under mild
assumptions on the branching width of
proofs~\cite{SchroderPattinson09,PattinsonSchroder10}. Besides
reproving Ladner's classical result for $K$~\cite{Ladner77}, we thus
have
\begin{thm}\label{thm:CK-PSPACE}
  Provability in $\CK$ and $\CKID$ is in PSPACE.
\end{thm}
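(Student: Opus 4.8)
The plan is to read the bound off the cut-free calculi $\Gen\CK_g$ and $\Gen\CKID_g$ directly, exploiting two facts already established: these calculi enjoy cut elimination (hence the subformula property), and each modal rule strips off exactly one layer of the conditional $\cto$. Concretely, I would decide provability of a target formula $A$ by a backward proof search organised as an \emph{alternating} procedure, and then invoke $\mathrm{APTIME}=\mathrm{PSPACE}$; equivalently, one may run the search depth-first, storing a single root-to-leaf branch at a time, and appeal to Savitch's theorem to absorb the nondeterminism ($\mathrm{NPSPACE}=\mathrm{PSPACE}$).

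The search alternates two phases. First I would saturate a sequent under the propositional rules $(\neg\land),(\land),(\neg)$; since inversion is admissible (Proposition \ref{propn:struct-admissible}) these rules are invertible, so saturation may be carried out deterministically until every branch is a sequent built only from propositional literals and conditional literals $A\cto B$, $\neg(A\cto B)$. Such a saturated sequent is provable iff it is an axiom, or some \emph{single} backward application of $(\CK_g)$ (respectively $(\CKID_g)$) succeeds. The latter amounts to an existential guess of a principal positive literal $(A_0\cto B_0)$ together with a set of negative literals $\neg(A_1\cto B_1),\dots,\neg(A_n\cto B_n)$ occurring in the sequent, followed by a universal branching over the premises, namely the equivalence sequents $A_0=\dots=A_n$ of Notation \ref{notn:eq} and the consequent sequent $\neg B_1,\dots,\neg B_n,B_0$. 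This existential/universal alternation is exactly the OR/AND structure of proof search.

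Two bounds then yield PSPACE. For \emph{depth}: every premise of $(\CK_g)$ and $(\CKID_g)$ contains only arguments of the matched conditionals and drops the context $\Gamma$ entirely, so its modal nesting depth is strictly smaller than that of the conclusion; hence the number of modal phases on any branch is at most the modal depth of $A$, and each intervening propositional saturation terminates after at most $\size(A)$ steps, giving branch length polynomial in $\size(A)$. For \emph{space per branch}: by the subformula property together with admissibility of weakening and contraction (Proposition \ref{propn:struct-admissible}), each sequent on a branch may be represented as a \emph{set} of subformulas of $A$ and their negations, so its size is linear in $\size(A)$. Storing one branch therefore costs polynomial space, and the whole alternating computation runs in polynomial time, whence provability is in PSPACE. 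The case of $\CKID$ is identical, using $(\CKID_g)$ and Proposition \ref{propn:ckid} in place of their $\CK$ counterparts; the same template specialised to $(\K)$ reproves Ladner's bound for $K$ \cite{Ladner77}.

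The step needing the most care is the polynomial depth bound, i.e.\ verifying that backward modal steps genuinely decrease modal nesting depth and that the saturated rules never reintroduce $\cto$-nesting — this is precisely the ``peels off one layer'' hypothesis flagged before the theorem, which $(\CK_g)$ and $(\CKID_g)$ satisfy by inspection. The remaining subtlety, guessing which negative conditional literals to feed into a modal step, is harmless: the guess ranges over subsets of the polynomially many conditional literals in the current sequent and is absorbed into the existential branching, so it affects neither the depth nor the per-branch space. The branching width of proofs is thus polynomially bounded, matching the mild assumptions of the generic PSPACE results of \cite{SchroderPattinson09,PattinsonSchroder10}.
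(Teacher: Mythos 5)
Your proposal is correct and follows essentially the same route as the paper, which likewise derives the bound from cut elimination plus the observation that $(\CK_g)$ and $(\CKID_g)$ peel off exactly one layer of $\cto$ (dropping the context $\Gamma$), so that proof depth is polynomial in the modal nesting depth and branch-at-a-time backward search runs in polynomial space; the paper merely states this at a high level and delegates the details to the generic PSPACE arguments of \cite{SchroderPattinson09,PattinsonSchroder10}. Your elaboration via propositional saturation, the alternating OR/AND reading of rule application, and the set representation of sequents is a faithful and sound instantiation of that argument.
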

This reproves known complexity bounds originally shown
in~\cite{OlivettiEA07} (alternative short proofs using coalgebraic
semantics are given in~\cite{SchroderPattinson08d}).  For $\CKCEM$,
the bound can be improved to coNP using dynamic programming in the
same style as in~\cite{Vardi89}. This concept has to be handled
carefully when dealing with coNP bounds, however, as in
nondeterministic programs we cannot actually pretend that during the
execution of stage $n$ we have the results of the stages up to $n-1$
stored in memory --- otherwise, we could, e.g., just negate these
results and arrive at proving NP=coNP. Rather, dynamic programming
should be regarded as a metaphor for merging identical computations on
a non-deterministic machine; in particular, we need to take care to
use results of previous stages only \emph{positively} (as done
in~\cite{Vardi89}).

The point in our decision procedure where these considerations become
relevant is that we will wish to apply rule $(\CK\CEM_g)$
deterministically to subsequents that are as large as possible; i.e.\
we are interested in collecting \emph{maximal} sets of conditional
literals with provably equivalent antecedents, where the latter
equivalences are supposed to have been computed in previous
stages. Here, the maximality condition carries the danger of negative
use of previous results. The solution to this problem lies in the
following key lemma.
\begin{lem}\label{lem:cem-provable}
  Let $\Gamma=(A_0\cto B_0),\dots,(A_j\cto B_j),A_{j+1}\cto
  B_{j+1},\dots,A_n\cto B_n$ be a sequent consisting of conditional
  literals. Then $\Gamma$ is provable in $\CKCEM_g$ iff for every
  decomposition of $\{0,\dots,n\}$ into disjoint sets $I_0,\dots,I_k$
  ($k\ge 0$), \emph{one of} the following conditions holds.
  \begin{enumerate}[\em(1)]
  \item There exists $l$ such that $\{B_i\mid i\in I_l, 0\le i\le
    j\}\cup\{\neg B_i\mid i\in I_l,j+1\le i\le n\}$ is provable.
  \item There exist $l\neq r$ and $i\in I_l,p\in I_r$ such that
    $A_i=A_p$ is provable.
\end{enumerate}
\end{lem}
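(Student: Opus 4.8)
The plan is to prove both implications directly from the shape of cut-free $\Gen\CKCEM_g$-proofs, exploiting that every formula of $\Gamma$ is a conditional literal. Throughout I abbreviate by $C(I)=\{B_i\mid i\in I,\,0\le i\le j\}\cup\{\neg B_i\mid i\in I,\,j+1\le i\le n\}$ the \emph{consequent sequent} of a block $I\subseteq\{0,\dots,n\}$. Since $\CKCEM_g$ is absorbing (Proposition~\ref{thm:ckcem}), Theorem~\ref{thm:cut-elim} and Proposition~\ref{propn:struct-admissible} make cut and weakening admissible in $\Gen\CKCEM_g$; in particular the relation ``$A_i=A_p$ is provable'' is an equivalence relation on the indices, transitivity being a cut on the shared antecedent (recall Notation~\ref{notn:eq}).

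For the direction from left to right I first observe that if $\Gamma$ is provable then the last step of a cut-free proof is either an axiom or an instance of $(\CKCEM_g)$: no propositional rule can apply, since the principal connective of each literal is $\cto$, or a single $\neg$ in front of $\cto$, never $\land$, $\neg\land$, or $\neg\neg$. In the rule case the principal formulas single out a nonempty $S\subseteq\{0,\dots,n\}$ (with positive members among $0,\dots,j$ and negative members among $j{+}1,\dots,n$) whose antecedents are pairwise provably equivalent and whose consequent sequent $C(S)$ is provable. Now fix any decomposition $I_0,\dots,I_k$. If $S\subseteq I_l$ for some $l$, then $C(S)\subseteq C(I_l)$, so $C(I_l)$ is provable by admissibility of weakening and clause~(1) holds; otherwise $S$ meets two distinct blocks $I_l,I_r$ in indices $i,p$, and since $A_i=A_p$ is provable, clause~(2) holds. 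The axiom case is subsumed: an axiom forces a complementary pair $(A_i\cto B_i)$, $\neg(A_i\cto B_i)$ in $\Gamma$, i.e.\ $i\le j<p$ with $A_i=A_p$ and $B_i=B_p$ syntactically, and the same dichotomy applies, with the ``same block'' alternative now making $C(I_l)$ contain $B_i,\neg B_i$ and hence an axiom.

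For the converse I instantiate the universally quantified hypothesis at \emph{one} carefully chosen decomposition for which clause~(2) is guaranteed to fail: the partition $I_0,\dots,I_k$ of $\{0,\dots,n\}$ into the equivalence classes of the relation ``$A_i=A_p$ is provable''. By construction indices drawn from distinct classes have provably inequivalent antecedents, so~(2) cannot hold for this partition, and the hypothesis therefore forces clause~(1), producing a block $I_l$ with $C(I_l)$ provable. I then apply $(\CKCEM_g)$ with the literals indexed by $I_l$ as principal formulas and the remaining literals of $\Gamma$ as the context: the equivalence premise $A_0=\dots=A_m$ ranging over $I_l$ is provable because $I_l$ is an equivalence class, and the right premise $C(I_l)$ is provable by~(1); the conclusion is exactly $\Gamma$, so $\Gamma$ is derivable.

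The steps are individually short, and the only point requiring genuine ingenuity is the \emph{choice} of partition in the converse. The crux is recognising that the quantifier over all decompositions need only be tested at the provable-equivalence partition, where clause~(2) is vacuously excluded, so that the combinatorial condition collapses to precisely the two premises of a single $(\CKCEM_g)$-application. The remaining bookkeeping --- transitivity of provable antecedent-equivalence, lifting $C(S)$ to $C(I_l)$ by weakening, and matching principal formulas against the literals of $\Gamma$ up to multiset reordering --- is routine given the absorption properties already established.
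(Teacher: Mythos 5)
Your proof is correct and follows essentially the same route as the paper's: the only-if direction analyses the last step of a cut-free proof (necessarily an instance of $(\CK\CEM_g)$, since no propositional rule applies to conditional literals) and tests its principal index set against the given decomposition, while the if direction instantiates the hypothesis at the partition into equivalence classes of provable antecedent-equivalence, where clause~(2) is excluded by construction. Your explicit treatment of the axiom case and of weakening/cut admissibility only makes precise details the paper leaves implicit.
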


\begin{proof}
  \emph{Only if:} Since $\Gamma$ consists of conditional literals, any
  proof of $\Gamma$ must end in an application of rule $\CK\CEM_g$. Thus,
  there exists $I\subseteq\{0,\dots,n\}$ such that $A_i=A_p$ is
  provable for all $i,p\in I$ and $\{B_i\mid i\in I,0\le i\le
  j\}\cup\{\neg B_i\mid i\in I,j+1\le i\le n\}$ is provable. Now let
  $I_o,\dots, I_k$ be as in the statement. Then we have the following
  two cases:
  \begin{enumerate}[(1)]
  \item There exists $l$ such that $I\subseteq I_l$. In this case, the
    first alternative of the claim holds.
  \item We have $i,p\in I$ and $r\neq l$ such that $i\in I_r$, $p\in
    I_l$. In this case, the second alternative of the claim holds.
  \end{enumerate}

  \noindent \emph{If:} Define an equivalence relation on $\{0,\dots,n\}$ by
  taking $i$ and $p$ to be equivalent if $A_i=A_p$ is provable, and
  let $I_1,\dots,I_k$ be the induced disjoint decomposition of
  $\{0,\dots,n\}$ into equivalence classes. By construction, this
  decomposition does not satisfy the second alternative of the claim,
  hence it satisfies the first, which implies that $\Gamma$ is
  provable by applying rule $\CK\CEM_g$.
\end{proof}
\noindent This lemma now enables us to prove the announced coNP upper
bound:
\begin{thm}\label{thm:ckcem-complexity}
  Provability in $\CKCEM$ and in $\CK\ID\CEM$ is in coNP.
\end{thm}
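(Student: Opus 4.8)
The plan is to reduce provability to proof search in the cut-free calculus $\Gen\CKCEM_g$ and then organise that search as a dynamic-programming computation over modal rank in the spirit of~\cite{Vardi89}, using Lemma~\ref{lem:cem-provable} to eliminate the branching in the modal step. By Proposition~\ref{thm:ckcem} (absorption of $\CKCEM_g$) together with Theorem~\ref{thm:cut-elim} and the equivalence of $\Gen\CKCEM_g$ with $\Hilb\CKCEM$, we have $\Hilb\CKCEM\entails A$ iff $\Gen\CKCEM_g\entails A$, so it suffices to analyse cut-free proofs. Since the propositional rules are invertible (Proposition~\ref{propn:struct-admissible}), I may assume that each modal step is preceded by a deterministic propositional saturation, so that the only genuinely non-deterministic step is an application of $\CKCEM_g$ to a sequent of conditional and propositional literals; as each such step strictly decreases the nesting depth of $\cto$, the number of nested modal steps along any branch is bounded by the modal rank $d$ of $A$, which is polynomial in $|A|$.

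The decisive observation is that Lemma~\ref{lem:cem-provable} already expresses provability of a conditional-literal sequent as a \emph{universal} statement: the sequent is provable iff \emph{every} decomposition of its index set satisfies condition (1) or (2). Both alternatives are positive provability facts of strictly smaller modal rank, namely provability of a consequent combination and provability of an antecedent equivalence $A_i\otto A_p$. This replaces the exponential existential search over which subset of literals to feed into $\CKCEM_g$ by a single universal quantifier over polynomially-sized decompositions, and it is precisely the step that improves on the PSPACE strategy used for $\CK$. Equivalently, the ``if'' direction lets me work with the canonical partition of the antecedents into classes of provably equivalent formulas: recognising \emph{more} equivalences only coarsens this partition and enlarges the consequent combinations appearing in (1), hence can only help provability. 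This monotonicity is what will make all uses of lower-rank results positive.

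From here I would build a coNP procedure for provability by universally guessing, at the top level, a decomposition and verifying that (1) or (2) holds, recursing into the resulting lower-rank provability questions. Nesting this recursion naively over the $d$ modal ranks would produce a polynomial alternation and thus only reprove the PSPACE bound; the coNP bound comes from \emph{merging} identical subcomputations across branches, i.e.\ from computing the provable antecedent equivalences and provable consequent combinations at a given rank once and reusing them at the next rank, as in~\cite{Vardi89}. The caveat flagged before Lemma~\ref{lem:cem-provable} is that this reuse must be positive: the machine may guess the set of equivalences and combinations that \emph{are} provable and verify each one with its own certificate, but it must never rely on a fact being \emph{un}provable, as that would smuggle a coNP fact onto an NP branch. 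The canonical-partition reformulation guarantees exactly this, since only genuinely provable equivalences are ever asserted, and guessing all of them is both sound and complete.

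The main obstacle is the bookkeeping that keeps the merged computation inside coNP rather than PSPACE: I must bound, at each rank, the number of distinct provability facts consulted at the next rank and argue that the whole universal/existential structure collapses to a single co-nondeterministic phase followed by polynomial-time verification. The leverage here is that the antecedents of the conditional subformulas of $A$ fall into only polynomially many equivalence classes, and that each consequent combination is a subsequent of the polynomially many consequent subformulas; making this counting precise, so that the positive facts threaded between ranks remain polynomially many, is the crux of the argument. Finally, the extension to $\CK\ID\CEM$ needs no new idea: the rule $\CKCEM\ID_g$ differs from $\CKCEM_g$ only by the extra antecedent literal $\neg A_0$ in the rightmost premise, so the analogue of Lemma~\ref{lem:cem-provable} simply carries an additional disjunct $\neg A_i$ into condition (1), and the same dynamic-programming argument applies verbatim.
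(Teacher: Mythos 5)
Your overall strategy is the paper's: reduce to cut-free proof search in $\Gen\CKCEM_g$, normalise proofs so that the modal rule applies only to sequents of conditional literals, and use Lemma~\ref{lem:cem-provable} together with Vardi-style merging of subcomputations. (The paper works dually, giving an NP procedure for \emph{unprovability}; this makes the ``positive use'' discipline cleaner, since the only facts threaded between stages are unprovability facts, each certified existentially by the same machine, and no provability certificates --- which you invoke but never bound --- are ever needed.)

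There is, however, a genuine gap exactly where you place the ``crux'', and the resolution you sketch does not work. You propose to keep the recursion polynomial by memoising, at each rank, the provable antecedent equivalences \emph{and} the provable consequent combinations, arguing that the latter are subsequents of the polynomially many consequent subformulas. The antecedent equivalences are indeed only quadratically many (pairs $A,\neg B$ of subformulas of the input), and these are the only sequents the paper shares across stages. But the consequent combinations $\Delta_l=\{B_i\mid i\in I_l,\,0\le i\le j\}\cup\{\neg B_i\mid i\in I_l,\,j+1\le i\le n\}$ are arbitrary subsets of the consequents, determined by the guessed decomposition, and there are exponentially many such subsets; they cannot all be precomputed and reused at the next rank. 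The paper closes this gap differently: the recursive checks for Condition~(1) are not memoised at all but bounded directly, using that $I_0,\dots,I_k$ is a \emph{disjoint} decomposition, so that $\size(\Delta)\ge\sum_{l}\size(\Delta_l)$; the breadth $f$ of the explored part of the proof tree then satisfies $f(\Delta)=\sum_l f(\Delta_l)$, hence is at most linear in $\size(\Delta)$, and the explored tree is at most quadratic, keeping each nondeterministic computation path polynomial. Without this size estimate (or an equivalent one) your argument only reproves the PSPACE bound you set out to improve; with it, the rest of your outline, including the verbatim extension to $\CK\ID\CEM$, goes through.
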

\begin{proof}
  Since some aspects of our algorithm are more easily understood in NP
  style, we prove that \emph{unprovability} of a sequent $\Gamma$ can
  be decided in NP. We use dynamic programming as in~\cite{Vardi89}:
  we proceed in stages; at stage $i$, we decide unprovability of all
  sequents of the form $A,\neg B$ where $A$ and $B$ are subformulas of
  $\Gamma$ with nesting depth of conditionals at most $i$. We perform
  such stages up to $i=m-1$, where $m$ is the maximal nesting depth of
  conditionals in $\Gamma$. In a further, final stage, we then check
  unprovability of $\Gamma$. As there are at most linearly many
  stages, it suffices to show that each stage can be performed in NP,
  and since there are at most quadratically many candidate sequents in
  each stage, it suffices that unprovability of a single candidate
  sequent can be checked in NP at each stage.

  To this end, observe that proofs may generally be normalised to
  proceed as follows: first apply the propositional rules as long as
  possible, thus decomposing target sequents into sequents over
  \emph{conditional literals}, i.e.\ literals of the form $A\cto B$ or
  $\neg(A\cto B)$, in the various branches of the proof, and only
  apply $(\CK\CEM_g)$ when no more propositional rules are applicable
  (since all propositional rules monotonically increase the set of
  conditional literals when moving from the conclusion to the
  premises, it is clear that their -- backwards -- application never
  obstructs a possible application of $(\CK\CEM_g)$). The existential
  branching that arises from the conjunction rule ($A\land B,\Delta$
  is unprovable if either $A,\Delta$ or $B,\Delta$ is unprovable) is
  handled non-deterministically. It is clear that one can apply only
  linearly many propositional rules in any given branch of the
  computation.

  The application of rule $(\CK\CEM_g)$ after exhaustion of the
  propositional rules is handled according to
  Lemma~\ref{lem:cem-provable}: to check that a sequent of the form
  $\Delta=A_0\cto B_0,\dots,A_j\cto B_j,\neg(A_{j+1}\cto
  B_{j+1}),\dots,\neg(A_n\cto B_n)$ is unprovable, we guess a disjoint
  decomposition $I_1,\dots,I_k$ of $\{0,\dots,n\}$ and check that it
  violates Conditions (1) and (2) in the Lemma; the negation of these
  conditions introduces universal quantifiers over polynomial-sized
  ranges, which we check deterministically. Here, checking violation
  of Condition (2) amounts to using unprovability of quadratically
  many sequents checked in previous stages, which from the perspective
  of the present stage can be done in polynomial time. Checking
  violation of Condition (1) is more problematic, as it involves a
  recursive check of unprovability of sequents $\Delta_l=\{B_i\mid
  i\in I_l, 0\le i\le j\}\cup\{\neg B_i\mid i\in I_l,j+1\le i\le n\}$
  for all $1\le l\le k$; specifically, we have to ensure that this
  recursion, whose depth is limited by the nesting depth of
  conditionals in $\Gamma$, does not lead to exponentially long
  computation paths.

  To this end, we note that the breadth of the part of the proof
  tree that we explore in one computation is given by a function
  $f(\Delta)$ that obeys a recursive equation of the form
  \begin{equation*}
    f(\Delta)=\sum_{l=0}^k f(\Delta_l)
  \end{equation*}
  where
  \begin{equation}\label{eq:breadth}
    \size(\Delta)\ge\sum_{l=0}^k\size(\Delta_l)
  \end{equation}
  because $I_0,\dots,I_k$ is a disjoint decomposition of
  $\{0,\dots,n\}$. It follows easily that $f(\Delta)$ is at most
  linear in $\Delta$, and hence the overall size of the part of the
  proof tree explored in one computation is at most quadratic. This
  finishes the proof for the case of $\CK\CEM$. The proof for
  $\CK\ID\CEM$ is entirely analogous, noting that although the main
  premise in rule $(\CK\ID\CEM_g)$ is by one literal $\neg A_0$ larger
  that in the case of $(\CK\CEM_g)$, estimate~(\ref{eq:breadth})
  remains true.
\end{proof}

\noindent
More interesting are those cases where some of the modal operators
from the conclusion remain in the premise, such as $\T$, $\K4$,
$\CKMP$, and $\CKMPCEM$ (where the difference between non-iterative
logics, i.e.\ ones whose Hilbert-axiomatisation does not use nested
modalities, such as $\T$, $\CKMP$, and $\CKMPCEM$, and iterative
logics such as $\K4$ is surprisingly hard to spot in the sequent
presentation). For $\K4$, the standard approach is to consider proofs
of minimal depth, which therefore never attempt to prove a sequent
repeatedly, and analyse the maximal depth that a branch of a proof can
have without repeating a sequent. For $\T$, a different strategy is
used, where the $(T)$ rule is limited to be applied at most once to
every formula of the form $\neg\Box A$ in between two applications of
$(K)$~\cite{HeuerdingEA96}. A similar strategy works for the
conditional logics $\CKMP$ and $\CKMPCEM$, which we explain in some
additional detail for $\CKMP$.

We let $\CKMP_g^0$ and $\CKMP^1_g$ denote restricted sequent systems,
defined as follows.
\begin{enumerate}[$\bullet$]
\item In $\CKMP_g^0$, a formula $\neg(A\cto B)$ is \emph{marked} on a branch
  as soon as the rule $(\MP_g)$ has been applied to it (backwards) and
  unmarked only at the next application of rule $(\CK_g)$. Rule
  $(\MP_g)$ applies only to unmarked formulas.
\item In $\CKMP_g^1$, we instead impose a similar restriction where
  rule $(\MP_g)$ applies to a sequent $\neg(A\cto B),\Gamma$ only in
  case $\Gamma$ does not contain a propositional descendant of either
  $A$ or $\neg B$. Here, a sequent $\Delta$ is called a
  \emph{propositional descendant} of a formula $A$ if it can be
  generated from $A$ by applying propositional sequent rules
  backwards. Formally, the set $\mathcal{D}$ of propositional
  descendants of $A$ is the closure of $\{A\}$ under the inversion
  rules.
  (E.g.\ the propositional descendants of $(\neg(A\land B)\land C)$
  are $(\neg(A\land B)\land C)$; $\neg(A\land B)$; $C$; and $\neg A,\neg B$.)
\end{enumerate}
Our goal is to show that $\CKMP_g$, $\CKMP^0_g$, and $\CKMP^1_g$ prove
the same sequents. Here, two inclusions are easy to show:
\begin{lem}
  Every sequent that is provable in $\CKMP^1_g$ is provable in
  $\CKMP^0_g$, and every sequent that is provable in $\CKMP^0_g$ is
  provable in $\CKMP_g$.
\end{lem}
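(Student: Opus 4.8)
The plan is to treat the two inclusions separately, the second being essentially trivial and the first carrying the real content. For the inclusion ``$\CKMP^0_g$-provable implies $\CKMP_g$-provable'' I would observe that $\CKMP^0_g$ uses exactly the rules $(\CK_g)$ and $(\MP_g)$ of $\CKMP_g$, the only difference being the marking discipline that restricts the applicability of $(\MP_g)$. Hence, erasing all marks from a $\CKMP^0_g$-proof yields a sequence of genuine $(\CK_g)$- and $(\MP_g)$-applications, i.e.\ a $\CKMP_g$-proof of the same sequent. No further work is needed here.

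For the substantive inclusion ``$\CKMP^1_g$-provable implies $\CKMP^0_g$-provable'' I would show that every $\CKMP^1_g$-proof already satisfies the marking discipline of $\CKMP^0_g$, so that it can be read directly as a $\CKMP^0_g$-proof. Concretely, it suffices to prove that in a $\CKMP^1_g$-proof no branch applies $(\MP_g)$ twice to the same conditional literal $\neg(A\cto B)$ (same $A$, same $B$) without an application of $(\CK_g)$ in between, since this is exactly what the marking of $\neg(A\cto B)$ forbids.

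The key tool is a persistence invariant. When $(\MP_g)$ is applied backwards to $\neg(A\cto B),\Gamma$, its left premise introduces $A$ and its right premise introduces $\neg B$ into the context. I would prove, by induction along a branch, that from this point up to (but excluding) the next application of $(\CK_g)$, the current sequent always contains a propositional descendant of $A$ on the branch through the left premise, respectively of $\neg B$ on the branch through the right premise. The induction step only has to inspect the rules that can occur in this region, namely the propositional rules and further applications of $(\MP_g)$: a propositional rule either leaves the tracked descendant untouched or, being an inversion rule, replaces it by strictly smaller descendants (and every inversion rule produces at least one formula in its premise), while $(\MP_g)$ retains all conditional literals and merely adds formulas to the context. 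In particular, no rule in this region can delete the tracked descendant; only $(\CK_g)$ consumes conditional literals, and it is excluded by assumption.

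Granting the invariant, the conclusion is immediate: if some branch of a $\CKMP^1_g$-proof applied $(\MP_g)$ to $\neg(A\cto B)$ a second time without an intervening $(\CK_g)$, then at that second application the context would already contain a propositional descendant of $A$ or of $\neg B$, contradicting the side condition governing $(\MP_g)$ in $\CKMP^1_g$. Hence no such repetition occurs and the proof respects the $\CKMP^0_g$ marking, which together with the trivial second inclusion establishes the lemma. The main obstacle is the invariant itself: one must check carefully that the descendant introduced by the first $(\MP_g)$ application genuinely survives every intervening propositional and $(\MP_g)$ step, which hinges on the facts that inversion rules never erase a formula outright and that $(\MP_g)$ preserves its principal conditional literal.
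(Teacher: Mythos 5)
Your proposal is correct and follows essentially the same route as the paper: the second inclusion is immediate since $\CKMP^0_g$ only restricts $\CKMP_g$, and the first rests on the observation that a marked occurrence of $\neg(A\cto B)$ always comes with a propositional descendant of $A$ or $\neg B$ in the context, so a $\CKMP^1_g$-proof automatically respects the marking discipline. The paper merely asserts this persistence fact, whereas you supply the branch-wise induction establishing it; that is a welcome elaboration, not a different argument.
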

\begin{proof}
  The second implication is trivial, since $\CKMP^0_g$ explicitly
  restricts $\CKMP_g$. The first implication follows from the fact
  that whenever an occurrence of a formula $\neg (A\cto B)$ is marked
  in a sequent $\neg(A\cto B),\Gamma$ in a $\CKMP^0_g$ proof, then
  $\Gamma$ must contain a propositional descendant of either $A$ or
  $\neg B$.
\end{proof}
\noindent Next, we observe:
\begin{lem}\label{lem:ckmp1-struct}
  The system $\CKMP_g^1$ admits inversion.
\end{lem}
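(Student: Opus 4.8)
The plan is to prove the lemma by induction on the height $h$ of the $\CKMP^1_g$-derivation, establishing admissibility of all four inversion rules simultaneously. To make the induction go through I would strengthen the statement to a \emph{simultaneous} inversion principle: if $\Sigma$ has a $\CKMP^1_g$-proof of height $\le h$, then every sequent obtained from $\Sigma$ by replacing some of its formula occurrences by (arbitrary) propositional descendants is again provable in $\CKMP^1_g$, where I treat $A$ itself as lying in $\mathcal{D}(A)$ so that ``leave an occurrence untouched'' is a special case. Ordinary single-step inversion is then the instance in which exactly one occurrence is replaced by its immediate decomposition. I would also rely, as a companion fact established by an analogous but easier induction, on the admissibility of contraction in $\CKMP^1_g$: passing from $\Delta,D,D$ to $\Delta,D$ only shrinks the collection of sub-sequents, so it can never create a new propositional descendant of an antecedent or consequent, and hence never invalidates the side condition on $(\MP_g)$.

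The routine cases are as follows. For the axioms $A,\neg A,\Gamma$ and $\neg\bot,\Gamma$ one checks by a sub-induction on the size of the chosen decomposition that the result is again provable, pushing the decomposition back together with the propositional rules $(\neg\land)$, $(\land)$, $(\neg)$. For a propositional last rule one permutes the decomposition to the premise(s) via the induction hypothesis and reapplies the rule; if the decomposed occurrence is the principal formula, then every descendant of it is a descendant of a subformula already present in a premise, so the case reduces directly to that premise. For $(\CK_g)$ the key observation is that its premises $A_0=\dots=A_n$ and $\neg B_1,\dots,\neg B_n,B_0$ do not mention the context $\Gamma$, and the conditional literals in the conclusion are not propositionally decomposable; hence a decomposition affects only $\Gamma$, and one simply reapplies $(\CK_g)$ with the decomposed context.

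The interesting case, and the main obstacle, is $(\MP_g)$, with conclusion $\neg(A\cto B),\Delta$ and premises $P_1=A,\neg(A\cto B),\Delta$ and $P_2=\neg B,\neg(A\cto B),\Delta$ of smaller height. Since $\neg(A\cto B)$ is not propositionally decomposable, a decomposition acts only on $\Delta$, yielding $\Delta^*$. If $\Delta^*$ still contains no propositional descendant of $A$ or of $\neg B$, I apply the induction hypothesis to $P_1$ and $P_2$ (decomposing $\Delta$ into $\Delta^*$) and reapply the \emph{restricted} rule $(\MP_g)$, which is legal precisely because the side condition is preserved. The difficulty is that inversion \emph{produces} propositional descendants, so $\Delta^*$ may contain some $\Xi\in\mathcal{D}(A)$ even though $\Delta$ did not --- e.g.\ for $A=p$ a context formula $p\land r$ inverts to $p$, and $\{p\}\in\mathcal{D}(p)$ --- which blocks the naive reapplication of $(\MP_g)$. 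I resolve this by exploiting that the obstruction is, by construction, a propositional descendant of the very formula $A$ that is displayed in $P_1$: applying the strengthened induction hypothesis to $P_1$ (height $<h$) I simultaneously decompose $\Delta$ into $\Delta^*$ and the displayed occurrence of $A$ down to $\Xi$, obtaining $\Xi,\neg(A\cto B),\Delta^*$; since $\Xi\subseteq\Delta^*$, contraction yields $\neg(A\cto B),\Delta^*$. The symmetric situation, where $\Delta^*$ acquires a descendant of $\neg B$, is handled identically using $P_2$.

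I expect the packaging into a single simultaneous-inversion hypothesis to be exactly what saves the argument, and this is where I anticipate the real work. A bare height induction would first invert the context in $P_1$ and only then invert the displayed $A$ down to $\Xi$ in the resulting sequent, but that second batch of inversions would be applied to a derivation whose height is no longer controlled by $h$ --- inversion is not depth-preserving here, cf.\ Remark~\ref{rem:congruence} --- which breaks well-foundedness. Performing both decompositions in a single appeal to the induction hypothesis on the lower premise avoids this. The conceptual crux is that the side condition on $(\MP_g)$ is phrased in terms of \emph{propositional descendants}, which is precisely what guarantees that the obstructing $\Xi$ is a descendant of $A$ and therefore available for decomposition inside $P_1$; had the restriction been stated via, say, arbitrary subformulas, this alignment would fail.
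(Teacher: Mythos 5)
Your proof follows the same overall skeleton as the paper's (copy the standard height induction for inversion admissibility in $\CKMP_g$ and check that each case survives the side condition of $\CKMP_g^1$), but it diverges exactly where the paper's argument is thinnest, and your extra machinery is doing real work there. The paper disposes of the $(\MP_g)$ case with the remark that ``the conclusion of instances of inversion never introduces additional propositional descendants''; under the natural reading of ``$\Gamma$ contains a descendant of $A$'' as ``some sub-multiset of $\Gamma$ lies in $\mathcal{D}(A)$'', this is not literally true. Your example ($A=p$, a context formula $p\land r$ inverting to $p$, with $\{p\}\in\mathcal{D}(p)$ while no sub-multiset of $\{p\land r\}$ lies in $\mathcal{D}(p)$) is a genuine counterexample, so the naive reapplication of the restricted $(\MP_g)$ can indeed be blocked. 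Your repair --- strengthening the induction hypothesis to simultaneous replacement of occurrences by arbitrary descendants, so that the displayed $A$ in the premise $P_1$ can be driven down to the offending $\Xi\in\mathcal{D}(A)$ within the same appeal to the induction hypothesis at smaller height, followed by contraction --- is the right move, and your remark about why a bare single-step induction would lose control of the proof height is also correct. What the paper's one-line proof buys is brevity; what yours buys is an argument that actually closes this case.

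The one loose end is your ``companion fact'' that contraction is admissible in $\CKMP_g^1$. You verify only that contraction preserves the side condition on $(\MP_g)$ (correct: every sub-multiset of $\Delta,D$ is a sub-multiset of $\Delta,D,D$), but the standard inductive proof of contraction admissibility itself invokes inversion in the case where a contracted occurrence is principal in a propositional rule, while your proof of inversion invokes contraction --- so the two cannot be established by separate inductions in either order. The clean fix is to fold contraction into your strengthened statement: prove by a single induction on height that from a proof of $\Sigma$ one obtains a proof of any sequent gotten by replacing occurrences by descendants \emph{and} contracting duplicated sub-multisets; then the passage from $P_1$ to $\neg(A\cto B),\Delta^*$ is one instance of the combined operation applied at smaller height, and the contraction cases of the combined induction go through for exactly the reason you already give. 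With that adjustment your argument is complete.
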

\begin{proof}
  The inductive proof for $\CKMP_g$ can just be copied due to the fact
  that absorption of inversion by $\CKMP_g$ never involves the
  introduction of additional applications of $(\MP_g)$, and the
  conclusion of instances of inversion never introduces additional
  propositional descendants (unlike, e.g., in the case of weakening).
\end{proof}
\noindent This enables us to prove the missing inclusion:
\begin{lem}
  Every sequent that is provable in $\CKMP_g$ is provable in
  $\CKMP^1_g$.
\end{lem}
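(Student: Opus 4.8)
The plan is to argue by induction on the structure of a given $\CKMP_g$-proof of $\Gamma$, reconstructing a $\CKMP^1_g$-proof step by step. Every rule of $\CKMP_g$ other than $(\MP_g)$, as well as every application of $(\MP_g)$ that already satisfies the side condition of $\CKMP^1_g$, can simply be re-applied to the $\CKMP^1_g$-proofs of the premises supplied by the induction hypothesis, since $\CKMP^1_g$ contains the same axioms, propositional rules, and $(\CK_g)$ unrestricted. Thus the only real work lies in \emph{illegal} applications of $(\MP_g)$, i.e.\ applications to a sequent $\neg(A\cto B),\Gamma$ in which $\Gamma$ contains a propositional descendant of $A$ or of $\neg B$. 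The crucial point will be that such applications are redundant and can be removed using only structural reasoning that stays inside $\CKMP^1_g$. Here it is essential that we restrict ourselves to inversion, which is admissible by Lemma~\ref{lem:ckmp1-struct}, and to contraction, which is admissible by the same argument (absorbing contraction introduces no new applications of $(\MP_g)$ and only shrinks the context, hence never creates new propositional descendants and so preserves the side condition). \emph{Weakening}, by contrast, is not available, which is exactly why the side condition is phrased in terms of propositional descendants rather than of syntactic occurrences.

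The technical core will be the following redundancy lemma: if $\Theta$ is a sequent, $\Delta$ a propositional descendant of $\Theta$, $\Delta\subseteq\Sigma$, and $\CKMP^1_g\entails\Theta,\Sigma$, then $\CKMP^1_g\entails\Sigma$. I would prove this by induction on the number of inversion steps in a fixed generating sequence for $\Delta$ from $\Theta$ (the descendant closure being finite). In the base case $\Delta=\Theta\subseteq\Sigma$, so every formula of $\Theta$ occurs twice in $\Theta,\Sigma$ and repeated admissible contraction yields $\Sigma$. In the inductive step the first inversion step passes from $\Theta$ to a sequent $\Theta_1$; applying the corresponding inversion rule (admissible by Lemma~\ref{lem:ckmp1-struct}) to the $\Theta$-part of $\Theta,\Sigma$ yields a $\CKMP^1_g$-proof of $\Theta_1,\Sigma$, and since $\Delta$ is a descendant of $\Theta_1$ reachable in one fewer step, the induction hypothesis gives $\CKMP^1_g\entails\Sigma$. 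Note that this argument uses only inversion and contraction, never weakening.

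With the lemma in hand the main induction is immediate. At an illegal application of $(\MP_g)$ with conclusion $\neg(A\cto B),\Gamma$, suppose first that $\Gamma$ contains a propositional descendant $\Delta$ of $A$. By the induction hypothesis the left premise $A,\neg(A\cto B),\Gamma$ is provable in $\CKMP^1_g$; applying the redundancy lemma with $\Theta=\{A\}$ and $\Sigma=\neg(A\cto B),\Gamma$ (noting $\Delta\subseteq\Gamma\subseteq\Sigma$) yields a $\CKMP^1_g$-proof of $\neg(A\cto B),\Gamma$ directly, so the offending application is simply deleted. The case in which $\Gamma$ contains a descendant of $\neg B$ is symmetric, using the right premise $\neg B,\neg(A\cto B),\Gamma$ and $\Theta=\{\neg B\}$. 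This completes the induction and hence the proof.

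I expect the main obstacle to be precisely the bookkeeping around which structural rules are available: because weakening fails in $\CKMP^1_g$, the redundancy argument must be driven entirely by inversion and contraction. The formulation of the side condition via propositional descendants is what makes this possible, since it guarantees that the extra formula $A$ (resp.\ $\neg B$) introduced by $(\MP_g)$ has already been decomposed inside $\Gamma$, so that it can be eliminated by inverting down to, and then contracting with, material already present in the context.
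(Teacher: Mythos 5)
Your proof is correct and follows essentially the same route as the paper's: illegal applications of $(\MP_g)$ are eliminated by inverting the newly introduced $A$ (resp.\ $\neg B$) down to the propositional descendant already present in $\Gamma$ and then contracting. Your redundancy lemma merely makes precise the paper's one-line appeal to ``repeated application of inversion'', spelling out the final contraction step (and its admissibility in $\CKMP^1_g$) that the paper leaves implicit.
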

\begin{proof}
  By Lemma~\ref{lem:ckmp1-struct}, it suffices to prove that we can
  replace backwards applications of $(\MP_g)$ to sequents $\neg(A\cto
  B),\Gamma$ with $\Gamma$ containing a propositional descendant of
  either $A$ or $\neg B$, with subproofs using inversion. This is
  clear: e.g.\ if $\Gamma$ contains a propositional descendant of $A$,
  then $\neg(A\to B),\Gamma$ can be proved from $\neg(A\to
  B),A,\Gamma$ alone by repeated application of inversion.
\end{proof}
\begin{cor}\label{cor:ckmp-equiv}
  The systems $\CKMP_g$ and $\CKMP_g^0$ prove the same sequents.
\end{cor}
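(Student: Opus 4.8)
The plan is to close, into a single cycle, the chain of implications furnished by the three preceding lemmas, thereby forcing all three provability predicates to coincide. The first lemma contributes two links: every sequent provable in $\CKMP^1_g$ is provable in $\CKMP^0_g$, and every sequent provable in $\CKMP^0_g$ is provable in $\CKMP_g$. The final lemma supplies the missing link, namely that every sequent provable in $\CKMP_g$ is provable in $\CKMP^1_g$. Composing the three gives
\[ \CKMP_g\text{-provable} \implies \CKMP^1_g\text{-provable} \implies \CKMP^0_g\text{-provable} \implies \CKMP_g\text{-provable}, \]
so that provability in any one of the three systems is equivalent to provability in each of the others.

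Specialising to the two systems named in the statement, I would argue as follows. If a sequent $\Gamma$ is provable in $\CKMP_g$, then by the final lemma it is provable in $\CKMP^1_g$, and hence by the first lemma provable in $\CKMP^0_g$. Conversely, the first lemma already yields directly that provability in $\CKMP^0_g$ implies provability in $\CKMP_g$. Hence $\CKMP_g$ and $\CKMP^0_g$ prove precisely the same sequents, which is the assertion of the corollary.

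At the level of the corollary itself I anticipate no genuine obstacle: the whole of the difficulty lies in the three lemmas, and in particular in the last one, whose proof relies on admissibility of inversion in $\CKMP^1_g$ (Lemma \ref{lem:ckmp1-struct}) in order to replace any backward application of $(\MP_g)$ to a sequent $\neg(A \cto B), \Gamma$ in which $\Gamma$ already contains a propositional descendant of $A$ or of $\neg B$ by an inversion-based subproof. Once those lemmas are in place, the corollary reduces to the observation that the relation ``is provable in'' is transitive around the three restricted systems, so that closing the loop simply identifies their provability predicates.
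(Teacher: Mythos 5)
Your proposal is correct and matches the paper's intent exactly: the corollary is stated without proof precisely because it follows by chaining the two inclusions of the first lemma with the final lemma's inclusion $\CKMP_g \subseteq \CKMP^1_g$, closing the cycle of provability predicates. Nothing further is needed.
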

\noindent
This determines the complexity of proof search in $\CKMP$:
\begin{cor}
  Provability in $\CKMP$ is in PSPACE.
\end{cor}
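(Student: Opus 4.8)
The plan is to reduce provability in $\CKMP$ to proof search in the restricted system $\CKMP_g^0$ and to exhibit a polynomial bound on proof depth there, after which the generic PSPACE analysis of modal sequent search applies. By Corollary~\ref{cor:ckmp-equiv} together with Proposition~\ref{thm:ckmp} and Theorem~\ref{thm:gen-hilb-equiv}, we have $\Hilb\CKMP \entails A$ iff $A$ is provable in $\CKMP_g^0$, so it suffices to decide provability in $\CKMP_g^0$ in polynomial space.

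The technical heart is a polynomial bound on the length of any branch of a $\CKMP_g^0$ proof of $A$. Since $\CKMP_g^0$ merely restricts $\CKMP_g$, its proofs inherit the subformula property from cut elimination, so every conditional literal occurring on a branch is, up to the recorded equivalences, a subformula of $A$; there are at most $\size(A)$ such literals. I would then slice the branch at its applications of $(\CK_g)$. Each premise of $(\CK_g)$ mentions only antecedents $A_i$ and consequents $B_i$ of conditionals in its conclusion, all of strictly smaller modal nesting depth, so a branch contains at most $\size(A)$ applications of $(\CK_g)$. Between two consecutive such applications the marking discipline of $\CKMP_g^0$ permits $(\MP_g)$ to fire at most once per conditional literal, hence at most $\size(A)$ times, while the interleaved propositional rules only decompose formulas already present and contribute polynomially many further steps. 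Multiplying these bounds yields a branch length polynomial in $\size(A)$.

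With the depth bound in hand I would invoke the generic PSPACE criterion for modal proof search of~\cite{SchroderPattinson09,PattinsonSchroder10}, exactly as in the proof of Theorem~\ref{thm:CK-PSPACE}. The search space is an AND/OR tree whose depth is the branch-length bound just established and whose nodes carry sequents of polynomial size, using that, modulo admissible contraction, only subformulas of $A$ occur. A depth-first traversal that stores only the current root-to-node path, enumerating the possibly exponentially many instances of each rule --- each describable in polynomial space --- one at a time, therefore runs in polynomial space.

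The main obstacle, and the sole reason the auxiliary system $\CKMP_g^0$ is introduced, is precisely this depth bound for $(\MP_g)$. Unlike $(\CK_g)$, that rule retains its principal literal $\neg(A \cto B)$ in both premises and so decreases no evident measure, whence naive backward search in $\CKMP_g$ need not terminate; this is the exact analogue, for the conditional modus ponens axiom, of the familiar nontermination of the reflexivity rule for $\T$~\cite{HeuerdingEA96}. The marking mechanism --- whose adequacy is guaranteed by Corollary~\ref{cor:ckmp-equiv} --- is what tames this behaviour, capping $(\MP_g)$ between modal transitions and thereby making the polynomial depth bound, and with it the PSPACE upper bound, go through.
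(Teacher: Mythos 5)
Your proposal is correct and follows essentially the same route as the paper: reduce to the restricted system $\CKMP_g^0$ via Corollary~\ref{cor:ckmp-equiv}, observe that the marking discipline yields a polynomial bound on proof depth, and then conclude PSPACE membership by the same proof-search analysis as in Theorem~\ref{thm:CK-PSPACE}. The paper states this in two lines, leaving the depth bound as ``easily seen''; your explicit accounting of $(\CK_g)$ applications and the once-per-literal cap on $(\MP_g)$ between them is exactly the intended justification.
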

\begin{proof}
  By Corollary~\ref{cor:ckmp-equiv}, it suffices to show that proof search
  in $\CKMP_g^0$ is in PSPACE. The latter is shown analogously to
  Theorem~\ref{thm:CK-PSPACE}, as proofs in $\CKMP_g^0$ are easily
  seen to have at most polynomial depth.
\end{proof}
\noindent The same line of reasoning applies essentially without
change to $\CK\ID\MP$, so that \emph{provability in $\CK\ID\MP$ is in
  PSPACE}.

The same approach works for logics that include $(\CEM)$, with the
only actual modification being that in the systems $\CK\MP\CEM_g^0$
and $\CK\ID\MP\CEM_g^0$, backwards application of both $(\MP_g)$ and
$(\MPEM_g)$ is restricted to unmarked formulas. Equivalence of the
restricted systems to the full systems is shown in the same manner as
for $\CKMP$. We then have, in analogy to
Theorem~\ref{thm:ckcem-complexity}
\begin{thm}\label{thm:ckmpcem-complexity}
  Provability in $\CKMPCEM$ and in $\CK\ID\MP\CEM$ is in PSPACE.
\end{thm}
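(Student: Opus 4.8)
The plan is to mirror the treatment of $\CKMP$ verbatim, now restricting \emph{both} non-modal rules $(\MP_g)$ and $(\MPEM_g)$. First I would introduce restricted systems $\CK\MP\CEM_g^0$ and $\CK\ID\MP\CEM_g^0$ in which a conditional literal is marked as soon as either $(\MP_g)$ or $(\MPEM_g)$ has been applied to it (backwards), and is unmarked only at the next application of $(\CK\CEM_g)$; both rules apply only to unmarked literals. In parallel I would introduce the auxiliary systems $\CK\MP\CEM_g^1$ and $\CK\ID\MP\CEM_g^1$ in which $(\MP_g)$ applies to $\neg(A\cto B),\Gamma$ only when $\Gamma$ contains no propositional descendant of $A$ or of $\neg B$, and $(\MPEM_g)$ applies to $(A\cto B),\Gamma$ only when $\Gamma$ contains no propositional descendant of $A$ or of $B$.

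The equivalence argument then reproduces the chain of lemmas established for $\CKMP$. The inclusions from the $1$-restricted to the $0$-restricted to the full system are immediate: the second holds by definition, and the first follows because whenever a literal is marked in a $0$-proof the surrounding context must already contain a propositional descendant of the relevant antecedent or consequent. For the reverse direction I would first establish the analogue of Lemma \ref{lem:ckmp1-struct}, namely that $\CK\MP\CEM_g^1$ admits inversion; the relevant part of the proof of Proposition \ref{thm:ckcemmp} can be copied, since absorption of inversion there never introduces new applications of $(\MP_g)$ or $(\MPEM_g)$ and inversion itself creates no new propositional descendants. Using this, a redundant backwards application of $(\MP_g)$ (resp.\ $(\MPEM_g)$) to a sequent whose context already contains a propositional descendant of $A$ or $\neg B$ (resp.\ $A$ or $B$) can be replaced by a subproof that merely applies inversion, yielding that every $\CK\MP\CEM_g$-provable sequent is $\CK\MP\CEM_g^1$-provable. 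Combining the inclusions gives that $\CK\MP\CEM_g$ and $\CK\MP\CEM_g^0$ prove the same sequents, exactly as in Corollary \ref{cor:ckmp-equiv}.

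With the equivalence in hand, the complexity bound follows by the strategy already used for $\T$ and $\CKMP$: in a $0$-restricted proof the marking discipline guarantees that between two successive applications of $(\CK\CEM_g)$ on any branch each conditional literal can trigger $(\MP_g)$ or $(\MPEM_g)$ at most once, so only linearly many such rule applications, interleaved with linearly many propositional rules, occur; since every application of $(\CK\CEM_g)$ strictly decreases the nesting depth of conditionals, the number of $(\CK\CEM_g)$ steps along a branch is bounded by that nesting depth. Hence proofs in $\CK\MP\CEM_g^0$ have polynomial depth, and the standard PSPACE proof-search argument behind Theorem \ref{thm:CK-PSPACE} applies. The identity-extended system $\CK\ID\MP\CEM$ is handled identically, the only change being the extra literal $\neg A_0$ in the main premise of $(\CK\CEM\ID_g)$, which affects neither the marking argument nor the depth estimate.

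The step I expect to be most delicate is the reverse inclusion, i.e.\ showing that the $1$-restriction loses no provability. Here one must simultaneously handle positive and negative conditional literals --- $(\MPEM_g)$ keeps $(A\cto B)$ and tests descendants of $A$ and $B$, whereas $(\MP_g)$ keeps $\neg(A\cto B)$ and tests descendants of $A$ and $\neg B$ --- and check that replacing a redundant rule application by inversion does not disturb the marking bookkeeping elsewhere in the branch. This is exactly where admissibility of inversion for $\CK\MP\CEM_g^1$ (the analogue of Lemma \ref{lem:ckmp1-struct}) is needed, and verifying that the inversion-absorption pattern of Proposition \ref{thm:ckcemmp} introduces no fresh applications of either Modus-Ponens rule is the crux of the argument.
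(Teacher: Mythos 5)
Your proposal matches the paper's own treatment: the paper likewise defines $\CK\MP\CEM_g^0$ and $\CK\ID\MP\CEM_g^0$ by restricting backwards application of both $(\MP_g)$ and $(\MPEM_g)$ to unmarked formulas, proves equivalence to the full systems exactly as for $\CKMP$ (via the descendant-based intermediate systems and admissibility of inversion), and then concludes PSPACE from the polynomial depth of restricted proofs. Your identification of the reverse inclusion and the inversion lemma as the delicate point is also consistent with where the work lies in the $\CKMP$ case.
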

\noindent We note that the complexity of $\CKMPCEM$ was explicitly
left open in~\cite{OlivettiEA07}. We also note that regrettably we
were not able to reproduce our claim from~\cite{SchroderPattinson09}
that $\CKMPCEM$ is in coNP, the problem being that the
estimate~(\ref{eq:breadth}) breaks down in the presence of $(\MP_g)$ and
$(\MP\CEM)_g$; since no better lower bound than coNP is currently
known for $\CKMPCEM$ and $\CK\ID\MP\CEM$, this means that the exact
complexity of these logics remains open.

\section{Conclusions}

\noindent We have established a generic method of cut elimination in
modal sequent system based on absorption of cut and structural rules
by sets of modal rules. We have applied this method in particular to
various conditional logics, thus obtaining cut-free unlabelled sequent
calculi that complement recently introduced labelled
calculi~\cite{OlivettiEA07}. In at least one case, the conditional
logic $\CKMPCEM$ with modus ponens and conditional excluded middle,
our calculus seems to be the first cut-free calculus in the
literature, as cut elimination for the corresponding calculus
in~\cite{OlivettiEA07} was explicitly left open. We have applied these
calculi to obtain complexity bounds on proof search in conditional
logics; in particular we have reproved known upper complexity bounds
for $\CK$, $\CKID$, $\CKMP$~\cite{OlivettiEA07} and improved the bound
for $\CKCEM$ and $\CK\ID\CEM$ from PSPACE to coNP using dynamic
programming techniques following~\cite{Vardi89}. Moreover, we have
obtained an upper bound PSPACE for $\CKMPCEM$, for which no bound has
previously been published; a strong suspicion remains, however, that
this logic is actually in coNP.  We conjecture that our general method
can also be applied to other base logics, e.g. intuitionistic
propositional logic or first-order logic; this is the subject of
further investigations.



\providecommand{\noopsort}[1]{}


\end{document}